\documentclass[journal,12pt,onecolumn,draftclsnofoot,]{IEEEtran}


\ifCLASSINFOpdf
\else
\fi

\usepackage{microtype}
\usepackage{graphicx}
\usepackage{subfigure}
\usepackage{booktabs,pifont} 

\usepackage{empheq}

\DeclarePairedDelimiter\abs{\lvert}{\rvert}
\usepackage{pgfplots}
\pgfplotsset{compat=1.12}
\usetikzlibrary{shapes,arrows}
\usetikzlibrary{positioning}
\usepackage{tikz}
\usepackage{forest}
\usetikzlibrary{trees, shadows,positioning,chains,fit,shapes,calc}
\usepackage{amsmath,bm,times}
\usetikzlibrary{calc}
\usepackage{dsfont}
\usepackage{cuted}
\usepackage{caption}

\usepackage{mathtools}
\DeclarePairedDelimiter{\ceil}{\lceil}{\rceil}
\DeclarePairedDelimiter{\floor}{\lfloor}{\rfloor}
\usepackage{amsthm}
\usepackage{comment}
\usepackage{enumitem}
\usepackage{arydshln}
\usepackage{cite}
\usepackage{multirow}
\usepackage{calc}
\usepackage{blkarray}
\usepackage{url}
\theoremstyle{definition}
\newtheorem{theorem}{Theorem}

\newtheorem{lemma}{Lemma}
\newtheorem{claim}{Claim}

\newtheorem{corollary}{Corollary}
\newtheorem{example}{Example}
\newtheorem{remark}{Remark}
\newtheorem{definition}{Definition}
\usepackage{graphicx}
\usepackage{dblfloatfix}
\usetikzlibrary{decorations.pathreplacing}
\usepackage{blindtext, graphicx, amsfonts,
	amssymb,multirow,epstopdf}
\def\BibTeX{{\rm B\kern-.05em{\sc i\kern-.025em b}\kern-.08em
    T\kern-.1667em\lower.7ex\hbox{E}\kern-.125emX}}

\makeatletter
\renewcommand*\env@matrix[1][*\c@MaxMatrixCols c]{%
  \hskip -\arraycolsep
  \let\@ifnextchar\new@ifnextchar
  \array{#1}}
\makeatother
\usepackage[linesnumbered,ruled]{algorithm2e}

\newcommand{\cmark}{\ding{51}}%
\newcommand{\xmark}{\ding{55}}

\newcommand\undermat[2]{
  \makebox[0pt][l]{$\smash{\underbrace{\phantom{%
    \begin{matrix}#2\end{matrix}}}_{\text{$#1$}}}$}#2}
\setcounter{MaxMatrixCols}{20}

\newcommand{\calB}{\mathcal{B}}
\newcommand{\calC}{\mathcal{C}}

\newcommand{\calD}{\mathcal{D}}

\newcommand{\calG}{\mathcal{G}}
\newcommand{\calH}{\mathcal{H}}
\newcommand{\calM}{\mathcal{M}}

\newcommand{\calU}{\mathcal{U}}
\newcommand{\calV}{\mathcal{V}}

\newcommand{\calN}{\mathcal{N}}
\newcommand{\calO}{\mathcal{O}}

\newcommand{\bfy}{\mathbf{y}}

\newcommand{\bfA}{\mathbf{A}}
\newcommand{\bfa}{\mathbf{a}}

\newcommand{\bfx}{\mathbf{x}}

\newcommand{\bfB}{\mathbf{B}}

\newcommand{\bfR}{\mathbf{R}}

\newcommand{\bfe}{\mathbf{e}}

\hyphenation{op-tical net-works semi-conduc-tor}

\begin{document}

\title{Distributed Matrix Computations with Low-weight Encodings}

\definecolor{mygr}{rgb}{0.6,0.4,0.0}
\definecolor{my1color}{rgb}{0.6,0.4,0.0}
\definecolor{mycolor1}{rgb}{0.00000,0.44700,0.74100}%
\definecolor{mycolor2}{rgb}{0.85000,0.32500,0.09800}%
\definecolor{mycolor3}{rgb}{0.45000,0.62500,0.19800}%
\tikzset{
 basic/.style  = {draw, text width=2cm, drop shadow, font=\sffamily, rectangle},
  root/.style   = {basic, rounded corners=2pt, thin, align=center, fill=white},
  level-2/.style = {basic, rounded corners=6pt, thin,align=center, fill=white, text width=3cm},
  level-3/.style = {basic, thin, align=center, fill=white, text width=1.8cm},
  block/.style    = {draw, thick, rectangle, minimum height = 2em, minimum width = 2em},
sum/.style      = {draw, circle, node distance = 1cm},
sum1/.style      = {draw, circle, minimum size = 1.1 cm},
input/.style    = {coordinate},
output/.style   = {coordinate},
}

\author{\normalsize \IEEEauthorblockN{Anindya Bijoy Das\IEEEauthorrefmark{1}, Aditya Ramamoorthy\IEEEauthorrefmark{2}, David J. Love\IEEEauthorrefmark{1}, Christopher G. Brinton\IEEEauthorrefmark{1}} \\
\IEEEauthorblockA{\IEEEauthorrefmark{1}School of Electrical and Computer Engineering, Purdue University, West Lafayette, IN, USA\\
\IEEEauthorrefmark{2}Department of Electrical and Computer Engineering, Iowa State University, Ames, IA, USA\\
\texttt{\normalfont das207@purdue.edu, adityar@iastate.edu, djlove@purdue.edu, cgb@purdue.edu}
}
\thanks{This work was supported in part by the National Science Foundation (NSF) under grants CNS-2146171, CCF-1910840, and CCF-2115200, as well as by the Office of Naval Research (ONR) under grants N000142212305 and N000142112472.}
}

\IEEEtitleabstractindextext{%

\begin{abstract}

Straggler nodes are well-known bottlenecks of distributed matrix computations which induce reductions in computation/communication speeds. A common strategy for mitigating such stragglers is to incorporate Reed-Solomon based MDS (maximum distance separable) codes into the framework; this can achieve resilience against an optimal number of stragglers. However, these codes assign dense linear combinations of submatrices to the worker nodes. 
When the input matrices are sparse, these approaches increase the number of non-zero entries in the encoded matrices, which in turn adversely affects the worker computation time. In this work, we develop a distributed matrix computation approach where the assigned encoded submatrices are random linear combinations of a small number of submatrices. In addition to being well suited for sparse input matrices, our approach continues to have the optimal straggler resilience in a certain range of problem parameters. Moreover, compared to recent sparse matrix computation approaches, the search for a ``good'' set of random coefficients to promote numerical stability in our method is much more computationally efficient. We show that our approach can efficiently utilize partial computations done by slower worker nodes in a heterogeneous system which can enhance the overall computation speed. 
Numerical experiments conducted through Amazon Web Services (AWS) demonstrate up to $30\%$ reduction in per worker node computation time and $100\times$ faster encoding compared to the available methods. 

\end{abstract}

\begin{IEEEkeywords}
 Distributed Computing, MDS Codes, Stragglers, Condition Number, Sparsity.
 \end{IEEEkeywords}
}

\maketitle
\IEEEdisplaynontitleabstractindextext
\IEEEpeerreviewmaketitle
\section{Introduction}
\label{sec:intro}
Matrix operations are the fundamental building blocks of data-intensive algorithms (e.g., machine learning modeling) executed on contemporary computing platforms.
The ever-increasing volumes of data generated by end users translates to high dimensional matrices for storage and processing, underscoring the potential benefits of distributed computation. The idea behind these schemes is to break down the whole matrix computation into smaller tasks and distribute them across multiple worker nodes. In these systems, it is well known that the overall job execution time can be dominated by slower (or failed) worker nodes, which are referred to as stragglers. 

Recently, a number of coding theory techniques \cite{lee2018speeding, das2019random, dutta2016short, yu2017polynomial, das2020coded, yu2020straggler, tandon2017gradient, dasunifiedtreatment,  9513242} have been proposed to mitigate the effect of stragglers for distributed matrix multiplications. 
For example, consider the computation of $\bfA^T \bfx$ (where $\bfA \in \mathbb{R}^{t \times r}$ and $\bfx \in \mathbb{R}^{t}$) across three nodes.
A popular approach \cite{lee2018speeding} would partition $\bfA$ as $\bfA = [\bfA_0 ~|~ \bfA_1]$, and assign nodes $W_0, W_1$ and $W_2$ the job of computing $\bfA^T_0 \bfx$, $\bfA^T_1 \bfx$ and $\left(\bfA_0+\bfA_1\right)^T \bfx$, respectively. 
While each worker must carry half of the overall computational load, we can recover $\bfA^T \bfx$ as soon as {\it any} two out of three workers return their results. In other words, the system is resilient to one straggler. The recovery threshold, i.e., the minimum number of worker nodes ($\tau$) that need to finish their respective jobs such that the result $\bfA^T \bfx$ can be recovered from any subset of $\tau$ worker nodes, has emerged as an important optimization metric. 

Similar to the matrix-vector case, coding theory techniques have been developed for distributed matrix-matrix multiplication, i.e., to compute $\bfA^T \bfB$, with the same goal: to minimize the recovery threshold \cite{yu2017polynomial, dutta2019optimal}.
Under the assumption of a homogeneous system where each worker can store $1/k_A$ and $1/k_B$ fraction of matrices $\bfA$ and $\bfB$, respectively, and each node is assigned a $1/k_{AB}$ fraction ($k_{AB} = k_A k_B$) of the overall load of computing $\bfA^T \bfB$, the achievable recovery threshold is lower bounded by $k_A k_B$ \cite{yu2017polynomial}. 
Note that, with the assumption $k_A = k_B$, the approach in \cite{dutta2019optimal} achieves a recovery threshold of $2 k_A - 1$ whereas the method in \cite{yu2017polynomial} provides a threshold $k_A^2$. 
However, for given matrices $\bfA \in \mathbb{R}^{t \times r}$ and $\bfB \in \mathbb{R}^{t \times w}$, 
the per worker node computational complexity of the approach in \cite{dutta2019optimal} is  $\mathcal{O} \left( \frac{rwt}{k_A} \right)$, which is around $k_A$ times higher than the corresponding computational complexity of every node for the approach in \cite{yu2017polynomial}, which is $\mathcal{O} \left( \frac{rwt}{k_A^2} \right)$. 




Several works based on maximum distance separable (MDS) codes \cite{yu2017polynomial, 8849468, 8919859, das2019random} have met this optimal recovery threshold. However, they have other limitations in practical distributed computing systems. First, real-world data matrices are often sparsely populated (e.g., see examples in \cite{sparsematrices}), leading to structures that can be exploited for computational efficiency gains. However, MDS code-based techniques are not built to preserve data sparsity which can significantly increase the overall job execution time. In addition, most of the available methods are not focused on the case of heterogeneous systems, 
where different worker nodes are rated with different storage capacities and speeds (e.g., if the computation is being distributed across wireless edge devices).

Motivated by these limitations, in this work, we develop a novel approach for distributed matrix-vector and matrix-matrix multiplication which explicitly accounts for sparsity in the input matrices. Our proposed approach assigns coded submatrices as random linear combinations of a very small number of uncoded submatrices to preserve the inherent sparsity up to a certain level. 
Moreover, unlike the straggler optimal schemes in \cite{das2020coded} and \cite{dasunifiedtreatment}, our approach involves a much less computationally burdensome process to find a ``good'' set of random coefficients for numerical stability of the system. 
In addition, our approach addresses the case where the worker nodes are heterogeneous in nature, having different computation and communication speeds. 

The paper is organized as follows. In Section \ref{sec:background}, we discuss the problem formulation, related literature background and summarize our contributions. Then, in Section \ref{sec:prop_approach_matvec} and \ref{sec:matmat}, we present the details of our distributed matrix-vector and matrix-matrix multiplication schemes, respectively, with results on straggler resilience, extension to heterogeneous systems, and utilization of partial computations. Next, Section \ref{sec:property} discusses different properties of our proposed schemes in terms of worker computation delay, communication delay, numerical stability and the required time to find a ``good'' set of coefficients. In Section \ref{sec:numexp}, we present numerical experiments comparing the performance of our proposed method with other recent approaches. Finally, Section \ref{sec:conclusion} concludes the paper with a discussion of possible future
directions.

\section{Problem Formulation, Background and Summary of Contributions}
\label{sec:background}

\subsection{Problem Formulation}
We consider a distributed system comprised of a central node and a set of worker nodes aiming to compute $\mathbf{A}^T \mathbf{x}$ for matrix-vector multiplication or $\bfA^T \bfB$ for matrix-matrix multiplication, for given matrices $\mathbf{A} \in \mathbb{R}^{t \times r}$, $\bfB \in \mathbb{R}^{t \times w}$ and vector $\mathbf{x} \in \mathbb{R}^{t}$. In the homogeneous setting, we assume a system of $n$ worker nodes rated with the same computation and communication speeds, for local data processing and transmitting/receiving processed data, respectively. In particular, each worker can store the equivalent of $\gamma_A = \frac{1}{k_A}$ fraction of $\bfA$ and the whole vector $\bfx$ (or $\gamma_B = \frac{1}{k_B}$ fraction of $\bfB$) for matrix-vector multiplication (or matrix-matrix multiplication). In the heterogeneous setting, by contrast,  workers are rated with different storage capacity and speeds. Stragglers arise in practice from speed variations or failures experienced by the nodes at particular times \cite{das2019random}. 

In our approach in the homogeneous setting, we first partition matrices $\bfA$ and $\bfB$ into $k_A$ and $k_B$ disjoint block-columns, respectively, as $\bfA = \begin{bmatrix}
    \bfA_0 & \bfA_1 & \dots & \bfA_{{k}_A - 1}
    \end{bmatrix} $ and $ \bfB = \begin{bmatrix}
    \bfB_0 & \bfB_1 & \dots & \bfB_{k_B - 1}
    \end{bmatrix} ,
$ such that $\bfA_i \in \mathbb{R}^{t \times r/{k}_A}$ and $\bfB_j \in \mathbb{R}^{t \times w/{k}_B}$, for $0 \leq i \leq {k}_A - 1$ and $0 \leq j \leq {k}_B - 1$. 
Next, we will assign a random linear combination of some block-columns of $\bfA$ and the vector $\bfx$ (or another random linear combination of some block-columns of $\bfB$) to each worker node for matrix-vector multiplication (for matrix-matrix multiplication). 
As discussed in Sec. \ref{sec:intro}, assigning dense linear combinations can destroy the inherent sparsity of the corresponding matrices. Instead, we aim to assign linear combinations of a lesser number of submatrices. To quantify this, we define the ``weight'' of the encoded submatrices as follows.

\begin{definition}
\label{def:weight}
We define the weights of the encoding process $\omega_A$ and $\omega_B$ for matrices A and B, respectively, as the number of submatrices that are linearly combined to obtain each encoded submatrix. We assume uniform weights across the worker nodes, i.e., the combination received by each worker is formed from the same number of submatrices.
\end{definition}

Thus, in this work, we consider the problem of minimizing recovery threshold for both matrix-vector and matrix-matrix multiplication in the homogeneous system while maintaining low $\omega_A$ and $\omega_B$ for the assigned submatrices. We will extend the resulting approach for the heterogeneous setting as well by assigning tasks proportional to worker capabilities.
Note that we outline the notations used in this manuscript in Table \ref{tab:not} in App. \ref{app:not}.

\subsection{Background and Literature Review}
Several coded computation schemes have been proposed for matrix multiplication \cite{lee2018speeding, xhemrishi2022distributed, das2019random, dutta2016short, yu2017polynomial, das2020coded, hollanti2022secure, yu2020straggler, tandon2017gradient, aliasgari2020private, 8849468, 8919859, dasunifiedtreatment,  9513242, tandon2018secure, mallick2018rateless, 8849395} in recent years. We give a comparative summary between these schemes in terms of properties they support in Table \ref{tab:compare}; for a more detailed overview, we refer the reader to \cite{9084368}. Here we begin with an illustration of the polynomial code approach proposed in \cite{yu2017polynomial}.

\begin{table*}[t]
\caption{{\small Comparison among existing works on distributed matrix-computations (the MatDot Code-based approach \cite{dutta2019optimal} involves a higher worker node computational complexity)}} 

\label{tab:compare}
\begin{center}
\begin{small}
\begin{sc}
\begin{tabular}{c c c c c c}
\hline
\toprule
\multirow{2}{1 cm}{Codes} & Mat-mat& Optimal & Numerical  & Sparsely & Hetero.\\
  & Mult? & Thresh.? & Stability? & Coded? & System?\\
 \midrule
Repetition Codes & \cmark & \xmark & \cmark & \cmark& \cmark\\ \hline
Rateless Codes \cite{mallick2018rateless} & \xmark & \xmark & \cmark  & \xmark & \cmark\\ \hline 
Geometric Conv. Codes \cite{8849395}  & \xmark &\cmark & \cmark   & \xmark& \cmark\\ \hline
Prod. Codes \cite{lee2017high}, Fact. Codes \cite{9513242}   & \cmark &\xmark & \cmark   & \xmark& \xmark\\ \hline
Polynomial Codes \cite{yu2017polynomial}  & \cmark & \cmark & \xmark  & \xmark& \cmark\\ \hline
MatDot Codes \cite{dutta2019optimal}  & \cmark & \cmark & \xmark  & \xmark& \cmark\\ \hline
Bivariate  Poly. Code \cite{9519610}  & \cmark & \cmark & \xmark &  \xmark& \cmark\\ \hline
OrthoPoly \cite{8849468}, RKRP code\cite{8919859}, & \multirow{2}{0.3 cm}{\cmark}& \multirow{2}{0.3 cm}{\cmark} & \multirow{2}{0.3 cm}{\cmark}  & \multirow{2}{0.3 cm}{\xmark}& \multirow{2}{0.3 cm}{\cmark}\\
Conv. Code \cite{das2019random}, Cir. Rot. Mat. \cite{ramamoorthy2019numerically} & &  & & \\ \hline
Sparse Private Approach \cite{xhemrishi2022distributed} & \xmark& \xmark & \cmark & \cmark & \cmark\\ \hline
$\beta$-level Coding \cite{das2020coded} & \cmark& \xmark & \cmark  & \cmark& \xmark\\ \hline
SCS Optimal Scheme \cite{das2020coded} & \cmark& \cmark & \cmark  & \cmark& \xmark\\ \hline
Class-based Scheme \cite{dasunifiedtreatment} & \cmark& \cmark & \cmark  & \cmark& \xmark\\ \hline
{\bf Proposed Scheme} & \cmark & \cmark & \cmark  & \cmark& \cmark\\
\bottomrule
\end{tabular}
\end{sc}
\end{small}
\end{center}

\end{table*}%

Consider a homogeneous system with $n = 5$ worker nodes where each worker can store $\gamma_A = \gamma_B = 1/2$ fraction of both matrices $\bfA$ and $\bfB$. We partition matrices $\bfA$ and $\bfB$ into $k_A = k_B = 2$ block-columns each, as $\bfA_0$, $\bfA_1$ and $\bfB_0, \bfB_1$, respectively. Next we define two matrix polynomials as $\bfA(z) = \bfA_0 + \bfA_1 z$ and $\bfB(z) = \bfB_0 + \bfB_1 z^2$, so that $
    \bfA^T (z) \bfB(z) = \bfA_0^T \bfB_0 + \bfA_1^T \bfB_0 z + \bfA_0^T \bfB_1 z^2 + \bfA_1^T \bfB_1 z^3$. The central node evaluates $\bfA(z)$ and $\bfB(z)$ at $n = 5$ distinct real numbers and transmits the corresponding matrices to worker node $W_i$, for $i = 0, 1, 2, \dots, n - 1$. Now each node computes its respective assigned matrix-matrix block-product and returns the result to the central node. Since $\bfA^T (z) \bfB(z)$ is a degree-$3$ polynomial, once the central node receives results from the fastest $\tau = 4$ worker nodes, it can decode all the coefficients in $\bfA^T (z) \bfB(z)$, hence, $\bfA^T \bfB$. Thus, the recovery threshold is $\tau = 4$ and the system is resilient to $s = 1$ straggler. 

Unlike the schemes in \cite{wang2018coded, mallick2018rateless} which are sub-optimal in terms of straggler resilience, the polynomial code approach is among the first to provide the optimal recovery threshold.  However, recent works on matrix computations have identified metrics beyond recovery threshold that also need to be considered. Here we discuss the importance of factoring them into our methodology.

{\bf Sparsity of matrices $\bfA$ and $\bfB$}: Sparsity is quite prevalent in real-world datasets with applications in optimization, deep learning, power systems,  electromagnetism etc. (see \cite{sparsematrices} for such examples). In other words, there are many practical problems where the corresponding matrices to be operated on are sparse, which can be exploited to significantly reduce matrix computation time \cite{wang2018coded}. Consider two column vectors of length $m$, denoted by $\bfa$ and $\bfy$, where $\bfa$ has around $\psi m$ non-zero entries ($0 < \psi << 1$). It takes approximately $2 \psi m$ floating point operations (FLOPs) to compute $\bfa^T \bfy$, whereas it could take around $2m$ FLOPs if $\bfa$ was dense. 

As shown in Table \ref{tab:compare}, many existing coding approaches do not preserve sparsity. For example, in the polynomial code approach \cite{yu2017polynomial} or its variants \cite{8849468}, the encoded submatrices of $\bfA$ and $\bfB$ are obtained by linearly combining $k_A$ and $k_B$ submatrices. Thus, the number of non-zero entries can increase by up to $k_A$ and $k_B$ times, respectively, compared with the original matrices $\bfA$ and $\bfB$, which would lead to a significantly higher computation time. This underscores the importance of developing schemes that minimize the number of uncoded submatrices that are combined. 

Note that there are several methods available in the literature \cite{wang2018coded, das2020coded, dasunifiedtreatment, xhemrishi2022distributed, 9965842} which demonstrate some advantages in sparse matrix computations. However, the approach in \cite{xhemrishi2022distributed} is not developed for matrix-matrix multiplication, and the approach in \cite{9965842} has different assumptions than ours: the central node is also responsible for some computations. In addition, the approach in \cite{wang2018coded} and $\beta$-level coding scheme in \cite{das2020coded} do not meet the exact optimal recovery threshold, which require more nodes to finish their respective tasks. While the approach in \cite{dasunifiedtreatment} and the SCS-optimal scheme in \cite{das2020coded} meet the exact optimal recovery threshold, they partition the matrices into large number of block-columns, hence they need a large amount of time to find a ``good'' set of coefficients to obtain the linear combinations (more details are given in Sec. \ref{sec:trialtime} and Sec. \ref{sec:numexp}). Besides, some of the assigned submatrices in those approaches are densely coded, thus, an improved coding approach could further optimize the computational speed over those methods.

{\bf Numerical stability of the system}: Another important issue is the numerical stability of the system. Since the encoding and decoding methods in coded computation operate over the real field, the decoding of the unknowns from a system  of equations can be quite inaccurate if the corresponding system matrix is ill-conditioned. There can be a blow-up of round-off errors in the decoded result owing to the high condition numbers of the corresponding decoding matrices. For example, the polynomial code approach in \cite{yu2017polynomial} incorporates Vandermonde matrices into the encoding process which are known to be ill-conditioned. Literature aiming to address this issue \cite{8849468, 8919859, das2019random} has emphasized that the worst case condition number $(\kappa_{worst})$ over all different choices of stragglers should be treated as an important metric for minimization. 


However, many of the numerically stable methods are based on random codes \cite{8919859, das2019random, das2020coded, dasunifiedtreatment} which require significant time to find a ``good'' set of random coefficients that make the system numerically stable. The idea is to first generate a set of random coefficients and find the $\kappa_{worst}$ over all choices of stragglers. Next, repeat this step several times (say, $20$) and retain the set of random coefficients which gives minimum $\kappa_{worst}$. The latency incurred by this process increases with the number of worker nodes and can delay the encoding process.

{\bf Heterogeneous system and partial computations done by the stragglers}: Another important issue arises in distributed computing with heterogeneous worker nodes (different memory, speed and bandwidth) where algorithms based on homogeneous assumptions may lead to sub-optimal performance \cite{reisizadeh2019coded}. 
In this work, we address this issue by assigning each worker a processing load according to its memory and speed. 
Some of the approaches \cite{yu2017polynomial,8849468, 8919859} that were originally developed for a homogeneous system could similarly be extended to the heterogeneous setting.


Moreover, the performance of distributed computing often depends on how well the associated scheme is able to utilize (rather than discard) partial computations done by slower workers \cite{9252114}. Specifically, efficient utilization of the partial computations done by slower nodes may enhance the overall speed. 
To address this, in our scheme in the heterogeneous setting, we assign {\it multiple} smaller tasks to some workers, with the size of each task dictated by the limitations of the least powerful worker. The workers then compute their respective tasks and return the results sequentially. 
With this division, define $Q$ to be the minimum number of block products that must be returned to the central node for the guarantee of decoding the intended result ($\bfA^T \bfx$ or $\bfA^T \bfB$) successfully even in the worst case, i.e., from {\it any} $Q$ block products returned across all the worker nodes (respecting the computation order in each node) \cite{das2020coded}. We associate the scheme with the metric $Q / \Delta$, where $\Delta$ is the number of submatrix products to be recovered in the intended product ($\Delta = {k}_A$ or $k_A k_B$ for matrix-vector or matrix-matrix multiplication, respectively). $Q/\Delta$ is always lower bounded by $1$, and a system with a small $Q/\Delta$ can utilize the partial computations of the slower workers efficiently \cite{das2020coded, dasunifiedtreatment}.


\subsection{Summary of Contributions}
The contributions of this work can be summarized as follows.
\begin{itemize}
    \item We develop novel straggler-resilient approaches for distributed matrix-vector and matrix-matrix multiplication. 
    For a system with $n$ homogeneous worker nodes, each of which can store $1/k_A$ and $1/k_B$ fractions of matrices $\bfA$ and $\bfB$, respectively, our developed approach for distributed matrix-matrix multiplication can be resilient to {\it any} $s = n - k_A k_B$ stragglers (where $s \leq \textrm{min}(k_A, k_B)$). Thus, our approach is straggler optimal, since it meets the lower bounds on straggler resilience as given in \cite{yu2017polynomial}. 
    In addition, with the assumption that each node stores the whole vector $\bfx$, our approach for the matrix-vector case is also straggler-optimal.
    
    \item While our approaches are applicable to any types of matrices, it is specifically suited to sparse ``input'' matrices. A very limited number of uncoded submatrices are linearly combined in our encoding, so that the inherent sparsity of the input matrices can be preserved up to certain level. For example, in a system with $n = 12, k_A = 10$ and $s = 2$ for distributed matrix-vector multiplication, the traditional dense codes \cite{yu2017polynomial, 8849468, 8919859, das2019random} require linear combinations of $k_A = 10$ submatrices, whereas our scheme combines only $s+1 = 3$ uncoded submatrices (see Example \ref{ex:matvec}). Thus, our approach will be significantly impactful in the scenario where $s+1 < k_A$ (as mentioned in Remark \ref{rem:skA} in Section \ref{sec:prop_approach_matvec}).
    

    \item We also show that our proposed approaches are numerically stable. It has been verified by comparing other approaches in terms of the worst case condition number over different choices of stragglers. Moreover, our scheme involves a significantly less computationally expensive step compared to \cite{das2020coded, dasunifiedtreatment} to find a ``good'' set of random coefficients for numerically stability, in other words, the encoding time can be significantly reduced in our schemes.

    \item In addition, we extend our algorithms to the heterogeneous case where the worker nodes are heterogeneous in nature having different computation and communication speeds and storage capacities. Furthermore, our approaches meet the lower bound of $Q/\Delta$ value which indicates that these can efficiently utilize the partial computations done by the slower worker nodes in a heterogeneous system, thus can enhance the overall job execution speed significantly.
    
    \item Finally, we conduct exhaustive numerical experiments on Amazon Web Services (AWS) cluster using large-sized sparse matrices, and present comparisons that demonstrate the advantages of our schemes in terms of worker node computation time, communication time, numerical stability and coefficient determination time. The results show that our proposed approach can have up to $30\%$ reduction in computational complexity per worker node and can be $100\times$ faster for determining a ``good'' set of coefficients for numerical stability.
\end{itemize}

\section{Proposed Approach for Matrix-vector Multiplication}
\label{sec:prop_approach_matvec}

In this section, we detail our approach for straggler resilient distributed matrix-vector multiplication in case of both homogeneous and heterogeneous worker nodes.

\subsection{Homogeneous System}
\label{sec:homogeneous}
First we discuss our coded matrix-vector multiplication approach in the homogeneous system with resilience to up to $s = n - k_A$ stragglers. The overall procedure is given in Alg. \ref{Alg:New_matvec}. We partition matrix $\bfA$ into $k_A$ disjoint block columns as $\bfA_0, \bfA_1, \bfA_2, \dots, \bfA_{k_A - 1}$, and assign a random linear combination of $\omega_A$ (weight) submatrices of $\bfA$ to every worker node. Formally, we set $\omega_A = \textrm{min} \,(s+1, k_A)$, and assign a linear combination of $\bfA_i, \bfA_{i + 1}, \bfA_{i + 2}, \dots, \bfA_{i+\omega_A - 1} \,$ $\left(\textrm{indices modulo} \, k_A \right)$ to worker node $W_i$, for $i = 0, 1, 2, \dots, n-1$, where the linear coefficients are chosen i.i.d. (independent and identically distributed) from a continuous distribution. Note that every worker node $W_i$ has access to  the vector $\bfx$. Once the fastest $\tau = k_A$ worker nodes return their computation results, the central node decodes $\bfA^T \bfx$. The following theorem establishes the resiliency of Alg. \ref{Alg:New_matvec} to straggler nodes.

\begin{algorithm}[t]
	\caption{Proposed scheme for distributed matrix-vector multiplication}
	\label{Alg:New_matvec}
   \SetKwInOut{Input}{Input}
   \SetKwInOut{Output}{Output}

   \Input{Matrix $\bfA$, vector $\bfx$, $n$-number of worker nodes, $s$-number of stragglers, storage fraction $\gamma_A = \frac{1}{k_A}$; $s \leq n - k_A$.}
   Partition $\bfA$ into $k_A$ block-columns as $\bfA = \begin{bmatrix}
    \bfA_0 & \bfA_1 & \dots & \bfA_{k_A - 1}
    \end{bmatrix}$\;
   Create a $n \times k_A$ random matrix $\bfR$ with entries  $r_{i,j}$, $0\leq i \leq n -1 $ and $0\leq j \leq k_A -1$\;
   Set weight $\, \omega_A = min(s+1, k_A)$\;
   \For{$i\gets 0$ \KwTo $n-1$}{
   Define $T = \left\lbrace i, i+1, \dots, i + \omega_A - 1 \right\rbrace$ (modulo $k_A$)\;
   The central node creates a random linear combination of $\bfA_{q}$'s where $q \in T$, thus $\tilde{\bfA}_i = \sum\limits_{q \in T} r_{i,q} \bfA_q$. Then, it assigns encoded submatrix $\tilde{\bfA}_i$ to worker node $W_i$\;
   Worker node $W_i$ computes $\tilde{\bfA}_i^T \bfx$\;
   }
   \Output{The central node recovers $\bfA^T \bfx$ from the results of the fastest $k_A$ worker nodes.}

\end{algorithm}

\begin{theorem}
\label{thm:matvec}
Assume that a system has $n$ worker nodes each of which can store $1/k_A$ fraction of matrix $\bfA$ and the whole vector $\bfx$ for the distributed matrix-vector multiplication $\mathbf{A}^T \mathbf{x}$. If we assign the jobs according to Alg. \ref{Alg:New_matvec}, we achieve resilience to any $s = n - k_A$ stragglers.
\end{theorem}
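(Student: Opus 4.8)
The plan is to reduce the straggler-resilience claim to a statement about the invertibility of \emph{every} $k_A \times k_A$ submatrix of the coefficient matrix $\bfR$, and then to establish that invertibility through a matching argument combined with a genericity (measure-zero) argument. First I would observe that worker $W_i$ returns $\tilde{\bfA}_i^T \bfx = \sum_{q \in T} r_{i,q}\,(\bfA_q^T \bfx)$, i.e., a known linear combination of the $k_A$ unknown vectors $\bfA_0^T\bfx, \dots, \bfA_{k_A-1}^T\bfx$. If the fastest $k_A$ workers are indexed by $\calI \subseteq \{0,\dots,n-1\}$ with $|\calI| = k_A$, then stacking their results gives a linear system whose coefficient matrix is $\bfR_{\calI}$, the $k_A \times k_A$ submatrix of $\bfR$ keeping the rows in $\calI$, with the banded pattern that row $i$ has its $\omega_A$ nonzero entries in columns $\{i, i+1, \dots, i+\omega_A-1\}$ modulo $k_A$. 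Resilience to any $s = n - k_A$ stragglers follows once $\bfR_{\calI}$ is invertible for every such $\calI$, since then the central node recovers all $\bfA_q^T\bfx$ and hence $\bfA^T\bfx$.

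Next I would view $\det \bfR_{\calI}$ as a polynomial in the random entries $\{r_{i,q}\}$. In the Leibniz expansion, the only surviving terms come from permutations that send every selected row $i$ to a column inside its support $\{i,\dots,i+\omega_A-1\} \pmod{k_A}$ (all other terms vanish because they include a structural zero), and each such surviving term is a distinct monomial that uniquely determines its permutation. Hence the determinant is a nonzero polynomial precisely when there exists a perfect matching in the bipartite graph between the $k_A$ selected rows and the $k_A$ columns. Given such a matching, because the $r_{i,q}$ are drawn i.i.d.\ from a continuous distribution, the zero set of this nonzero polynomial has Lebesgue measure zero; a union bound over the finitely many choices of $\calI$ then yields $\det \bfR_{\calI} \neq 0$ for all $\calI$ with probability one.

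The crux is to prove, via Hall's marriage theorem, that the required perfect matching always exists. The structural facts I would exploit are that $\omega_A = s+1$ in the nontrivial regime $s \le k_A-1$ (the case $\omega_A = k_A$ is immediate, since every row is then fully supported and the bipartite graph is complete), that each row's support is a cyclic interval of length $s+1$ on $\mathbb{Z}_{k_A}$, and that among the $n = k_A+s$ worker indices only the $s$ starting positions $0,\dots,s-1$ are repeated (each exactly twice), the rest occurring once. For an arbitrary subset $S$ of selected rows with neighborhood $N(S)$ of size $u = |N(S)| < k_A$, I would argue that $N(S)$ is a union of cyclic arcs, so at most $u - \omega_A + 1 = u - s$ distinct starting positions can have their entire length-$\omega_A$ interval contained in $N(S)$; since at most $s$ of those positions carry a second copy, $|S| \le (u-s) + s = u = |N(S)|$. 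This verifies Hall's condition $|N(S)| \ge |S|$ for every $S$ (the case $u = k_A$ being trivial), so the matching exists.

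The main obstacle is this Hall-condition verification: one must argue that the cyclic-interval neighborhoods cannot be too small, accounting simultaneously for the interval length $\omega_A = s+1$ and the at-most-two multiplicity of starting positions. The counting has to combine these exactly — the loss of $\omega_A - 1$ from requiring a whole length-$\omega_A$ interval to fit inside $N(S)$, offset precisely by the gain of $s$ from the duplicated positions — to reach the tight bound $|S| \le |N(S)|$. Once this combinatorial lemma is in place, the determinant/matching correspondence and the measure-zero genericity step are routine.
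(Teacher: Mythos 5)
Your proposal is correct, and it follows the same overall skeleton as the paper's proof (reduce to invertibility of every $k_A \times k_A$ submatrix of $\bfR$, establish a perfect matching via Hall's theorem, then conclude genericity), but the crux — verifying Hall's condition — is argued by a genuinely different and arguably cleaner route. The paper counts from the equation side: it splits into cases $m \leq 2s$ and $m = 2s+q$, invokes a coverage lemma imported from Appendix C of the cyclic scheme in an earlier work (any $\delta$ of the first $k_A$ equations touch at least $\min(\omega_A+\delta-1, k_A)$ unknowns), and separately accounts for the duplicated supports of $W_j$ and $W_{k_A+j}$ via a $\lceil m/2 \rceil$ bound. You instead count from the neighborhood side: fixing $|N(S)| = u < k_A$, you decompose $N(S)$ into maximal cyclic arcs, observe that any length-$\omega_A$ interval inside $N(S)$ must sit in a single arc so at most $u - \omega_A + 1 = u - s$ distinct starting positions fit (note this bound needs at least one arc of length $\geq \omega_A$, which is automatic once $S \neq \emptyset$), and then add back at most $s$ for the doubled positions $0,\dots,s-1$ to get $|S| \leq u$ in one stroke, with no case split and no external reference. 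This dual counting buys a self-contained, tighter-looking argument; the paper's version buys reuse of an established lemma and a structure that generalizes more directly to its matrix-matrix analysis (Lemma 1), where the equation-side bookkeeping over the classes $\calM_i$ is the natural framework. Your determinant step is also slightly different in flavor but equivalent: you note the surviving Leibniz monomials are distinct so the determinant is a nonzero polynomial exactly when a matching exists, and then use that a nonzero polynomial vanishes on a Lebesgue-null set, whereas the paper exhibits the permutation-matrix evaluation and cites the Schwartz--Zippel lemma; both handle the dense case $\omega_A = k_A$ (i.e., $s \geq k_A - 1$) trivially, as you do.
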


\begin{proof}
{\it Case 1}: First consider the case when $s<k_A$. Since we have partitioned the matrix $\bfA$ into $k_A$ disjoint block-columns, to recover $\bfA^T \bfx$, we need to decode all $k_A$ vector unknowns, $\bfA^T_0 \bfx, \bfA^T_1 \bfx, \bfA^T_2 \bfx, \dots, \bfA^T_{k_A - 1} \bfx$. We denote the set of these $k_A$ unknowns as $\calB$. Now we choose an arbitrary set of $k_A$ worker nodes. Each of these worker nodes corresponds to an equation in terms of $\omega_A$ of those $k_A$ unknowns. We denote the set of $k_A$ equations as $\calC$, thus,  $|\calB| = |\calC| = k_A$. 

Now we consider a bipartite graph $\calG = \calC \cup \calB$, where any vertex (equation) in $\calC$ is connected to some vertices (unknowns) in $\calB$ which participate in the corresponding equation. 
Our goal is to show the existence of a perfect matching among the vertices of $\calC$ and $\calB$. We argue this according to Hall's marriage theorem \cite{marshall1986combinatorial} for which we need to show that for any $\bar{\calC} \subseteq \calC$, the cardinality of the neighbourhood of $\bar{\calC}$, denoted as $\calN (\bar{\calC}) \subseteq \calB$, is at least as large as $|\bar{\calC}|$. Thus, for $|\bar{\calC}| = m \leq k_A$, we need to show that $|\calN (\bar{\calC})| \geq m$.

{\it Case 1a}: First we consider the case that $m \leq 2s$, thus, we assume that $m$ is equal to either $2p$ or $2p - 1$, where $1 \leq p \leq s$. Here we recall that the number of unknowns participating in any equation is $\omega_A =$ min$(s+1, k_A)$, and the participating unknowns are shifted in a cyclic manner among the equations. If we choose any $\delta$ worker nodes out of the first $k_A$ worker nodes $\left(W_0, W_1, W_2, \dots, W_{k_A - 1} \right)$, according to the proof of cyclic scheme in Appendix C in \cite{das2020coded}, the minimum number of total participating unknowns is $\textrm{min} (\omega_A + \delta - 1, k_A)$. In other words, the first equation (among those $\delta$ equations) consists of $\omega_A$ unknowns, and then any additional equation includes at least one additional unknown until the number of total participating unknowns is $k_A$.

Now, according to Alg. \ref{Alg:New_matvec}, the same unknowns participate in two different equations corresponding to two different worker nodes, $W_j$ and $W_{k_A + j}$, where $j = 0, 1, \dots, s-1$. Thus for any $|\bar{\calC}| = m = 2p, 2p - 1 \leq 2s$, we have
$|\calN (\bar{\calC})|  \geq \textrm{min} \left( \omega_A + \ceil{m/2} - 1, k_A \right) = \textrm{min} \left( \omega_A + p - 1, k_A \right)$. Now, since $\omega_A =$ min$(s+1, k_A)$, we can say that $|\calN (\bar{\calC})| \geq \textrm{min} \left( s + p, k_A \right) \geq m$.

{\it Case 1b}: Now we consider the remaining case where $m = 2s + q$, $1 \leq q \leq k_A - 2s$. We need to find the minimum number of unknowns which participate in any set of $m$ equations. As we have discussed before, the same unknowns participate in two different equations corresponding to two different worker nodes, $W_j$ and $W_{k_A + j}$, where $j = 0, 1, \dots, s-1$. Thus, the additional $q$ equations will correspond to at least $q$ additional unknowns. Therefore, $|\calN (\bar{\calC})| \geq \textrm{min} \left( \omega_A + \ceil{2s/2} + q - 1, k_A \right)$. Now, since $\omega_A = \textrm{min} (s+1, k_A)$, we have, $|\calN (\bar{\calC})| \geq \textrm{min} \left( 2s + q, k_A \right) = m$.  

Thus, in this case when $s < k_A$, for any $\calC$, we have shown that $|\calN (\bar{\calC})| \geq |\bar{\calC}|$. So, there exists a perfect matching among the vertices of $\calC$ and $\calB$ according to Hall's marriage theorem. Now we consider the largest matching where the vertex $c_i \in \calC$ is matched to the vertex $b_j \in \calB$, which indicates that $b_j$ participates in the equation corresponding to $c_i$. Let us consider a $k_A \times k_A$ system matrix where row $i$ corresponds to the equation associated to $c_i$ where $b_j$ participates. Let us replace row $i$ of the system matrix by $\bfe_j$ where $\bfe_j$ is a unit row-vector of length $k_A$ with the $j$-th entry being $1$, and $0$ otherwise. Thus we have a $k_A \times k_A$ matrix where each row has only one non-zero entry which is $1$. 
Since we have a perfect matching, this $k_A \times k_A$ matrix will have only one non-zero entry in every column. This is a permutation of the identity matrix, and, thus, is full rank. Since the matrix is full rank for a choice of definite values, according to Schwartz-Zippel lemma \cite{schwartz1980fast}, the matrix continues to be full rank for random choices of non-zero entries. Thus, the central node can recover all $k_A$ unknowns from any set of $k_A$ worker nodes.

{\it Case 2}: Next consider the case when $s\geq k_A$. In this case, $\omega_A = \textrm{min} \left(s+1, k_A\right) = k_A$, thus all the worker nodes are assigned linear combinations of $k_A$ submatrices. Since the entries are chosen randomly from a continuous distribution, we can say that any $k_A \times k_A$ submatrix of the $n \times k_A$ system matrix is full rank. Thus we can recover all $k_A$ unknowns from the returned results of any $k_A$ out of $n$ worker nodes.
\end{proof}

\begin{remark}
\label{rem:skA}
While our proposed method is applicable for any values of $s$ and $k_A$, it is particularly impactful if $s < k_A - 1$ (thus, $\omega_A < k_A$), i.e., the percentage of the number of stragglers is less than $50\%$ of the total number of nodes (in the common practical cases, the corresponding percentage can be even less than $10\%$ \cite{das2019random, gupta2020serverless}). In these cases, the encoded submatrices have less weights than the dense coded approaches \cite{yu2017polynomial, 8849468, 8919859} which could reduce the expected communication delay and the average worker node computation delay for sparse ``input'' matrices.
\end{remark}

\begin{example}
\label{ex:matvec}
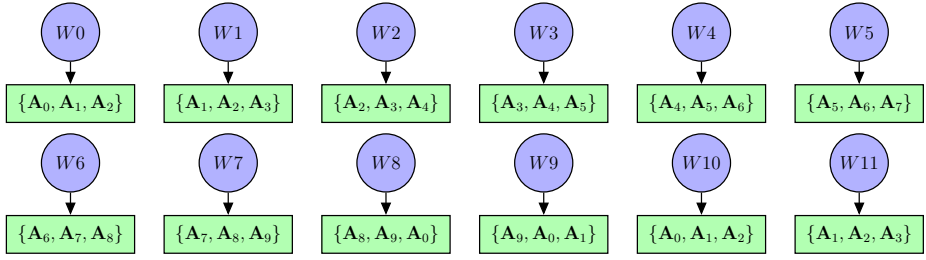
\begin{figure}[t]
\centering
\definecolor{mycolor6}{rgb}{0.92941,0.69412,0.12549}%

\resizebox{0.75\linewidth}{!}{
\begin{tikzpicture}[auto, thick, node distance=2cm, >=triangle 45]

\draw

    node [sum, minimum size = 1.3cm, fill=blue!30] (blk1) {$W0$}
    node [sum, minimum size = 1.3cm,fill=blue!30,right = 2.2 cm of blk1] (blk2) {$W1$}
    node [sum, minimum size = 1.3cm,fill=blue!30,right = 2.2 cm of blk2] (blk3) {$W2$}
    node [sum, minimum size = 1.3cm,fill=blue!30,right = 2.2 cm of blk3] (blk4) {$W3$}
    node [sum, minimum size = 1.3cm,fill=blue!30,right = 2.2 cm of blk4] (blk5) {$W4$}
    node [sum, minimum size = 1.3cm,fill=blue!30,right = 2.2 cm of blk5] (blk6) {$W5$}
    node [sum, minimum size = 1.3cm,fill=blue!30,below = 1.6 cm of blk1] (blk7) {$W6$}
    node [sum, minimum size = 1.3cm,fill=blue!30,right = 2.2 cm of blk7] (blk8) {$W7$}
    node [sum, minimum size = 1.3cm,fill=blue!30,right = 2.2 cm of blk8] (blk9) {$W8$}
    node [sum, minimum size = 1.3cm,fill=blue!30,right = 2.2 cm of blk9] (blk_10) {$W9$}
    node [sum, minimum size = 1.3cm,fill=blue!30,right = 2.2 cm of blk_10] (blk_11) {$W10$}
    node [sum, minimum size = 1.3cm,fill=blue!30,right = 2.2 cm of blk_11] (blk_12) {$W11$}

    node [block, fill=green!30, minimum width = 6.8em, below = 0.5 cm of blk1] (blk11) {$\left\lbrace\bfA_0, \bfA_1, \bfA_2 \right\rbrace$}
    node [block, fill=green!30, minimum width = 6.8em, below = 0.5 cm of blk2] (blk21) {$\left\lbrace\bfA_1, \bfA_2, \bfA_3 \right\rbrace$}
    node [block, fill=green!30, minimum width = 6.8em, below = 0.5 cm of blk3] (blk31) {$\left\lbrace\bfA_2, \bfA_3, \bfA_4 \right\rbrace$}
    node [block, fill=green!30, minimum width = 6.8em, below = 0.5 cm of blk4] (blk41) {$\left\lbrace\bfA_3, \bfA_4, \bfA_5 \right\rbrace$}
    node [block, fill=green!30, minimum width = 6.8em, below = 0.5 cm of blk5] (blk51) {$\left\lbrace\bfA_4, \bfA_5, \bfA_6 \right\rbrace$}
    node [block, fill=green!30, minimum width = 6.8em, below = 0.5 cm of blk6] (blk61) {$\left\lbrace\bfA_5, \bfA_6, \bfA_7 \right\rbrace$}
    node [block, fill=green!30, minimum width = 6.8em, below = 0.5 cm of blk7] (blk71) {$\left\lbrace\bfA_6, \bfA_7, \bfA_8 \right\rbrace$}
    node [block, fill=green!30, minimum width = 6.8em, below = 0.5 cm of blk8] (blk81) {$\left\lbrace\bfA_7, \bfA_8, \bfA_9 \right\rbrace$}
    node [block, fill=green!30, minimum width = 6.8em, below = 0.5 cm of blk9] (blk91) {$\left\lbrace\bfA_8, \bfA_9, \bfA_0 \right\rbrace$}
    node [block, fill=green!30, minimum width = 6.8em, below = 0.5 cm of blk_10] (blk_101) {$\left\lbrace\bfA_9, \bfA_0, \bfA_1 \right\rbrace$}
    node [block, fill=green!30, minimum width = 6.8em, below = 0.5 cm of blk_11] (blk_111) {$\left\lbrace\bfA_0, \bfA_1, \bfA_2 \right\rbrace$}
    node [block, fill=green!30, minimum width = 6.8em, below = 0.5 cm of blk_12] (blk_121) {$\left\lbrace\bfA_1, \bfA_2, \bfA_3 \right\rbrace$}
    
;
\draw[->](blk1) -- node{} (blk11);
\draw[->](blk2) -- node{} (blk21);
\draw[->](blk3) -- node{} (blk31);
\draw[->](blk4) -- node{} (blk41);
\draw[->](blk5) -- node{} (blk51);
\draw[->](blk6) -- node{} (blk61);
\draw[->](blk7) -- node{} (blk71);
\draw[->](blk8) -- node{} (blk81);
\draw[->](blk9) -- node{} (blk91);
\draw[->](blk_10) -- node{} (blk_101);
\draw[->](blk_11) -- node{} (blk_111);
\draw[->](blk_12) -- node{} (blk_121);

\end{tikzpicture}
}
\caption{\small Submatrix allocation for $n = 12$ workers and $s = 2$ stragglers, with $\gamma_A = \frac{1}{10}$ according to Alg. \ref{Alg:New_matvec}. Here, the weight of every submatrix is $\omega_A = \textrm{min} (s + 1, k_A) = 3$. Any $\{\bfA_i, \bfA_j, \bfA_k\}$ indicates a random linear combination of the corresponding submatrices where the coefficients are chosen i.i.d. at random from a continuous distribution.}
\label{matvec12}
\vspace{-0.6 cm}
\end{figure}

Consider a system with $n = 12$ worker nodes each of which can store $1/10$ fraction of matrix $\bfA$. We partition matrix $\bfA$ into $k_A = 10$ disjoint block-columns, $\bfA_0, \bfA_1, \dots, \bfA_9$. According to Alg. \ref{Alg:New_matvec}, we set the weight $\omega_A = \textrm{min}(s + 1, k_A) = \textrm{min} (n - k_A + 1, k_A) = 3$, and assign random linear combinations of submatrices $\bfA_i, \bfA_{i + 1}, \bfA_{i + 2} \, \left(\textrm{indices modulo} \, 10 \right)$ to worker node $W_i$, for $i = 0, 1, 2, \dots, 11$, as shown in Fig. \ref{matvec12}. 
Thus, according to Theorem \ref{thm:matvec}, the system has a recovery threshold $\tau = k_A = 10$, and it is resilient to any $s = 2$ stragglers.
\end{example}


\subsection{Extension to Heterogeneous System}
\label{sec:hetero_mv}
In this section, we extend our approach in Alg. \ref{Alg:New_matvec} to a heterogeneous system of $\bar{n}$ worker nodes where the nodes may have different computation speeds and communication speeds. We assume that true knowledge about the storage and speeds of the worker nodes are available prior to the assignment of the jobs. We also assume that we have $\lambda$ different types of nodes in the system, with worker node type $0, 1, \dots, \lambda - 1$. First, without loss of generality (w.l.o.g.), we sort the worker nodes in a non-ascending order in terms of the worker node types. Next, suppose that $\alpha$ is the number of the assigned columns and $\beta$ is the number of processed columns per unit time in the ``weakest'' type node. In this scenario, we assume that a worker node $W_i$ of type $j_i$ receives $c_{j_i} \alpha$ coded columns of data matrix $\bfA$ and has a computation speed $c_{j_i} \beta$, where $c_{j_i} \geq 1$ is an integer. Thus, a higher $c_{j_i}$ indicates a ``stronger'' type node $W_i$ which has a $c_{j_i}$ times higher memory and can process at a $c_{j_i}$ times higher computation speed than the ``weakest'' type node. Since we sort the nodes in a non-ascending order in terms of the worker node types, we have $j_0 \geq j_1 \geq j_2 \geq \dots \geq j_{\bar{n}-1} = 0$, hence $c_{j_0} \geq c_{j_1} \geq c_{j_2} \geq \dots \geq c_{j_{\bar{n}-1}} = 1$. Note that $\lambda = 1$ and all $c_{j_i} = 1$ lead us to the homogeneous system in Sec. \ref{sec:homogeneous} where $0 \leq i \leq n -1$ and $j_i = 0$. 

Assume that the ``weakest'' type worker node requires $\mu$ units of time to process $\alpha$ columns of $\bfA$. Thus, any node $W_i$ of type $j_i$ can process $c_{j_i} \alpha$ columns in time $\mu$. In this scenario, from the computation and storage perspective, worker node $W_i$ (of type $j_i$) can be considered as a combination of $c_{j_i} \geq 1$ worker nodes of the ``weakest'' type. Thus, $\bar{n}$  worker nodes in the heterogeneous system can be thought as homogeneous system of $n = \sum_{i = 0}^{\bar{n} - 1} c_{j_i}$ worker nodes of the ``weakest'' type. In other words, the worker node $W_k$ in the heterogeneous system ($0 \leq k \leq \bar{n} - 1$) can be thought as a combination of worker nodes $\bar{W}_m, \bar{W}_{m+1}, \dots, \bar{W}_{m + c_{kj} - 1}$ in a homogeneous setting, where $m = \sum_{i = 0}^{k-1} c_{j_i}$ and $W_k$ is of type $j_i$ worker node. Now, for any worker node index $\bar{k}_A$ ( such that $0 \leq \bar{k}_A \leq \bar{n} - 1$), we define $k_A = \sum_{i = 0}^{\bar{k}_A - 1} c_{j_i}$ and $ s = \sum_{i = \bar{k}_A}^{\bar{n} - 1} c_{j_i}$, so, $n = \sum\limits_{i = 0}^{\bar{n} - 1} c_{j_i} = k_A + s$.
Thus, a heterogeneous system of $\bar{n}$ worker nodes can be thought as a homogeneous system of $n = k_A + s$ nodes, for any $\bar{k}_A$ ($0 \leq \bar{k}_A \leq \bar{n} - 1$). We state the following corollary of Theorem \ref{thm:matvec} for heterogeneous system.


\begin{corollary}
 \label{cor:matvec}
Consider a heterogeneous system of $\bar{n}$ nodes of different types and assume any $\bar{k}_A$ (where $0 \leq \bar{k}_A \leq \bar{n} - 1$). Now, if the jobs are assigned to the modified homogeneous system of $n = k_A + s$ ``weakest'' type worker nodes according to Alg. \ref{Alg:New_matvec}, the system (a) will be resilient to $s$ such nodes and (b) will provide a $Q/\Delta$ value as $1$.
\end{corollary}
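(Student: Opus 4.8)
The plan is to reduce both claims to Theorem~\ref{thm:matvec} through the virtual-node decomposition already established immediately before the corollary. Recall that a worker $W_k$ of type $j_i$ is identified with the $c_{j_i}$ consecutive ``weakest''-type virtual nodes $\bar{W}_m, \dots, \bar{W}_{m+c_{j_i}-1}$ (with $m = \sum_{\ell=0}^{k-1} c_{j_\ell}$), so that the heterogeneous system becomes a homogeneous system of exactly $n = k_A + s$ virtual nodes. Applying Alg.~\ref{Alg:New_matvec} to these $n$ virtual nodes assigns to $\bar{W}_i$ a random linear combination of $\bfA_i, \dots, \bfA_{i+\omega_A-1}$ (indices modulo $k_A$) with $\omega_A = \min(s+1, k_A)$, which is precisely the homogeneous construction to which Theorem~\ref{thm:matvec} applies verbatim.

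For part (a), I would invoke Theorem~\ref{thm:matvec} directly on the $n = k_A + s$ virtual nodes: the system is resilient to any $s$ straggling virtual nodes, equivalently, any $k_A$ of the $n$ virtual nodes suffice to recover the $k_A$ unknowns $\bfA_0^T\bfx, \dots, \bfA_{k_A-1}^T\bfx$, hence $\bfA^T\bfx$. Translating back, a straggling heterogeneous worker of type $j_i$ removes exactly $c_{j_i}$ virtual nodes, so provided the straggling workers account for at most $s$ virtual ``weakest''-type nodes in aggregate, at least $k_A$ virtual nodes return and decoding succeeds. This is exactly resilience to $s$ such nodes.

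For part (b), note that $\Delta = k_A$ for matrix-vector multiplication. The lower bound $Q \geq \Delta$ is immediate, since recovering $k_A$ unknowns over the reals requires at least $k_A$ equations, i.e., at least $k_A$ returned block products. For the matching upper bound, the key observation is that each returned block product is the computation $\tilde{\bfA}_i^T\bfx$ of a distinct virtual node $\bar{W}_i$; thus any collection of $k_A$ returned block products corresponds to $k_A$ distinct virtual nodes. By the strong form of Theorem~\ref{thm:matvec} used in part (a), every choice of $k_A$ virtual nodes yields a full-rank $k_A \times k_A$ system, so these $k_A$ block products already determine $\bfA^T\bfx$. This holds irrespective of how the $k_A$ products are distributed across the workers, and a fortiori under the ``respecting computation order'' constraint, so $Q \leq k_A = \Delta$. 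Combining the two bounds gives $Q = \Delta$, i.e., $Q/\Delta = 1$.

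I expect the only real subtlety to be the bookkeeping in the decomposition: verifying that Alg.~\ref{Alg:New_matvec}'s cyclic, overlapping assignment applied to the $n$ virtual nodes reproduces exactly the structure (the same unknowns participating in $\bar{W}_j$ and $\bar{W}_{k_A+j}$ for $0 \le j \le s-1$) that drives the proof of Theorem~\ref{thm:matvec}, so that the theorem transfers without modification. Once that identification is in place, both parts follow from the theorem with essentially no additional computation; the ``respecting order'' constraint, which might at first appear to complicate part (b), turns out to be irrelevant precisely because \emph{any} $k_A$ virtual nodes---not merely order-respecting prefixes---already decode.
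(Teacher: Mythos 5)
Your proof is correct and takes essentially the same route as the paper: both reduce the heterogeneous system to a homogeneous one of $n = k_A + s$ ``weakest''-type virtual nodes, invoke Theorem \ref{thm:matvec} to get resilience to any $s$ virtual nodes for part (a), and obtain $Q = k_A = \Delta$ for part (b) from the fact that any $k_A$ returned block products correspond to $k_A$ distinct virtual nodes and hence suffice to decode. Your extra bookkeeping (the explicit lower bound $Q \geq \Delta$ and the observation that the ``respecting computation order'' constraint is vacuous because \emph{any} $k_A$ block products decode) is sound but goes slightly beyond what the paper's own, terser proof spells out.
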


\begin{proof}
We first modify the heterogeneous system of $\bar{n}$ nodes to a homogeneous system of $n = k_A + s$ nodes which are of the ``weakest'' type. Now, we assign the jobs to this homogeneous system according to Alg. 1 and each node is assigned $1/k_A$-th fraction of the whole job. Thus, according to Theorem 1, we have resilience to {\it any} $s$ nodes, which concludes the proof of (a).

Next, according to Alg. 1, we partition matrix $\bfA$ into $k_A$ block-columns, thus, we have total $\Delta = k_A$ submatrix products to be decoded to recover $\bfA^T \bfx$. Now according to part (a) of this proof, in the modified homogeneous system of $n$ worker nodes, we assign $c_{j_i}$ tasks to worker node $W_i$ which is of type $j$. Thus, we assign, in total, $n = \sum_{i = 0}^{\bar{n} - 1} c_{j_i}$ tasks, and the scheme is resilient to any $s= \sum_{i = \bar{k}_A}^{\bar{n} - 1} c_{j_i}$ tasks. Thus, all the unknowns can be recovered from {\it any} $n - s = \sum_{i = 0}^{\bar{k}_A - 1} c_{j_i} = k_A$ submatrix products, hence $Q = k_A$. Thus, we have $Q/\Delta = 1$. 
\end{proof}



As we have discussed in Sec. \ref{sec:homogeneous}, under the assumption that each worker node has been assigned $1/k_A$ fraction of the whole job, our proposed approach for a homogeneous system can be resilient to any $s$ stragglers out of $n = k_A + s$ worker nodes. Now, the heterogeneous system of $\bar{n}$ worker nodes is resilient to $s = \sum_{i = \bar{k}_A}^{n - 1} c_{j_i} $ block-column processing, where each worker node has been assigned $1/k_A$ fraction of the whole job. Varying the indexing of the worker nodes depending on different node types or changing the value of $\bar{k}_A$, one can decrease (or increase) the per worker load of the job, which can increase (or decrease) the value of $\bar{s}$. The number of actual stragglers that the system is resilient to can vary depending on the worker node types.

\begin{example}
\label{exmpl:hetero}
\begin{figure}[t]
\centering
\resizebox{0.8\linewidth}{!}{

\definecolor{mycolor6}{rgb}{0.92941,0.69412,0.12549}%
\definecolor{mycolor7}{rgb}{0.74902,0.00000,0.74902}%
\definecolor{mycolor8}{rgb}{0.60000,0.20000,0.00000}%

\begin{tikzpicture}[auto, thick, node distance=2cm, >=triangle 45]

\draw
    
    node [sum, minimum size = 1.15cm, fill=blue!30] (bl1) {$W2$}
    node [block,minimum height = 0.85cm, minimum width = 1.5cm,fill=blue!30] at (-2.9,0.3) (bl4) {$W1$}
    node [block, minimum height = 0.85cm, minimum width = 1.5cm,fill=blue!30,left = 1.4 cm of bl4] (bl5) {$W0$}
    node [sum, minimum size = 1.15cm,fill=blue!30,right = 1.8 cm of bl1] (bl2) {$W3$}
    node [sum, minimum size = 1.15cm,fill=blue!30,right = 1.8 cm of bl2] (bl3) {$W4$}
    node [sum, minimum size = 1.15cm,fill=blue!30,right = 1.8 cm of bl3] (bl6) {$W5$}
    node [sum, minimum size = 1.15cm,fill=blue!30,right = 1.8 cm of bl6] (bl7) {$W6$}

    node [block, fill=green!30, minimum width = 4.8em, below = 0.5 cm of bl1] (bl11) {\normalfont $\{  {\bfA}_4, {\bfA}_5, {\bfA}_6 \}$}
    node [block, fill=green!30, minimum width = 4.8em, below = 0.5 cm of bl2] (bl21) {\normalfont $\{ {\bfA}_5, {\bfA}_6, {\bfA}_0 \}$}
    node [block, fill=green!30, minimum width = 4.8em, below = 0.5 cm of bl3] (bl31) {\normalfont $\{ {\bfA}_6, {\bfA}_0, {\bfA}_1 \}$}
    node [block, fill=green!30, minimum width = 4.8em, below = 0.5 cm of bl4] (bl41) {\normalfont $\{ {\bfA}_2, {\bfA}_3, {\bfA}_4 \}$}
    node [block, fill=green!30, minimum width = 4.8em, below = 0.005 cm of bl41] (bl42) {\normalfont $\{{\bfA}_3, {\bfA}_4, {\bfA}_5 \}$}
    node [block, fill=green!30, minimum width = 4.8em, below = 0.5 cm of bl5] (bl51) {\normalfont $\{ {\bfA}_0, {\bfA}_1, {\bfA}_2 \}$}
    node [block, fill=green!30, minimum width = 4.8em, below = 0.005 cm of bl51] (bl52) {\normalfont $\{ {\bfA}_1, {\bfA}_2, {\bfA}_3 \}$}
    node [block, fill=green!30, minimum width = 4.8em, below = 0.5 cm of bl6] (bl61) {\normalfont $\{ {\bfA}_0, {\bfA}_1, {\bfA}_2 \}$}
    node [block, fill=green!30, minimum width = 4.8em, below = 0.5 cm of bl7] (bl71) {\normalfont $\{ {\bfA}_1, {\bfA}_2, {\bfA}_3 \}$}


;
\draw[->](bl1) -- node{} (bl11);
\draw[->](bl2) -- node{} (bl21);
\draw[->](bl3) -- node{} (bl31);
\draw[->](bl4) -- node{} (bl41);
\draw[->](bl5) -- node{} (bl51);
\draw[->](bl6) -- node{} (bl61);
\draw[->](bl7) -- node{} (bl71);

\end{tikzpicture}
}

\caption{\small A heterogeneous system where $\bar{n} = 7$ and $\bar{k}_A = 5$, thus $n = 9$ and $k_A = 7$. Each of $W_0$ and $W_1$ is assigned twice the load of each of $W_2, W_3, \dots, W_6$. This system is resilient to any s = 2 block-column processing, in other words, it is resilient to any two type 0 nodes (for example, $W_3$ and $W_6$) or any one type 1 node (for example, $W_1$).
}
\label{hetero_ex}
\vspace{-0.7 cm}
\end{figure}
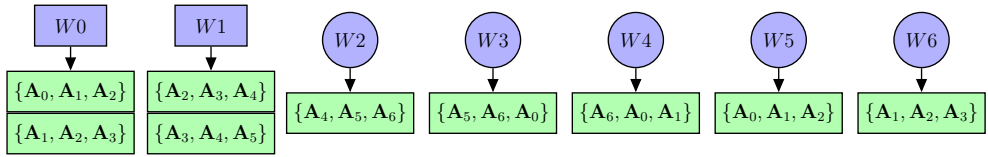
Consider the example in Fig. \ref{hetero_ex} consisting of $\bar{n} = 7$ worker nodes. 
Let us assume, $c_{j_i} = 2$ when $i = 0, 1$ and $c_{j_i} = 1$ when $2 \leq i \leq 6$, thus, $n = \sum_{i = 0}^{\bar{n} - 1} c_{j_i} = 9$. Now assume that $\bar{k}_A = 5$, thus $k_A = \sum_{i = 0}^{\bar{k}_A - 1} c_{j_i} = 7$, and $s = \sum_{i = \bar{k}_A}^{\bar{n} - 1} c_{j_i} = 2$. So, each weakest device is assigned $1/7$-th fraction of the whole job . This scheme is resilient to any $s = 2$ block-column processing, in other words, it is resilient to any {\it two} type $0$ nodes or any {\it one} type $1$ node.

Moreover, for the heterogeneous setting, our proposed approach provides a $Q/\Delta$ value to be $1$, which indicates that the central node can recover all $\Delta = 7$ unknowns from {\it any} $Q = 7$ block-products. In this example, $W_0$ and $W_1$ are assigned multiple jobs and our proposed approach can efficiently utilize their partial computations if any of them is slower than their rated speed. For instance, assume that $W_6$ is a failure, and $W_0$ is slower and able to compute one (out of two) of the submatrix products while each of $W_1, W_2, \dots, W_5$ completes its respective assigned job. Then, the central node can recover all the submatrix products from the successful worker nodes with the help of the partial computations done by $W_0$. 
\end{example}

\begin{remark}
It is well-known in the cloud computation that the low-cost machines (which are the ``weaker'' ones) are the most probable ones to straggle \cite{tandon2017gradient}. In that case, the number of stragglers that our proposed heterogeneous matrix computation scheme can be resilient to will be higher.
\end{remark}

\section{Proposed Approach for Matrix-matrix Multiplication}
\label{sec:matmat}

\subsection{Homogeneous System}
\label{sec:homogeneous_matmat}
First, we discuss our distributed matrix-matrix multiplication approach for the homogeneous system with resilience to $s = n - k_A k_B$ stragglers where $s \leq \textrm{max} (k_A, k_B)$. Without loss of generality (w.l.o.g.), we can assume that $k_A \geq k_B$, thus, $s \leq k_A$. Each node stores the equivalent of $1/k_A$ and $1/k_B$ fractions (of block-columns) of matrices $\bfA$ and $\bfB$, respectively. Thus, if a node multiplies its respective assignments from $\bfA$ and $\bfB$, it completes $1/k_{AB}$ fraction of overall job of computing $\bfA^T \bfB$, where $k_{AB} = k_A k_B$. The overall procedure is given in Alg. \ref{Alg:New_matmat}. 

In our approach, we partition matrices $\bfA$ and $\bfB$ into $k_A$ and $k_B$ disjoint block columns, respectively, as $\bfA_0, \bfA_1, \bfA_2, \dots, \bfA_{k_A - 1}$ and $\bfB_0, \bfB_1, \bfB_2, \dots, \bfB_{k_B - 1}$, respectively. Next, we set $\omega_A$ and $\omega_B$ in such a way so that $\omega_A \omega_B > s$; and assign a random linear combination of $\omega_A$ (weight) submatrices of $\bfA$ and another random linear combination of $\omega_B$ (weight) submatrices of $\bfB$ to every worker node where $1 < \omega_A < k_A $ and $1 < \omega_B < k_B $. It should be noted that for a given storage fraction $\gamma_A = 1/k_A$ (or $\gamma_B = 1/k_B$) of matrix $\bfA$ (or $\bfB$) for each of the worker nodes, the case $\omega_A = 1$ (or $\omega_B = 1$) leads to an approach which provides suboptimal performance in terms of number of stragglers that the system is resilient to \cite{das2020coded}. 

\begin{algorithm}[t]
	\caption{Proposed scheme for distributed matrix-matrix multiplication}
	\label{Alg:New_matmat}
   \SetKwInOut{Input}{Input}
   \SetKwInOut{Output}{Output}
   \Input{Matrices $\bfA$ and $\bfB$, $n$-number of worker nodes, $s$-number of stragglers, storage fraction $\gamma_A = \frac{1}{k_A}$ and $\gamma_A = \frac{1}{k_B}$; $s \leq n - k_A k_B \leq \textrm{max} (k_A, k_B)$.}
   Partition $\bfA$ and $\bfB$ into $k_A$ and $k_B$ block-columns, respectively\;
   Create a $n \times k_A$ random matrix $\bfR_A$ with entries  $r^A_{i,j}$, $0\leq i \leq n -1 $ and $0\leq j \leq k_A -1$\;
   Create a $n \times k_B$ random matrix $\bfR_B$ with entries  $r^B_{i,j}$, $0\leq i \leq n -1 $ and $0\leq j \leq k_B -1$\;
   Set weights $\omega_A$ and $\omega_B$ (where $\omega_A \geq \omega_B$) in such a way that $\omega_A \omega_B > s$, $1 < \omega_A < k_A $ and $1 < \omega_B < k_B $\;
   \For{$i\gets 0$ \KwTo $n-1$}{
   Define $T = \left\lbrace i, i+1, \dots, i + \omega_A - 1 \right\rbrace$ (modulo $k_A$)\;
   Create a random linear combination of $\bfA_{q}$'s where $q \in T$, thus $\tilde{\bfA}_i = \sum\limits_{q \in T} r^A_{i,q} \bfA_q$\;
   Set $j = \floor{i/k_A}$, and define $S = \left\lbrace j, j+1, \dots, j + \omega_B - 1 \right\rbrace$ (modulo $k_B$)\;
   Create a random linear combination of $\bfB_{q}$'s where $q \in S$, thus $\tilde{\bfB}_i = \sum\limits_{q \in S} r^B_{i,q} \bfB_q$\;
   The central node assigns encoded submatrices $\tilde{\bfA}_i$ and $\tilde{\bfB}_i$ to worker node $W_i$\;
   Worker node $W_i$ computes $\tilde{\bfA}_i^T \tilde{\bfB}_i$\;
   }
   \Output{The central node recovers $\bfA^T \bfB$ from the fastest $k_A k_B$ worker nodes.}

\end{algorithm}

Formally, we assign a random linear combination of $\bfA_i, \bfA_{i+1}, \dots, \bfA_{i+\omega_A - 1}$ (indices of $\bfA$ are reduced modulo $k_A$) to worker node $W_i$, $0 \leq i \leq n - 1$. Thus, we can say that the participating submatrices of $\bfA$ are shifted in a cyclic manner over all $n$ worker nodes. Next we set $j = \floor{i/k_A}$, and assign $\bfB_j, \bfB_{j+1}, \dots, \bfB_{j+\omega_B - 1}$ (indices of $\bfB$ are reduced modulo $k_B$) to worker node $W_i$. Once the fastest $\tau = k_A k_B$ worker nodes finish and return their computation results, the central node can recover all the unknowns in the form of  $\bfA_u^T \bfB_v$, where $0 \leq u \leq k_A - 1$ and $0 \leq v \leq k_B - 1$.

\begin{example}
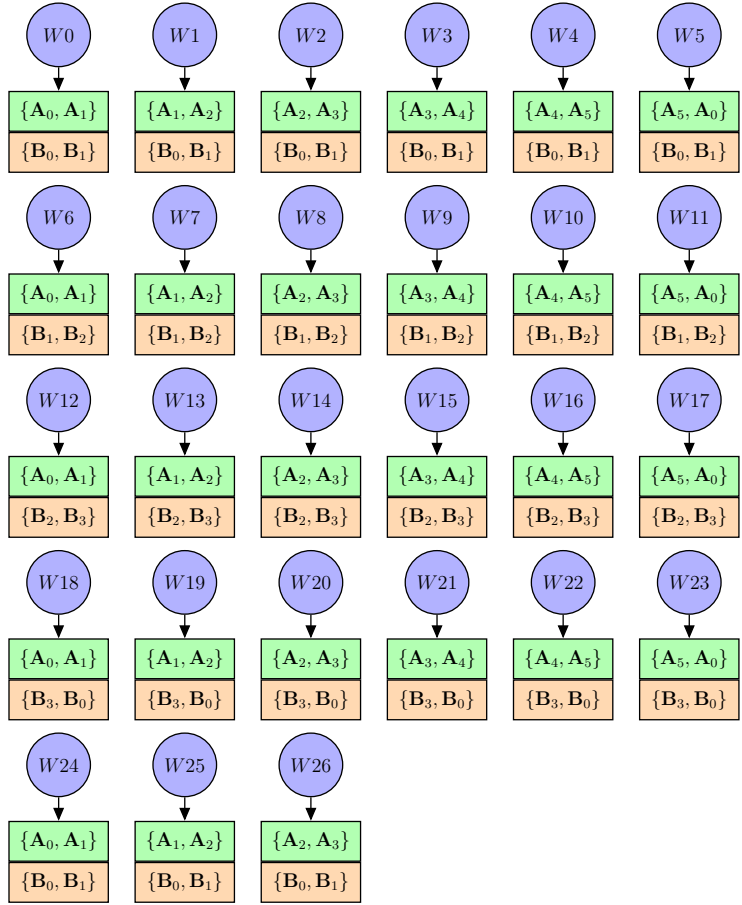
\begin{figure}[t]
\centering
\resizebox{0.6\linewidth}{!}{
\begin{tikzpicture}[auto, thick, node distance=2cm, >=triangle 45]

\draw

    node [sum, minimum size = 1.35cm, fill=blue!30] (blk1) {$W0$}
    node [sum, minimum size = 1.35cm,fill=blue!30,right = 1.3 cm of blk1] (blk2) {$W1$}
    node [sum, minimum size = 1.35cm,fill=blue!30,right = 1.3 cm of blk2] (blk3) {$W2$}
    node [sum, minimum size = 1.35cm,fill=blue!30,right = 1.3 cm of blk3] (blk4) {$W3$}
    node [sum, minimum size = 1.35cm,fill=blue!30,right = 1.3 cm of blk4] (blk5) {$W4$}
    node [sum, minimum size = 1.35cm,fill=blue!30,right = 1.3 cm of blk5] (blk6) {$W5$}
    node [sum, minimum size = 1.35cm,fill=blue!30,below = 2.5 cm of blk1] (blk7) {$W6$}
    node [sum, minimum size = 1.35cm,fill=blue!30,right = 1.3 cm of blk7] (blk8) {$W7$}
    node [sum, minimum size = 1.35cm,fill=blue!30,right = 1.3 cm of blk8] (blk9) {$W8$}
    node [sum, minimum size = 1.35cm,fill=blue!30,right = 1.3 cm of blk9] (blk10) {$W9$}
    node [sum, minimum size = 1.35cm,fill=blue!30,right = 1.3 cm of blk10] (blk11) {$W10$}
    node [sum, minimum size = 1.35cm,fill=blue!30,right = 1.3 cm of blk11] (blk12) {$W11$}
    node [sum, minimum size = 1.35cm,fill=blue!30,below = 2.5 cm of blk7] (blk13) {$W12$}
    node [sum, minimum size = 1.35cm,fill=blue!30,right = 1.3 cm of blk13] (blk14) {$W13$}
    node [sum, minimum size = 1.35cm,fill=blue!30,right = 1.3 cm of blk14] (blk15) {$W14$}
    node [sum, minimum size = 1.35cm,fill=blue!30,right = 1.3 cm of blk15] (blk16) {$W15$}
    node [sum, minimum size = 1.35cm,fill=blue!30,right = 1.3 cm of blk16] (blk17) {$W16$}
    node [sum, minimum size = 1.35cm,fill=blue!30,right = 1.3 cm of blk17] (blk18) {$W17$}
    node [sum, minimum size = 1.35cm,fill=blue!30,below = 2.5 cm of blk13] (blk19) {$W18$}
    node [sum, minimum size = 1.35cm,fill=blue!30,right = 1.3 cm of blk19] (blk20) {$W19$}
    node [sum, minimum size = 1.35cm,fill=blue!30,right = 1.3 cm of blk20] (blk21) {$W20$}
    node [sum, minimum size = 1.35cm,fill=blue!30,right = 1.3 cm of blk21] (blk22) {$W21$}
    node [sum, minimum size = 1.35cm,fill=blue!30,right = 1.3 cm of blk22] (blk23) {$W22$}
    node [sum, minimum size = 1.35cm,fill=blue!30,right = 1.3 cm of blk23] (blk24) {$W23$}
    node [sum, minimum size = 1.35cm,fill=blue!30,below = 2.5 cm of blk19] (blk25) {$W24$}
    node [sum, minimum size = 1.35cm,fill=blue!30,right = 1.3 cm of blk25] (blk26) {$W25$}
    node [sum, minimum size = 1.35cm,fill=blue!30,right = 1.3 cm of blk26] (blk27) {$W26$}
    
    node [block, fill=green!30, minimum width = 5em, below = 0.5 cm of blk1] (blk1_1) {$\left\lbrace\bfA_0, \bfA_1 \right\rbrace$}
    node [block, fill=green!30, minimum width = 5em, below = 0.5 cm of blk2] (blk2_1) {$\left\lbrace\bfA_1, \bfA_2 \right\rbrace$}
    node [block, fill=green!30, minimum width = 5em, below = 0.5 cm of blk3] (blk3_1) {$\left\lbrace\bfA_2, \bfA_3\right\rbrace$}
    node [block, fill=green!30, minimum width = 5em, below = 0.5 cm of blk4] (blk4_1) {$\left\lbrace\bfA_3, \bfA_4 \right\rbrace$}
    node [block, fill=green!30, minimum width = 5em, below = 0.5 cm of blk5] (blk5_1) {$\left\lbrace\bfA_4, \bfA_5\right\rbrace$}
    node [block, fill=green!30, minimum width = 5em, below = 0.5 cm of blk6] (blk6_1) {$\left\lbrace\bfA_5, \bfA_0 \right\rbrace$}
    node [block, fill=green!30, minimum width = 5em, below = 0.5 cm of blk7] (blk7_1) {$\left\lbrace\bfA_0, \bfA_1 \right\rbrace$}
    node [block, fill=green!30, minimum width = 5em, below = 0.5 cm of blk8] (blk8_1) {$\left\lbrace\bfA_1, \bfA_2 \right\rbrace$}
    node [block, fill=green!30, minimum width = 5em, below = 0.5 cm of blk9] (blk9_1) {$\left\lbrace\bfA_2, \bfA_3\right\rbrace$}
    node [block, fill=green!30, minimum width = 5em, below = 0.5 cm of blk10] (blk10_1) {$\left\lbrace\bfA_3, \bfA_4 \right\rbrace$}
    node [block, fill=green!30, minimum width = 5em, below = 0.5 cm of blk11] (blk11_1) {$\left\lbrace\bfA_4, \bfA_5\right\rbrace$}
    node [block, fill=green!30, minimum width = 5em, below = 0.5 cm of blk12] (blk12_1) {$\left\lbrace\bfA_5, \bfA_0 \right\rbrace$}
    node [block, fill=green!30, minimum width = 5em, below = 0.5 cm of blk13] (blk13_1) {$\left\lbrace\bfA_0, \bfA_1 \right\rbrace$}
    node [block, fill=green!30, minimum width = 5em, below = 0.5 cm of blk14] (blk14_1) {$\left\lbrace\bfA_1, \bfA_2 \right\rbrace$}
    node [block, fill=green!30, minimum width = 5em, below = 0.5 cm of blk15] (blk15_1) {$\left\lbrace\bfA_2, \bfA_3\right\rbrace$}
    node [block, fill=green!30, minimum width = 5em, below = 0.5 cm of blk16] (blk16_1) {$\left\lbrace\bfA_3, \bfA_4 \right\rbrace$}
    node [block, fill=green!30, minimum width = 5em, below = 0.5 cm of blk17] (blk17_1) {$\left\lbrace\bfA_4, \bfA_5\right\rbrace$}
    node [block, fill=green!30, minimum width = 5em, below = 0.5 cm of blk18] (blk18_1) {$\left\lbrace\bfA_5, \bfA_0 \right\rbrace$}
    node [block, fill=green!30, minimum width = 5em, below = 0.5 cm of blk19] (blk19_1) {$\left\lbrace\bfA_0, \bfA_1 \right\rbrace$}
    node [block, fill=green!30, minimum width = 5em, below = 0.5 cm of blk20] (blk20_1) {$\left\lbrace\bfA_1, \bfA_2 \right\rbrace$}
    node [block, fill=green!30, minimum width = 5em, below = 0.5 cm of blk21] (blk21_1) {$\left\lbrace\bfA_2, \bfA_3\right\rbrace$}
   node [block, fill=green!30, minimum width = 5em, below = 0.5 cm of blk22] (blk22_1) {$\left\lbrace\bfA_3, \bfA_4 \right\rbrace$}
    node [block, fill=green!30, minimum width = 5em, below = 0.5 cm of blk23] (blk23_1) {$\left\lbrace\bfA_4, \bfA_5\right\rbrace$}
    node [block, fill=green!30, minimum width = 5em, below = 0.5 cm of blk24] (blk24_1) {$\left\lbrace\bfA_5, \bfA_0 \right\rbrace$}
    node [block, fill=green!30, minimum width = 5em, below = 0.5 cm of blk25] (blk25_1) {$\left\lbrace\bfA_0, \bfA_1\right\rbrace$}
    node [block, fill=green!30, minimum width = 5em, below = 0.5 cm of blk26] (blk26_1) {$\left\lbrace\bfA_1, \bfA_2 \right\rbrace$}
    node [block, fill=green!30, minimum width = 5em, below = 0.5 cm of blk27] (blk27_1) {$\left\lbrace\bfA_2, \bfA_3 \right\rbrace$}
    node [block, fill=orange!30, minimum width = 5em, below = 0.0005 cm of blk1_1] (blk1_2) {$\left\lbrace\bfB_0, \bfB_1 \right\rbrace$}
    node [block, fill=orange!30, minimum width = 5em, below = 0.0005 cm of blk2_1] (blk2_2) {$\left\lbrace\bfB_0, \bfB_1 \right\rbrace$}
    node [block, fill=orange!30, minimum width = 5em, below = 0.0005 cm of blk3_1] (blk3_2) {$\left\lbrace\bfB_0, \bfB_1\right\rbrace$}
    node [block, fill=orange!30, minimum width = 5em, below = 0.0005 cm of blk4_1] (blk4_2) {$\left\lbrace\bfB_0, \bfB_1 \right\rbrace$}
    node [block, fill=orange!30, minimum width = 5em, below = 0.0005 cm of blk5_1] (blk5_2) {$\left\lbrace\bfB_0, \bfB_1\right\rbrace$}
    node [block, fill=orange!30, minimum width = 5em, below = 0.0005 cm of blk6_1] (blk6_2) {$\left\lbrace\bfB_0, \bfB_1\right\rbrace$}
    node [block, fill=orange!30, minimum width = 5em, below = 0.0005 cm of blk7_1] (blk7_2) {$\left\lbrace\bfB_1, \bfB_2 \right\rbrace$}
    node [block, fill=orange!30, minimum width = 5em, below = 0.0005 cm of blk8_1] (blk8_2) {$\left\lbrace\bfB_1, \bfB_2 \right\rbrace$}
    node [block, fill=orange!30, minimum width = 5em, below = 0.0005 cm of blk9_1] (blk9_2) {$\left\lbrace\bfB_1, \bfB_2\right\rbrace$}
    node [block, fill=orange!30, minimum width = 5em, below = 0.0005 cm of blk10_1] (blk10_2) {$\left\lbrace\bfB_1, \bfB_2 \right\rbrace$}
    node [block, fill=orange!30, minimum width = 5em, below = 0.0005 cm of blk11_1] (blk11_2) {$\left\lbrace\bfB_1, \bfB_2\right\rbrace$}
    node [block, fill=orange!30, minimum width = 5em, below = 0.0005 cm of blk12_1] (blk12_2) {$\left\lbrace\bfB_1, \bfB_2\right\rbrace$}
    node [block, fill=orange!30, minimum width = 5em, below = 0.0005 cm of blk13_1] (blk13_2) {$\left\lbrace\bfB_2, \bfB_3\right\rbrace$}
    node [block, fill=orange!30, minimum width = 5em, below = 0.0005 cm of blk14_1] (blk14_2) {$\left\lbrace\bfB_2, \bfB_3\right\rbrace$}
    node [block, fill=orange!30, minimum width = 5em, below = 0.0005 cm of blk15_1] (blk15_2) {$\left\lbrace\bfB_2, \bfB_3\right\rbrace$}
    node [block, fill=orange!30, minimum width = 5em, below = 0.0005 cm of blk16_1] (blk16_2) {$\left\lbrace\bfB_2, \bfB_3\right\rbrace$}
    node [block, fill=orange!30, minimum width = 5em, below = 0.0005 cm of blk17_1] (blk17_2) {$\left\lbrace\bfB_2, \bfB_3\right\rbrace$}
    node [block, fill=orange!30, minimum width = 5em, below = 0.0005 cm of blk18_1] (blk18_2) {$\left\lbrace\bfB_2, \bfB_3\right\rbrace$}
    node [block, fill=orange!30, minimum width = 5em, below = 0.0005 cm of blk19_1] (blk19_2) {$\left\lbrace\bfB_3, \bfB_0\right\rbrace$}
    node [block, fill=orange!30, minimum width = 5em, below = 0.0005 cm of blk20_1] (blk20_2) {$\left\lbrace\bfB_3, \bfB_0\right\rbrace$}
    node [block, fill=orange!30, minimum width = 5em, below = 0.0005 cm of blk21_1] (blk21_2) {$\left\lbrace\bfB_3, \bfB_0\right\rbrace$}
    node [block, fill=orange!30, minimum width = 5em, below = 0.0005 cm of blk22_1] (blk22_2) {$\left\lbrace\bfB_3, \bfB_0\right\rbrace$}
    node [block, fill=orange!30, minimum width = 5em, below = 0.0005 cm of blk23_1] (blk23_2) {$\left\lbrace\bfB_3, \bfB_0\right\rbrace$}
    node [block, fill=orange!30, minimum width = 5em, below = 0.0005 cm of blk24_1] (blk24_2) {$\left\lbrace\bfB_3, \bfB_0\right\rbrace$}
    node [block, fill=orange!30, minimum width = 5em, below = 0.0005 cm of blk25_1] (blk25_2) {$\left\lbrace\bfB_0, \bfB_1 \right\rbrace$}
    node [block, fill=orange!30, minimum width = 5em, below = 0.0005 cm of blk26_1] (blk26_2) {$\left\lbrace\bfB_0, \bfB_1 \right\rbrace$}
    node [block, fill=orange!30, minimum width = 5em, below = 0.0005 cm of blk27_1] (blk27_2) {$\left\lbrace\bfB_0, \bfB_1\right\rbrace$}

;
\draw[->](blk1) -- node{} (blk1_1);
\draw[->](blk2) -- node{} (blk2_1);
\draw[->](blk3) -- node{} (blk3_1);
\draw[->](blk4) -- node{} (blk4_1);
\draw[->](blk5) -- node{} (blk5_1);
\draw[->](blk6) -- node{} (blk6_1);
\draw[->](blk7) -- node{} (blk7_1);
\draw[->](blk8) -- node{} (blk8_1);
\draw[->](blk9) -- node{} (blk9_1);
\draw[->](blk10) -- node{} (blk10_1);
\draw[->](blk11) -- node{} (blk11_1);
\draw[->](blk12) -- node{} (blk12_1);
\draw[->](blk13) -- node{} (blk13_1);
\draw[->](blk14) -- node{} (blk14_1);
\draw[->](blk15) -- node{} (blk15_1);
\draw[->](blk16) -- node{} (blk16_1);
\draw[->](blk17) -- node{} (blk17_1);
\draw[->](blk18) -- node{} (blk18_1);
\draw[->](blk19) -- node{} (blk19_1);
\draw[->](blk20) -- node{} (blk20_1);
\draw[->](blk21) -- node{} (blk21_1);
\draw[->](blk22) -- node{} (blk22_1);
\draw[->](blk23) -- node{} (blk23_1);
\draw[->](blk24) -- node{} (blk24_1);
\draw[->](blk25) -- node{} (blk25_1);
\draw[->](blk26) -- node{} (blk26_1);
\draw[->](blk27) -- node{} (blk27_1);

\end{tikzpicture}
}
\caption{\small Submatrix allocation according to Alg. \ref{Alg:New_matmat} when $n = 27$ and $s = 3$, with $\gamma_A = \frac{1}{6}$ and $\gamma_B = \frac{1}{4}$. The weights of the submatrices are $\omega_A = \omega_B = 2$. Any assignment $\{\bfA_i, \bfA_j\}$ or $\{\bfB_i, \bfB_j\}$ indicates a random linear combination of the corresponding submatrices where the coefficients are chosen i.i.d. at random from a continuous distribution.}
\label{fig:matmat27}
    \vspace{-0.8 cm}
\end{figure}

\label{ex:matmat}
Consider the example in Fig. \ref{fig:matmat27} where $n = 27, \gamma_A = 1/6$ and $\gamma_B = 1/4$. So, we partition $\bfA$ and $\bfB$ into $k_A = 6$ and $k_B = 4$ block-columns, respectively. In each node, we assign one coded submatrix from $\bfA$ and one from $\bfB$ which are linear combinations of $\omega_A = \omega_B = 2$ uncoded submatrices with coefficients chosen i.i.d. at random from a continuous distribution. It can be verified that this scheme is resilient to $s = n - k_A k_B = 3$ stragglers. In what follows, we will use this example several times to describe different structures and properties of our scheme.
\end{example}

\subsubsection{Structure of the Job Assignment}
To describe the structure of the proposed scheme, first we partition the worker nodes into $k_A$ disjoint classes, denoted by $\calM_i$'s, where any $\calM_i$ consists of all the worker nodes $W_j$'s if $j \equiv i (\textrm{mod} \; k_A)$. In other words, $\calM_i = \left\lbrace W_i, \, W_{k_A+i}, \, W_{2 k_A+i}, \, \dots  \right\rbrace$, for $i = 0, 1, 2, \dots, k_A - 1$. 
 Since $n = k_A k_B + s \leq k_A (k_B + 1)$ (as $s \leq k_A$), we can say $|\calM_i|$ is either $k_B$ or $k_B + 1$. Moreover, according to our proposed scheme, the participating submatrices of $\bfA$ are the same over all the worker nodes in any $\calM_i$. For instance, in Example \ref{ex:matmat}, we have $\calM_0 =  \left\lbrace W_0, \, W_6, \, W_{12}, \, W_{18}, \, W_{24}   \right\rbrace$, and random linear combinations of $\bfA_0$ and $\bfA_1$ are assigned to all the corresponding worker nodes. At this point, we define a set, $\calD^A_i = \left\lbrace \bfA_i, \bfA_{i+1}, \dots, \bfA_{i + \omega_A - 1} \right\rbrace$, which consists of the participating submatrices of $\bfA$ corresponding to worker node set $\calM_i$, where the indices are reduced modulo $k_A$. Now we state the following claim which gives a lower bound for the cardinality of the union of any arbitrary number of $\calD^A_i$'s.

\begin{claim}
\label{claim:cyclicA}
Consider any $q$ sets $\calD^A_i$'s, $q \leq k_A - \omega_A + 1$, denoted w.l.o.g.,  $\bar{\calD}^A_j$, $0 \leq j \leq q - 1$ arbitrarily. Then $\abs*{\bigcup\limits_{j = 0}^{q-1} \bar{\calD}^A_j} \geq \omega_A + q - 1$.
\end{claim}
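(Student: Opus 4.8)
The plan is to exploit the fact that each set $\calD^A_i$ consists of $\omega_A$ \emph{consecutive} indices on the cyclic group $\mathbb{Z}_{k_A}$, so that the union of $q$ of the $\bar{\calD}^A_j$'s is nothing but a union of $q$ cyclic arcs of common length $\omega_A$. I would split the argument into two cases according to whether or not these $q$ arcs already cover all of $\{\bfA_0,\dots,\bfA_{k_A-1}\}$. If they do, the union has size $k_A$, and the hypothesis $q \le k_A-\omega_A+1$ gives exactly $k_A \ge \omega_A+q-1$, so the bound holds; this saturated case is the \emph{only} place the restriction on $q$ is needed, and it is precisely what prevents the desired count from overshooting $k_A$.

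The substantive case is when $\bigcup_{j=0}^{q-1}\bar{\calD}^A_j$ is a proper subset of the index set. Then some index $u$ is left uncovered, and since no chosen arc contains $u$, no arc wraps across position $u$. I would then ``cut'' the cycle at $u$, relabeling the indices $u+1,u+2,\dots,u+k_A$ (modulo $k_A$) as a linear order $1,2,\dots,k_A$. Under this relabeling every chosen $\bar{\calD}^A_j$ becomes a genuine, non-wrapping interval of $\omega_A$ consecutive integers, which removes the cyclic bookkeeping entirely.

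With the arcs now linearized, I would order the $q$ chosen sets by their starting index, say $a_0 < a_1 < \cdots < a_{q-1}$ (the starts are distinct because the $\bar{\calD}^A_j$ are distinct and $\omega_A < k_A$). The first interval already contributes its full $\omega_A$ elements. For each subsequent $j$, its right endpoint $a_j+\omega_A-1$ strictly exceeds the right endpoint of every earlier interval, so that particular index cannot lie in any of $\bar{\calD}^A_0,\dots,\bar{\calD}^A_{j-1}$; hence each of the remaining $q-1$ arcs contributes at least one new index. Summing, $\left|\bigcup_{j=0}^{q-1}\bar{\calD}^A_j\right| \ge \omega_A+(q-1)$, which is the claim. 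This incremental ``each new arc adds a fresh rightmost endpoint'' count is exactly the reasoning already used in the proof of Theorem \ref{thm:matvec} and in the cyclic scheme of \cite{das2020coded}.

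I expect the only genuine subtlety to be the careful handling of the cyclic wraparound: one must verify that cutting at an uncovered index $u$ simultaneously linearizes \emph{every} chosen arc (which holds precisely because none of them contains $u$), and that after relabeling the distinct starting indices remain distinct, so the strict right-endpoint comparison is valid. Once the cut is justified, the monotone-endpoint counting is routine, and the bound $q \le k_A-\omega_A+1$ enters only to close the saturated case.
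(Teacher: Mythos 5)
Your proof is correct, but it is structured differently from the paper's. The paper disposes of the claim in two lines by citing an external result: the ``cyclic scheme'' analysis in Appendix C of \cite{das2020coded}, which asserts that the union of any $L$ of the cyclically shifted sets has cardinality at least $\min(\omega_A + L - 1, k_A)$; the hypothesis $q \leq k_A - \omega_A + 1$ is then used only to drop the $\min$. You instead prove that underlying combinatorial fact from scratch: handle the saturated case (union $= k_A$) via the hypothesis on $q$, and otherwise cut the cycle at an uncovered index $u$ --- which is legitimate precisely because no chosen arc contains $u$, so every arc becomes a genuine interval after the rotation-relabeling --- then sort the $q$ equal-length intervals by their (necessarily distinct, since $\omega_A < k_A$ per Alg.~\ref{Alg:New_matmat}) starting points and observe that each successive interval's right endpoint strictly exceeds all earlier right endpoints, contributing a fresh element, for a total of $\omega_A + (q-1)$. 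All steps check out, including the subtle one you flag: a set of $\omega_A$ consecutive residues modulo $k_A$ avoiding $u$ maps to a non-wrapping interval under the cut, so the linearization is simultaneous for all $q$ arcs. What your route buys is a self-contained argument that makes explicit where the constraint $q \leq k_A - \omega_A + 1$ actually bites (only the saturated case); what the paper's route buys is brevity, at the cost of delegating the entire counting argument to a reference the reader must consult.
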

\begin{proof}
From the definition of $\calD^A_i$ above, we can say that the participating uncoded  submatrices of $\bfA$ are shifted in a cyclic fashion within $\calD^A_0, \calD^A_1, \dots, \calD^A_{k_A - 1}$. Thus, according to the proof of cyclic scheme in Appendix C in \cite{das2020coded}, the minimum number of total constituent submatrices of $\bfA$ within any $L$ of $\calD_i^A$'s is $\textrm{min}\, (\omega_A + L - 1, k_A)$. Now consider any $q$ arbitrary $\bar{\calD}^A_j$'s, $0 \leq j \leq q - 1$. Thus, $\abs*{\bigcup\limits_{j = 0}^{q-1} \bar{\calD}^A_j} \geq \, \textrm{min}\, (\omega_A + q - 1, k_A) = \omega_A + q - 1$, since $q \leq k_A - \omega_A + 1$.
\end{proof}

\begin{example}
Consider Example \ref{ex:matmat} in Fig. \ref{fig:matmat27} where $k_A = 6$ and $\omega_A = 2$. Now, choose any arbitrary $q \leq 5$ sets of $\calD^A_i$'s. For example, we choose $q = 3$ of $\calD_i^A$'s, such as  $\calD^A_0$, $\calD^A_1$ and $\calD^A_3$ (w.l.o.g. we can denote them as $\bar{\calD}^A_0, \bar{\calD}^A_1$ and $\bar{\calD}^A_2$). Then, according to Claim \ref{claim:cyclicA} we have, $\abs*{\calD^A_0 \cup \calD^A_1 \cup \calD^A_3} = \abs*{\bar{\calD}^A_0 \cup \bar{\calD}^A_1 \cup \bar{\calD}^A_2} \geq \omega_A + 2 = 4$. Thus, the total number of constituent submatrices of $\bfA$ within $\calM_0$, $\calM_1$ and $\calM_3$ is lower bounded by $4$. 
It can be verified from Fig. \ref{fig:matmat27} that 
the exact number is $5$, where $\calD^A_0 = \left\lbrace \bfA_0, \bfA_1 \right\rbrace$, $\calD^A_1 = \left\lbrace \bfA_1, \bfA_2 \right\rbrace$ and $\calD^A_3 = \left\lbrace \bfA_3, \bfA_4 \right\rbrace$.
\end{example}

Now, in our approach, according to Alg. \ref{Alg:New_matmat}, the participating submatrices of $\bfB$ are shifted in a cyclic fashion over the worker nodes of any $\calM_i$. For instance, in Example \ref{ex:matmat}, the participating submatrices, $\bfB_0, \bfB_1, \bfB_2$ and $\bfB_3$, are shifted in a cyclic fashion within the worker nodes of $\calM_0$, i.e., $W_0, W_6, W_{12}, W_{18}$ and $W_{24}$. Next, in the following claim, we find the minimum number of participating unknowns (in the form of  $\bfA_u^T \bfB_v$) within any $\delta$ worker nodes from any $\calM_q$. Now we state the following claim.

\begin{claim}
\label{claim:cyclicB}
Consider $\calM_q$, $0 \leq q \leq k_A - 1$. Denote the minimum of total number of participating unknowns (in the form of  $\bfA_i^T \bfB_j$) within any $\delta$ worker nodes from $\calM_q$ by $\rho$. Then
\begin{align*}
    \rho = \begin{cases}
    \omega_A \times \omega_B & \; \text{if} \; \delta = 1 \; ; \\
      \omega_A \times \textrm{min} \left( \omega_B + \delta - 1, k_B\right) & \; \text{if} \; \abs*{\calM_0} = k_B \; \text{and} \; \delta \geq 2 \; ; \\
    \omega_A \times \textrm{min} \left( \omega_B + \delta - 2, k_B\right) & \; \text{if} \; \abs*{\calM_0} = k_B + 1 \; \text{and} \; \delta \geq 2 \, .
    \end{cases}   
\end{align*}
\end{claim}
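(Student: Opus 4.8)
The plan is to exploit the product structure of the participating unknowns within a single class $\calM_q$. The key observation is that every worker in $\calM_q$ is assigned the \emph{same} $\bfA$-combination drawn from $\calD^A_q = \{\bfA_q, \dots, \bfA_{q+\omega_A-1}\}$, a fixed set of exactly $\omega_A$ submatrices of $\bfA$, while the $\bfB$-combinations shift cyclically: the worker occupying the $\ell$-th position of $\calM_q$, namely $W_{\ell k_A + q}$, uses the $\bfB$-set $S_\ell = \{\bfB_\ell, \dots, \bfB_{\ell + \omega_B - 1}\}$ with indices reduced modulo $k_B$. Consequently, the collection of unknowns $\bfA_u^T \bfB_v$ appearing across any chosen set of workers is precisely the Cartesian product of $\calD^A_q$ with the union of the corresponding $\bfB$-sets. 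Hence the count factors as $\rho = \omega_A \cdot \big|\bigcup S_\ell\big|$, and the whole problem reduces to minimizing $\big|\bigcup S_\ell\big|$ over the $\delta$ chosen positions. First I would establish this factorization, which immediately yields $\rho = \omega_A \omega_B$ for $\delta = 1$, since a single $S_\ell$ has exactly $\omega_B$ elements (as $\omega_B < k_B$).

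For $\delta \ge 2$ I would invoke the cyclic counting result already used for Claim \ref{claim:cyclicA} (the cyclic scheme in Appendix C of \cite{das2020coded}): among any $L$ of the cyclic shifts $S_0, S_1, \dots$ of a length-$\omega_B$ window over $k_B$ positions, the union has at least $\textrm{min}(\omega_B + L - 1, k_B)$ elements, with the bound tight. The two remaining cases then differ only in how many \emph{distinct} cyclic shifts are available among the $\delta$ chosen positions. When $\abs*{\calM_q} = k_B$, the positions $\ell$ run over $0, 1, \dots, k_B - 1$, so all $\delta$ chosen $\bfB$-sets are distinct cyclic shifts; applying the cyclic bound with $L = \delta$ gives $\big|\bigcup S_\ell\big| = \textrm{min}(\omega_B + \delta - 1, k_B)$, and multiplying by $\omega_A$ delivers the second case.

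The subtle case is $\abs*{\calM_q} = k_B + 1$, where I expect the main obstacle to lie. Here the positions run over $0, 1, \dots, k_B$, but reducing indices modulo $k_B$ makes position $k_B$ coincide with position $0$; that is, $S_{k_B} = S_0$, so exactly one pair of workers carries identical $\bfB$-sets (and it is \emph{exactly} one pair, since contiguous windows of length $\omega_B < k_B$ with distinct starting points modulo $k_B$ are distinct). The worst case that minimizes the union is therefore obtained by including this duplicated pair among the $\delta$ chosen workers, leaving only $\delta - 1$ distinct cyclic shifts; applying the cyclic bound with $L = \delta - 1$ yields $\big|\bigcup S_\ell\big| = \textrm{min}(\omega_B + \delta - 2, k_B)$ and hence $\rho = \omega_A \cdot \textrm{min}(\omega_B + \delta - 2, k_B)$. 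To make this rigorous I would argue both directions: the duplicated-pair selection shows the stated value is achievable (an upper bound on the minimum), while the uniqueness of the coincidence shows that no selection can collapse more than one pair, so any $\delta$ positions contain at least $\delta - 1$ distinct shifts (the matching lower bound). Finally I would check the boundary bookkeeping, namely that $\delta - 1$ remains in the regime where the cyclic lemma applies and that the saturation at $k_B$ is treated consistently across both branches.
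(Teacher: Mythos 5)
Your proof is correct and takes essentially the same route as the paper's: factor the participating unknowns as $\omega_A$ times the size of the union of the cyclically shifted $\bfB$-windows, invoke the cyclic counting result from Appendix C of \cite{das2020coded}, and account for the single repeated window $S_0 = S_{k_B}$ when $\abs*{\calM_q} = k_B + 1$. The only difference is that you spell out the two-directional argument (achievability via the duplicated pair plus the matching lower bound from the uniqueness of that coincidence) that the paper's shorter proof asserts without detail.
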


\begin{proof}
Consider the case when $\abs*{\calM_q} = k_B$. The participating uncoded submatrices of $\bfB$ are shifted in a cyclic fashion within the worker nodes of $\calM_q$. Thus according to the proof of cyclic scheme in Appendix C in \cite{das2020coded}, the minimum number of constituent submatrices of $\bfB$ within any $\delta$ worker nodes of $\calM_q$ is $\textrm{min}\, (\omega_B + \delta - 1, k_B)$ if $\delta \geq 1$. Now the coded submatrices of $\bfB$ are multiplied by linear combinations of the same $\omega_A$ submatrices, thus the minimum of total number of participating unknowns is $\rho = \omega_A \times \textrm{min} \left( \omega_B + \delta - 1, k_B\right) $. It subsumes the case, $\delta = 1$.

Next, when $\abs*{\calM_q} = k_B + 1$, there are two worker nodes within $\calM_q$ where the participating uncoded submatrices of $\bfB$ are the same. Thus in that case, $\rho = \omega_A \times \omega_B$ for $\delta = 1, 2$. Other than that, similar to the previous case, the participating uncoded submatrices of $\bfB$ are shifted in a cyclic fashion within the worker nodes of $\calM_q$. Thus, $\rho = \omega_A \times \textrm{min} \left( \omega_B + \delta - 2, k_B\right)$, if $\delta \geq 2$.
\end{proof}




\subsubsection{Rearrangement of ${\calM}_i$'s}
\label{subsec:rearrange}
Before moving forward to the necessary theorem and corresponding claims, we discuss a pre-processing step that rearranges the $\calM_i$'s. Choose any arbitrary $m$ worker nodes ($m \leq k_A k_B$), and assume that $\delta_i$ worker nodes have been chosen from $\calM_i$, for $0 \leq i \leq k_A - 1$, so that $\sum_{i = 0}^{k_A - 1} \delta_i  = m$. Now, we rearrange the ${\calM}_i$'s in the following process.

(i) We rearrange the $\delta_i$'s in a decreasing sequence so that $\tilde{\delta}_0 \geq \tilde{\delta}_1 \geq \tilde{\delta}_2 \geq \dots \geq \tilde{\delta}_{k_A - 1}$ and rename the corresponding $\calM_i$'s as $\tilde{\calM}_i$'s so that $\tilde{\delta}_i$ worker nodes have been  chosen from $\tilde{\calM}_i$. 

(ii) If multiple $\delta_i$'s are equal, we place the $\calM_i$'s first which have smaller cardinality. In other words, if $\delta_i = \delta_j$ for any $i,j \leq k_A - 1$, where $\abs*{\calM_i} = k_B$ and $\abs*{\calM_j} = k_B + 1$, then we place $\calM_i$ first (i.e., rename as $\tilde{\calM}_k$ for some $k$) and then we place $\calM_j$ (rename as $\tilde{\calM}_{k+1}$).

Now, we denote $\rho_0$ as the minimum of total number of participating unknowns (in the form of  $\bfA_i^T \bfB_j$) within the $\tilde{\delta}_0 $ worker nodes of $\tilde{\calM}_0$. Thus, according to Claim \ref{claim:cyclicB}, 
\begin{align}
    \label{eq:rhozero}
    \rho_0 = \begin{cases}
          \omega_A \times \omega_B & \; \text{if} \; \tilde{\delta}_0 = 1 \; ; \\
      \omega_A \times \textrm{min} \left( \omega_B + \tilde{\delta}_0 - 1, k_B\right) & \; \text{if} \; \abs*{\calM_q} = k_B \; \text{and} \; \tilde{\delta}_0 \geq 2 \; ; \\
    \omega_A \times \textrm{min} \left( \omega_B + \tilde{\delta}_0 - 2, k_B\right) & \; \text{if} \; \abs*{\calM_q} = k_B + 1 \; \text{and} \; \tilde{\delta}_0 \geq 2 \, .
    \end{cases}  
\end{align}
After that, we move to $\tilde{\calM}_1, \tilde{\calM}_2, \dots, \tilde{\calM}_{k_A - \omega_A}$, sequentially, to find the number of additional participating unknowns within the corresponding $\tilde{\delta}_i $ worker nodes of $\tilde{\calM}_i$, where $1 \leq i \leq k_A - 1$. We denote $\rho_i$ as the minimum number of such additional participating unknowns in $\tilde{\calM}_i$. 

Here, according to Claim \ref{claim:cyclicA}, $\abs*{\bigcup\limits_{j = 0}^{0} \bar{\calD}^A_j} \geq \omega_A$ and $\abs*{\bigcup\limits_{j = 0}^{1} \bar{\calD}^A_j} \geq \omega_A + 1$. Thus,  there will be at least one additional participating submatrix of $\bfA$ in $\tilde{\calM}_0 \cup \tilde{\calM}_1$ in comparison to $\tilde{\calM}_0$, and the property will continue to hold until we consider the set $\tilde{\calM}_0 \cup \tilde{\calM}_1 \cup \dots \cup \tilde{\calM}_0 \cup \tilde{\calM}_{k_A - \omega_A}$. Now, since the submatrices of $\bfB$ (which will be multiplied by the additional submatrix of $\bfA$) are shifted in a cyclic fashion within any $\tilde{\calM}_i$, so we can write
\begin{align}
\label{eq:rhobigger}
    \rho_i = \begin{cases}
     \omega_B & \; \text{if} \; \tilde{\delta}_i = 1 \; ; \\
     \textrm{min} \left( \omega_B + \tilde{\delta}_i - 1, k_B\right)  & \; \text{if} \; \abs*{\tilde{\calM}_i} = k_B \; \text{and} \; \tilde{\delta}_i \geq 2 \; \\
     \textrm{min} \left( \omega_B + \tilde{\delta}_i - 2, k_B\right)  & \; \text{if} \; \abs*{\tilde{\calM}_i} = k_B + 1 \; \text{and} \; \tilde{\delta}_i \geq 2 \, 
    \end{cases}   
\end{align} for $1 \leq i \leq k_A - \omega_A$. Note that, $\rho_i$ has a trivial lower bound, {\it zero}, when $k_A - \omega_A + 1 \leq i \leq k_A - 1$. 

Now we state the following corollary which is a special case of Lemma \ref{lem:no_of_unknowns}  stated later in this section. The lemma (hence, the corollary) provides a lower bound on the minimum number of participating unknowns (in the form of $\bfA^T_u \bfB_v$) in the equations from any arbitrary $m$ nodes. Here, we assume that $k_A \geq k_B$. Note that, if $k_B > k_A$, we can compute $\bfA^T \bfB$ as $\left(\bfB^T \bfA\right)^T$ without any additional computational cost. Thus, we can assume $k_A \geq k_B$ without loss of generality.

\begin{corollary}
\label{cor:toy_example}
(Corollary of upcoming Lemma \ref{lem:no_of_unknowns}) For any arbitrary $k_A \geq 3$ and $k_B \geq 3$ (where $k_A \geq k_B$ without loss of generality), let us assign the jobs to $n = k_A k_B + s$ (where $s \leq 3$) worker nodes according to Alg. \ref{Alg:New_matmat} using $\omega_A = \omega_B = 2$. Then the total number of participating unknowns (in the form of $\bfA_i^T \bfB_j$) within any $m$ worker nodes ($m \leq k_A k_B$) is at least $m$.
\end{corollary}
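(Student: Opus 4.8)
The plan is to invoke the rearrangement procedure of Section~\ref{subsec:rearrange} and reduce the statement to a single inequality in the reordered loads $\tilde{\delta}_i$. Given any set of $m\le k_Ak_B$ workers, let $\delta_i$ be the number of them in class $\calM_i$ (so $\sum_i\delta_i=m$) and reorder the classes into $\tilde{\calM}_0,\tilde{\calM}_1,\dots$ with $\tilde{\delta}_0\ge\tilde{\delta}_1\ge\cdots\ge\tilde{\delta}_{k_A-1}$, breaking ties by placing the cardinality-$k_B$ classes ahead of the cardinality-$(k_B+1)$ ones. By Claim~\ref{claim:cyclicA}, each of $\tilde{\calM}_1,\dots,\tilde{\calM}_{k_A-\omega_A}$ introduces a fresh submatrix of $\bfA$, so Claim~\ref{claim:cyclicB} together with \eqref{eq:rhozero}--\eqref{eq:rhobigger} yields the lower bound $\rho_0+\sum_{i=1}^{k_A-\omega_A}\rho_i$ on the number of participating unknowns. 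Since $\omega_A=\omega_B=2$ gives $k_A-\omega_A=k_A-2$, it suffices to prove $\rho_0+\sum_{i=1}^{k_A-2}\rho_i\ge m$.

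Next I would specialize \eqref{eq:rhozero}--\eqref{eq:rhobigger} to $\omega_A=\omega_B=2$ and compare each $\rho_i$ with the corresponding $\tilde{\delta}_i$. A short case check gives two facts: (a) $\rho_0-\tilde{\delta}_0\ge 2$ in every case, and $\rho_0-\tilde{\delta}_0=k_B-1$ precisely when $\tilde{\calM}_0$ is a cardinality-$(k_B+1)$ class that is fully selected ($\tilde{\delta}_0=k_B+1$); (b) for $1\le i\le k_A-2$ one has $\rho_i\ge\tilde{\delta}_i$, with the lone exception $|\tilde{\calM}_i|=k_B+1$ and $\tilde{\delta}_i=k_B+1$, where $\rho_i=\tilde{\delta}_i-1$. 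Thus only a fully-selected cardinality-$(k_B+1)$ class placed at a position $i\ge1$ produces a deficit, and that deficit is exactly one.

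Since $n=k_Ak_B+s$ with $s\le 3$, exactly $s\le 3$ classes have cardinality $k_B+1$. If a deficit occurs at some $i\ge1$ then $\tilde{\delta}_i=k_B+1$, and because the $\tilde{\delta}$'s are sorted this forces $\tilde{\delta}_0=k_B+1$ as well; hence $\tilde{\calM}_0$ is itself a fully-selected cardinality-$(k_B+1)$ class, contributing the larger surplus $k_B-1\ge 2$. Counting distinct fully-selected cardinality-$(k_B+1)$ classes, the number $d$ of deficit positions satisfies $d\le s-1\le 2\le k_B-1$. When at least one class is untouched (so the dropped position $k_A-1$ carries $\tilde{\delta}_{k_A-1}=0$), all deficits are absorbed by the position-$0$ surplus (which equals $k_B-1\ge d$ whenever $d\ge1$), and $\rho_0+\sum_{i=1}^{k_A-2}\rho_i\ge m$ is immediate.

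The main obstacle is the remaining case in which every class is touched, i.e.\ $\tilde{\delta}_{k_A-1}\ge1$: the bound drops position $k_A-1$ entirely, forfeiting its load $\tilde{\delta}_{k_A-1}$ in addition to the deficits, so one must establish $(\rho_0-\tilde{\delta}_0)+\sum_{i=1}^{k_A-2}(\rho_i-\tilde{\delta}_i)\ge\tilde{\delta}_{k_A-1}$. Here I would keep (not discard) the unit surpluses produced by the non-full cardinality-$k_B$ classes and use the hypothesis $m\le k_Ak_B$ decisively: any profile with enough fully-selected cardinality-$(k_B+1)$ classes and a large tail $\tilde{\delta}_{k_A-1}$ to break the inequality would force $\sum_i\tilde{\delta}_i>k_Ak_B$ and is therefore excluded. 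The surviving profiles can then be verified using $k_B\ge3$ and $s\le3$. I expect this bound to be tight---profiles such as $\tilde{\delta}=(k_B+1,\dots,k_B+1,1)$ attain equality---so the deficit/surplus bookkeeping must be carried out exactly rather than by throwing away slack, which is the delicate part of the argument.
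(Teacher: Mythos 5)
Your setup reproduces the paper's own strategy almost verbatim: the rearrangement of Sec.~\ref{subsec:rearrange}, the reduction to $\sum_{i=0}^{k_A-2}\rho_i\ge\sum_{i=0}^{k_A-1}\tilde{\delta}_i$ (the paper's \eqref{eq:provehall}), and a case analysis on the number of fully selected cardinality-$(k_B+1)$ classes driven by $m\le k_Ak_B$, $s\le 3$, and the tie-breaking rule. Your facts about where deficits can occur are correct: specializing \eqref{eq:rhozero}--\eqref{eq:rhobigger} to $\omega_A=\omega_B=2$, the only position-$i\ge 1$ deficit is a fully selected $(k_B+1)$-class (deficit exactly $1$), and $\tilde{\delta}_0=k_B+1$ gives surplus exactly $k_B-1$; your bound $d\le s-1\le 2\le k_B-1$ and the disposal of the case $\tilde{\delta}_{k_A-1}=0$ are also fine. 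However, there is a genuine gap, in two places. First, your fact~(a) as stated ($\rho_0-\tilde{\delta}_0\ge 2$ in every case) is too weak to carry even the simplest tight profile: take $m=k_Ak_B$ with $\tilde{\delta}_i=k_B$ for every $i$. Then there are no deficits and no ``non-full cardinality-$k_B$'' classes contributing unit surpluses, the forfeited tail is $\tilde{\delta}_{k_A-1}=k_B\ge 3$, and your toolkit supplies only a surplus of $2$ at position $0$. What is actually needed, and what the paper uses in its Case~1 and Case~2a, is the sharper inequality $\rho_0\ge 2\tilde{\delta}_0\ge\tilde{\delta}_0+\tilde{\delta}_{k_A-1}$ whenever $\tilde{\delta}_0\le k_B$, i.e., the position-$0$ surplus is at least $\tilde{\delta}_0$, not merely at least $2$.

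Second, and more importantly, the case you correctly identify as the crux (every class touched, $\tilde{\delta}_{k_A-1}\ge 1$, with deficits present) is left as a plan rather than a proof: ``any profile \dots would force $\sum_i\tilde{\delta}_i>k_Ak_B$ and is therefore excluded; the surviving profiles can then be verified'' is exactly the part the paper expends its Cases~2b and 2c on, and those cases contain nontrivial ideas your sketch never surfaces. In particular, in the paper's Case~2b with $\tilde{\delta}_{k_A-1}=\tilde{\delta}_{k_A-2}=k_B-1$, the inequality closes only because the tie-breaking rule~(ii) forces $\tilde{\calM}_{k_A-2}$ to have cardinality $k_B$, so that $\rho_{k_A-2}=k_B$ yields the one extra unit of surplus needed; a cardinality-$(k_B+1)$ class in that slot would give $\rho_{k_A-2}=k_B-1$ and the bookkeeping would fail. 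Similarly, Case~2c requires the counting step that $\tilde{\delta}_{k_A-1}=k_B-1$ forces $\tilde{\delta}_{k_A-2}=\tilde{\delta}_{k_A-3}=k_B-1$, giving two surplus-$1$ positions. You anticipated the tightness (your equality profile $(k_B+1,k_B+1,k_B+1,1)$ is essentially correct for $k_A=k_B=4$) and you stated the tie-break in your setup, but you never deploy it, and since the inequality holds with equality in these profiles, no amount of slack-discarding can substitute for carrying out this finite but delicate verification. As it stands the proposal is an accurate reconstruction of the paper's framework with its decisive step missing.
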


\begin{proof}
First, we choose any arbitrary $m \leq k_A k_B$ worker nodes out of all $n$ nodes. As discussed in the preprocessing step above, we choose $\tilde{\delta}_i$ worker nodes from the set $\tilde{\calM}_i$, thus, $\sum_{i=0}^{k_A - 1} \tilde{\delta}_i = m$. Moreover, the minimum number of participating unknowns in $\tilde{\calM}_0$ is denoted by $\rho_0$ as given by \eqref{eq:rhozero}. The minimum number of additional participating unknowns from $\tilde{\calM}_i$ is denoted by $\rho_i$ and given by \eqref{eq:rhobigger} when $1 \leq i \leq k_A - \omega_A$ or trivially lower bounded by {\it zero} for other values of $i$ when $k_A - \omega_A + 1 \leq i \leq k_A - 1$. Thus, the minimum number of participating unknowns in these 
$m$ worker nodes is at least $\sum_{i=0}^{k_A - 1} {\rho}_i = \sum_{i=0}^{k_A - \omega_A} {\rho}_i + \sum_{i=k_A - \omega_A + 1}^{k_A - 1} {\rho}_i \geq \sum_{i=0}^{k_A - \omega_A} {\rho}_i$.
In this corollary, $\omega_A = 2$. Thus, in order to prove the corollary, it is sufficient to prove that
\begin{align}
\label{eq:provehall}
    \sum\limits_{i=0}^{k_A - 2} \rho_i \; \geq \; \sum\limits_{i = 0}^{k_A - 1} \tilde{\delta}_i.
\end{align} 

Now we carry out the following exhaustive case analysis and we show that for every case, \eqref{eq:provehall} is true. Here, according to the pre-processing step (i) for the rearrangement of $\calM_i$'s in Sec. \ref{subsec:rearrange}, we have $\tilde{\delta}_0 \geq \tilde{\delta}_1 \geq \tilde{\delta}_2 \geq \dots \geq \tilde{\delta}_{k_A - 1}$. Since $n = k_A k_B + s$ and $s \leq 3 \leq k_A$, any $\tilde{\delta}_i$ can have a value at most $k_B + 1$. The cases are based on different values of $\tilde{\delta}_0$: in case 1, we assume $1 \leq \tilde{\delta}_0 \leq k_B$ and in case 2, we assume that $\tilde{\delta}_0 = k_B + 1$. Moreover, since $s \leq 3$ and $m \leq k_A k_B$, there can be at most three $\tilde{\delta}_i$'s which can have value $k_B + 1$. We discuss those in Cases 2a, 2b and 2c, respectively.

{\it Case 1}: $\mathbf{1 \leq \tilde{\delta}_0 \leq k_B}$. If, $\tilde{\delta}_0 = 1$ or $1 < \tilde{\delta}_0 \leq k_B$, since $\omega_A = \omega_B = 2$, according to \eqref{eq:rhozero}, we have, $\rho_0 \geq 2 \tilde{\delta}_0 \geq \tilde{\delta}_0 + \tilde{\delta}_{k_A - 1}$, as $\tilde{\delta}_{k_A - 1} \leq \tilde{\delta}_0$. Now, according to \eqref{eq:rhobigger}, for $1 \leq i \leq k_A - 2$, we have $\rho_i \geq \tilde{\delta}_i$. Thus, $
    \sum\limits_{i=0}^{k_A - 2} \rho_i = \rho_0 + \sum\limits_{i=1}^{k_A - 2} \rho_i \geq  \tilde{\delta}_0 + \tilde{\delta}_{k_A - 1} + \sum\limits_{i=1}^{k_A - 2} \tilde{\delta}_i = \sum\limits_{i = 0}^{k_A - 1} \tilde{\delta}_i$ ; hence, \eqref{eq:provehall} is true.

{\it Case 2}: $\mathbf{\tilde{\delta}_0 = k_B + 1}$. In this case, since $\omega_A = \omega_B = 2$, we have $\rho_0 = 2 k_B$ according to \eqref{eq:rhozero}. Now, for simplicity we consider the following three sub-cases.

{\it Case 2a}: $\tilde{\delta}_0 = k_B + 1$ and $\tilde{\delta}_1 \leq k_B$. Since $\sum\limits_{i = 0}^{k_A - 1} \tilde{\delta}_i = m \leq k_A k_B$ and $\tilde{\delta}_0 = k_B + 1$, $\tilde{\delta}_{k_A - 1}$ can be at most $k_B - 1$. Thus $\rho_0 = 2 k_B \geq \tilde{\delta}_0 + \tilde{\delta}_{k_A - 1}$. Thus similar to Case 1, we can say that \eqref{eq:provehall} is true.

{\it Case 2b}: $\tilde{\delta}_0 = \tilde{\delta}_1 = k_B + 1$ and $\tilde{\delta}_2 \leq k_B$. In this case, $\rho_0 + \rho_1 = 3 k_B$. Now, $m = \sum_{i=0}^{k_A - 1} \tilde{\delta}_i \leq k_A k_B$ which indicates that $\sum_{i=2}^{k_A - 1} \tilde{\delta}_i \leq (k_A-2) k_B - 2$ and we know that $\tilde{\delta}_i$'s are arranged in a non-increasing order. Thus, it can happen that $\tilde{\delta}_{k_A - 1} \leq k_B - 2$ or $\tilde{\delta}_{k_A - 1}= \tilde{\delta}_{k_A - 2} = k_B - 1$. If $\tilde{\delta}_{k_A - 1} \leq k_B - 2$, we have
\begin{align*}
    \tilde{\delta}_0+ \tilde{\delta}_1 + \tilde{\delta}_{k_A - 1} + \sum_{i=2}^{k_A - 2} \tilde{\delta}_i = k_B + 1 + k_B + 1 + k_B - 2 + \sum\limits_{i=2}^{k_A - 2} \tilde{\delta}_i \leq \rho_0 + \rho_1 + \sum\limits_{i=2}^{k_A - 2} \rho_i ,
\end{align*} where the last inequality follows from \eqref{eq:rhobigger}. Hence, we are done since $\sum\limits_{i=0}^{k_A - 2} \rho_i \; \geq \; \sum\limits_{i = 0}^{k_A - 2} \tilde{\delta}_i$, as we know $\rho_i \geq \tilde{\delta}_i$ for $i=2, 3, \dots, k_A - 2$. 

The remaining case is, $\tilde{\delta}_{k_A - 1}= \tilde{\delta}_{k_A - 2} = k_B - 1$. Now, since the corollary aims at resilience to at most {\it three} stragglers, there can be at most {\it three} $\tilde{\calM}_i$'s which have cardinality $k_B + 1$. But, $\tilde{\delta}_0 = \tilde{\delta}_1 = k_B + 1$, and thus, there can be at most {\it one} more $\tilde{\calM}_i$ left with cardinality $k_B + 1$. So, either both of $\tilde{\calM}_{k_A - 2}$ and $\tilde{\calM}_{k_A - 1}$ will have cardinality $k_B$ or one of them will have cardinality $k_B + 1$. However, according to our rearrangement procedure (ii) in  Sec. \ref{subsec:rearrange}, if two $\tilde{\delta}_i$'s are equal, we place the $\tilde{\calM}_i$ which has smaller cardinality, first. Thus, in both cases, $\tilde{\calM}_{k_A - 2}$ must have cardinality $k_B$, hence $\rho_{k_A - 2} = k_B$ according to 
\eqref{eq:rhobigger}. Thus,
\begin{align*}
\rho_0 + \rho_1 + \rho_{k_A - 2} = 4 k_B = \tilde{\delta}_0 + \tilde{\delta}_1 + \tilde{\delta}_{k_A - 2} + \tilde{\delta}_{k_A - 1},
\end{align*}hence we are done similar to Case 1, since for $2 \leq i \leq k_A - 3$, we have $\rho_i \geq \tilde{\delta}_i$.

{\it Case 2c}: $\tilde{\delta}_0 = \tilde{\delta}_1 = \tilde{\delta}_2 = k_B + 1$. In this case, $\rho_0 + \rho_1 + \rho_2 = 4 k_B$. Note that $m \leq k_A k_B$, thus  $\tilde{\delta}_{k_A - 1}$ can be at most $k_B - 1$. Consider the scenario, when $\tilde{\delta}_{k_A - 1} \leq k_B - 3$. In this scenario,
\begin{align*}
    \tilde{\delta}_0 + \tilde{\delta}_1 + \tilde{\delta}_2 + \tilde{\delta}_{k_A - 1} \leq k_B + 1 + k_B + 1 + k_B + 1 + k_B - 3 = 4 k_B = \rho_0 + \rho_1 + \rho_2,
\end{align*} hence we are done similar to Case 1, since for $3 \leq i \leq k_A - 2$, we have $\rho_i \geq \tilde{\delta}_i$. If $\tilde{\delta}_{k_A - 1} = k_B - 2$, then $k_B - 2 \leq \tilde{\delta}_{k_A - 2} \leq k_B - 1$, in that case $\rho_{k_A - 2} \geq  1 + \tilde{\delta}_{k_A - 2}$ according to \eqref{eq:rhobigger} since $|\tilde{\calM}_{k_A - 2}| = k_B$. Thus, $\tilde{\delta}_0 + \tilde{\delta}_1 + \tilde{\delta}_2 + \tilde{\delta}_{k_A - 2} + \tilde{\delta}_{k_A - 1}  \leq 3 (k_B + 1) + \rho_{k_A - 2} - 1 + k_B - 2  = 4 k_B + \rho_{k_A - 2}$. So, $ \tilde{\delta}_0 + \tilde{\delta}_1 + \tilde{\delta}_2 + \tilde{\delta}_{k_A - 2} + \tilde{\delta}_{k_A - 1} \leq \rho_0 + \rho_1 + \rho_2 + \rho_{k_A - 2}$. Hence, we are done. Finally, if $\tilde{\delta}_{k_A - 1} = k_B - 1$, since $\sum_{i=0}^{k_A - 1} \tilde{\delta}_i \leq k_A k_B$ and $\tilde{\delta}_i$'s are arranged in a non-increasing order, we must have $\tilde{\delta}_{k_A - 2} = \tilde{\delta}_{k_A - 3}  = k_B - 1$. Thus we can write
\begin{align*}
\rho_0 + \rho_1 + \rho_2 + \rho_{k_A - 3} + \rho_{k_A - 2} = 4k_B + 2 k_B = 6 k_B = \tilde{\delta}_0 + \tilde{\delta}_1 + \tilde{\delta}_2 + \tilde{\delta}_{k_A - 3} + \tilde{\delta}_{k_A - 2} + \tilde{\delta}_{k_A - 1},
\end{align*}hence we are done, since for $3 \leq i \leq k_A - 4$, we have $\rho_i \geq \tilde{\delta}_i$.
\end{proof}

\begin{lemma}
\label{lem:no_of_unknowns}
For any arbitrary $k_A \geq 3$ and $k_B \geq 3$ (where $k_A \geq k_B$ without loss of generality), if we assign the jobs to $n = k_A k_B + s$ worker nodes (where $s \leq k_A$) according to Alg. \ref{Alg:New_matmat}, then the minimum of total number of participating unknowns within any $m$ worker nodes ($m \leq k_A k_B$) will be lower bounded by $m$.
\end{lemma}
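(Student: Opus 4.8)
The plan is to carry out the same accounting that proves Corollary~\ref{cor:toy_example}, but to organize it so that it survives for all admissible $\omega_A,\omega_B$ and all $s\le k_A$ rather than enumerating the straggler patterns by hand. First I would fix an arbitrary set of $m$ worker nodes, apply the rearrangement of Sec.~\ref{subsec:rearrange} to obtain the sorted profile $\tilde{\delta}_0\ge\tilde{\delta}_1\ge\cdots\ge\tilde{\delta}_{k_A-1}$ with $\sum_i\tilde{\delta}_i=m$, and bound the number of participating unknowns from below by $\sum_{i=0}^{k_A-\omega_A}\rho_i$, discarding the trivial tail $\rho_{k_A-\omega_A+1},\dots,\rho_{k_A-1}\ge0$. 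Here $\rho_0$ is given by \eqref{eq:rhozero} and the remaining $\rho_i$ by \eqref{eq:rhobigger}: the leading term carries the factor $\omega_A$ because all $\omega_A$ constituent submatrices of $\bfA$ of $\tilde{\calM}_0$ are already present, while each later class contributes (by Claim~\ref{claim:cyclicA}) at least one genuinely new submatrix of $\bfA$, multiplied by the number of submatrices of $\bfB$ it sees (Claim~\ref{claim:cyclicB}). Thus it suffices to prove $\sum_{i=0}^{k_A-\omega_A}\rho_i\ge m$.

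It is convenient to write $\rho_0=\omega_A b_0$ and $\rho_i=b_i$ for $1\le i\le k_A-\omega_A$, where $b_i$ denotes the guaranteed number of distinct constituent submatrices of $\bfB$ within the chosen $\tilde{\delta}_i$ nodes of $\tilde{\calM}_i$ (the $\bfB$-factor appearing in \eqref{eq:rhobigger}). From \eqref{eq:rhobigger} one reads off the two facts that drive everything: $b_i\ge\tilde{\delta}_i$ for every non-full class, whereas a \emph{full} class (one with $\abs*{\tilde{\calM}_i}=k_B+1$ and $\tilde{\delta}_i=k_B+1$) satisfies $b_i=k_B=\tilde{\delta}_i-1$, i.e.\ it runs a deficit of exactly one. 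The amplified leading term is what compensates: if class $0$ is non-full then $\omega_A b_0-\tilde{\delta}_0\ge(\omega_A-1)\tilde{\delta}_0$, and if it is full then $\omega_A b_0-\tilde{\delta}_0=(\omega_A-1)k_B-1$. After cancelling $\sum_{i\ge1}\tilde{\delta}_i$ from both sides, the entire statement reduces to the amortized inequality $\bigl(\omega_A b_0-\tilde{\delta}_0\bigr)+\sum_{i=1}^{k_A-\omega_A}\bigl(b_i-\tilde{\delta}_i\bigr)\ \ge\ \sum_{i=k_A-\omega_A+1}^{k_A-1}\tilde{\delta}_i$, whose right-hand side is the sum of the $\omega_A-1$ smallest entries of the profile.

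I would then split on how far the full classes reach. Since exactly $s$ of the $k_A$ classes have cardinality $k_B+1$ and the full classes form a prefix $0,\dots,f-1$ of the sorted order, there are two regimes. When the prefix exhausts the summed range (all of $\{0,\dots,k_A-\omega_A\}$ full), every non-full class lies in the omitted tail, so the identity $\sum_i\tilde{\delta}_i=m$ lets me evaluate the tail sum exactly as $m-(k_A-\omega_A+1)(k_B+1)$; substituting the closed forms for $b_0$ and for the full $b_i$ makes the two sides of the amortized inequality coincide at $m=k_Ak_B$ and leaves slack for smaller $m$, so this regime closes by a direct computation. When instead the full classes are confined to the summed range, the omitted tail is non-full and bounded termwise by $k_B$, and the case $f=0$ is immediate because the leading surplus $(\omega_A-1)\tilde{\delta}_0$ already dominates $\omega_A-1$ tail entries each at most $\tilde{\delta}_0$; for $f\ge1$ the $f-1$ interior full-class deficits together with the tail must be absorbed by the $(\omega_A-1)k_B-1$ surplus of the leading term and by the $+1$ surpluses produced by every cardinality-$k_B$ interior class whose $\tilde{\delta}_i<k_B$, all subject to the budget $m\le k_Ak_B$ and to the tie-breaking rule~(ii) that orders smaller-cardinality classes first.

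The hard part is precisely this second regime. The delicate point is that, in the worst case $\omega_B=2$, the $s-f$ non-full classes of cardinality $k_B+1$ contribute \emph{zero} surplus, so the compensation for the $f$ interior full-class deficits must come solely from the leading term and from cardinality-$k_B$ classes that are not completely filled; showing that enough such surplus is always present is where the budget $m\le k_Ak_B$ and rule~(ii) must be used in tandem, since a large tail sum forces the remaining entries close to $k_B$ (which caps $f$), while a large $f$ forces small tail entries (which creates surplus). The cleanest way to discharge it, and the route I would take, is an amortized/telescoping count of the quantities $b_i-\tilde{\delta}_i$ along the sorted sequence—charging each full-class deficit to surplus accumulated earlier—rather than the finite sub-case enumeration (Cases~2a--2c) that sufficed for $s\le3$ in Corollary~\ref{cor:toy_example}; verifying that this charging never overdraws is the crux of the proof.
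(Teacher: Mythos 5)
Your setup faithfully reproduces the paper's own reduction: discarding the trivial tail and proving $\sum_{i=0}^{k_A-\omega_A}\rho_i \geq m$ is exactly \eqref{eq:provehallfull}, your surplus/deficit bookkeeping is a correct reading of \eqref{eq:rhozero}--\eqref{eq:rhobigger} (for non-full classes $b_i \geq \tilde{\delta}_i$ indeed holds, a full class runs a deficit of exactly one, and the leading-term surpluses $(\omega_A-1)\tilde{\delta}_0$ and $(\omega_A-1)k_B-1$ are right), and your two easy regimes are sound: the $f=0$ case is precisely the paper's Case 1, and your all-full-prefix computation (equality at $m=k_Ak_B$, slack below --- I verified the arithmetic) corresponds to what the paper obtains via Claim \ref{cl:difference}(a) followed by Claim \ref{cl:excess}.

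The genuine gap is exactly where you place the crux and then stop: the regime $1 \leq f \leq k_A-\omega_A$ with a mixed profile. You assert that each interior full-class deficit can be ``charged to surplus accumulated earlier'' under the budget $m \leq k_Ak_B$ and tie-break rule (ii), but you never exhibit the charging scheme or verify that it does not overdraw, and the paper's proof shows this verification is not routine. In the critical case $\omega_B=2$, a non-full cardinality-$(k_B+1)$ class contributes \emph{zero} surplus, so a class with $\tilde{\delta}_i = k_B-\omega_B+1$ yields surplus $1$ or $0$ depending solely on whether its cardinality is $k_B$ or $k_B+1$; the paper must therefore prove (Case 2c(iii)) that every cardinality-$(k_B+1)$ class attaining this value of $\tilde{\delta}_i$ is forced by rule (ii) of Sec.~\ref{subsec:rearrange} past the classes of $\calH$ into the tail $\tilde{\calV}$, using the count $\alpha = \omega_A\omega_B-\omega_A+\lambda_2+R_2$ against the bound $s \leq \omega_A\omega_B-1$ to cap how many such classes can exist, whence $\rho_i = k_B$ for all of $\calH$ and the identity \eqref{eq:2crho} closes the case. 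Nothing in your sketch performs this placement argument, nor the boundary analysis on $\tilde{\delta}_{k_A-\omega_A+1}$ (the paper's split into Cases 2a--2d, with the further split on $R_1$ versus $R_2$ inside 2c) that decides when the leading surplus alone suffices versus when $\tilde{\calM}_{k_A-\omega_A}$ or the whole block $\tilde{\calU}_{\kappa}$ must be recruited into $\calU$ (Claim \ref{cl:difference}(b)). So your proposal correctly reduces the lemma, matches the paper's strategy, and disposes of the easy regimes, but the combinatorial core --- the content of Claims \ref{cl:excess} and \ref{cl:difference} together with Cases 2b(ii) and 2c --- is left unproved; as written it is a plan for a proof, not a proof.
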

\begin{proof}
The proof of this lemma appears in Appendix \ref{App:proof_no_of_unknowns}.
\end{proof}

\begin{theorem}
\label{thm:matmat}
Assume that a system has $n$ worker nodes each of which can store the equivalent of $1/k_A$ fraction of matrix $\bfA$ and $1/k_B$ fraction of matrix $\bfB$ (without loss of generality $k_A \geq k_B$) for distributed matrix-matrix multiplication $\mathbf{A}^T \bfB$. If we assign the jobs according to Alg. \ref{Alg:New_matmat}, we achieve resilience to {\it any} $s = n - k_A k_B$ stragglers where $s \leq k_A$.
\end{theorem}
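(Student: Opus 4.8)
The plan is to follow the same blueprint as the proof of Theorem~\ref{thm:matvec}: reduce the claim to the existence of a perfect matching in a bipartite graph, use Lemma~\ref{lem:no_of_unknowns} to verify Hall's condition, and then lift the combinatorial matching to an algebraic full-rank statement via the Schwartz--Zippel lemma. Since resilience to any $s = n - k_A k_B$ stragglers is equivalent to recovering $\bfA^T \bfB$ from any $\tau = k_A k_B$ worker nodes, I would begin by fixing an arbitrary set of $k_A k_B$ nodes. The set $\calB$ of unknowns to be decoded consists of the $k_A k_B$ products $\bfA_u^T \bfB_v$ with $0 \leq u \leq k_A - 1$ and $0 \leq v \leq k_B - 1$, and each chosen node $W_i$ returns $\tilde{\bfA}_i^T \tilde{\bfB}_i$, which upon expansion is a linear combination of exactly $\omega_A \omega_B$ of these unknowns (the pairwise products of the $\omega_A$ constituent submatrices of $\bfA$ with the $\omega_B$ constituent submatrices of $\bfB$). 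This yields a set $\calC$ of $k_A k_B$ equations with $\abs{\calB} = \abs{\calC} = k_A k_B$, and I would form the bipartite graph $\calG = \calC \cup \calB$ joining each equation to the unknowns appearing in it.

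Next I would invoke Lemma~\ref{lem:no_of_unknowns}: for any $\bar{\calC} \subseteq \calC$ with $\abs{\bar{\calC}} = m \leq k_A k_B$, the minimum number of participating unknowns, i.e. $\abs{\calN(\bar{\calC})}$, is at least $m$. This is precisely Hall's condition, so by Hall's marriage theorem there is a perfect matching between $\calC$ and $\calB$. The matching associates to each equation $W_i$ a distinct participating unknown $\bfA_{u_i}^T \bfB_{v_i}$, so that $u_i$ lies in the $\bfA$-index set $T_i$ and $v_i$ in the $\bfB$-index set $S_i$ of worker $W_i$ in the notation of Alg.~\ref{Alg:New_matmat}.

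Finally I would establish full rank of the $k_A k_B \times k_A k_B$ system matrix whose $\bigl(i,(u,v)\bigr)$ entry equals $r^A_{i,u}\,r^B_{i,v}$ when that unknown participates and $0$ otherwise. Its determinant is a polynomial in the random variables $\{r^A_{i,j}\}$ and $\{r^B_{i,j}\}$, and to see it is not identically zero I would exhibit a single nonsingular specialization: for each $i$, set $r^A_{i,u_i} = r^B_{i,v_i} = 1$ and all other coefficients in row $i$ to $0$, which collapses row $i$ to the unit vector $\bfe_{(u_i,v_i)}$ and turns the system matrix into the permutation matrix of the matching, which is manifestly full rank. Since the determinant polynomial is therefore not identically zero, the Schwartz--Zippel lemma~\cite{schwartz1980fast} guarantees that for coefficients drawn i.i.d.\ from a continuous distribution the matrix is full rank with probability one, so all $k_A k_B$ unknowns are decodable from the chosen nodes, which proves resilience to any $s = n - k_A k_B$ stragglers.

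I expect the genuine difficulty to reside entirely in Lemma~\ref{lem:no_of_unknowns} (deferred to Appendix~\ref{App:proof_no_of_unknowns}), which supplies the Hall condition; the exhaustive case analysis already carried out in Corollary~\ref{cor:toy_example} hints at how delicate the bookkeeping becomes when several classes $\tilde{\calM}_i$ attain the maximal cardinality $k_B + 1$. Given that lemma, the remaining steps are routine, the only mild subtlety being that here the matrix entries are \emph{products} $r^A_{i,u}\,r^B_{i,v}$ rather than single coefficients, which is why I would take care to produce the explicit nonsingular specialization above before appealing to Schwartz--Zippel.
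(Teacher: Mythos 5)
Your proposal matches the paper's proof essentially step for step: the paper likewise fixes an arbitrary set of $k_A k_B$ nodes, builds the bipartite graph between equations and the unknowns $\bfA_u^T \bfB_v$, invokes Lemma~\ref{lem:no_of_unknowns} to verify Hall's condition, extracts a perfect matching, and reduces the system matrix to a permutation matrix before appealing to Schwartz--Zippel. Your explicit specialization $r^A_{i,u_i} = r^B_{i,v_i} = 1$ (all other row-$i$ coefficients zero), handling the fact that entries are products $r^A_{i,u}\,r^B_{i,v}$, is in fact slightly more careful than the paper, which disposes of this step by saying it follows ``the same steps of the proof of Theorem~\ref{thm:matvec}.''
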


\begin{proof}
Since we have partitioned the matrices $\bfA$ and $\bfB$ into $k_A$ and $k_B$ disjoint block-columns, respectively, to recover $\bfA^T \bfB$, we need to decode all $k_A k_B$ block-matrix unknowns, in the form of $\bfA_i^T \bfB_j$, where $0 \leq i \leq k_A - 1$ and $0 \leq j \leq k_B - 1$. We denote the set of these $k_A k_B$ unknowns as $\calU$. Now we choose an arbitrary set of $k_A k_B$ worker nodes. Each of these worker nodes corresponds to an equation in terms of $\omega_A \omega_B$ of those $k_A k_B$ unknowns. We denote the set of $k_A k_B$ equations as $\calV$, and thus, $|\calU| = |\calV| = k_A k_B$. 

Now we consider a bipartite graph $\calG = \calV \cup \calU$, where any vertex (equation) in $\calV$ is connected to $\omega_A \omega_B$ vertices (unknowns) in $\calU$ which have participated in the corresponding equation. Thus, each vertex in $\calV$ has a neighborhood of cardinality $\omega_A \omega_B$ in $\calB$. 
Similar to the proof of Theorem 1, our goal is to show that there exists a perfect matching among the vertices of $\calV$ and $\calU$. 
We argue this according to Hall's marriage theorem for which we need to show that for any $\bar{\calV} \subseteq \calV$, the cardinality of the neighbourhood of $\bar{\calV}$, denoted as $\calN (\bar{\calV}) \subseteq \calU$, is at least as large as $|\bar{\calV}|$. In this case, according to Lemma 1, for $|\bar{\calV}| = m \leq k_A k_B$, we have shown that $|\calN (\bar{\calV})| \geq m$. Thus, there exists a perfect matching among the vertices of $\calV$ and $\calU$.

Again, similar to the proof of Theorem 1, we consider the largest matching where the vertex $v_i \in \calV$ is matched to the vertex $u_j \in \calU$, which indicates that $u_j$ participates in the equation corresponding to $v_i$. Let us consider the $k_A k_B \times k_A k_B$ system matrix where row $i$ corresponds to the equation associated to $v_i$ where $u_j$ participates. 
Now following the same steps of the proof of Theorem 1, we prove that the corresponding $k_A k_B \times k_A k_B$ system matrix is full rank.
Thus, the central node can recover all $k_A k_B$ unknowns from any set of $k_A k_B$ worker nodes. Hence, the scheme is resilient to {\it any} $s = n - k_A k_B$ stragglers.
\end{proof}

\subsection{Extension to Heterogeneous System}
\label{sec:hetero_mm}
Similar to the matrix-vector case in Sec. \ref{sec:hetero_mv}, we extend our approach in Alg. \ref{Alg:New_matmat} to heterogeneous system where the worker nodes may have different computation and communication speeds. We have all the same assumptions as we had in the matrix-vector case in Sec. \ref{sec:hetero_mv}. We have $\lambda$ different types of devices in the system, with worker node type $0, 1, \dots, \lambda - 1$. Any worker node $W_i$ (for $0 \leq i \leq n - 1$) receives $c_{j_i} \alpha_A$ columns of matrix $\bfA$ and $c_{j_i} \alpha_B$ columns of matrix $\bfB$ where any worker node of the weakest type receives $\alpha_A$ and $\alpha_B$ columns, respectively, and $c_{j_i} \geq 1$ is a positive integer. Moreover any worker node $W_i$ of node type $j_i$ has a computation speed $c_{j_i} \beta$, where $\beta$ is the computation speed for the worker node of the weakest type.

As we have discussed for the matrix-vector case in Sec. \ref{sec:hetero_mv}, from the computation and storage perspective, $W_i$ can be considered as a collection of $c_{j_i} \geq 1$ worker nodes of the ``weakest'' type. Thus, $\bar{n}$  worker nodes in the heterogeneous system can be thought as homogeneous system of $n = \sum_{i = 0}^{\bar{n} - 1} c_{j_i}$ worker nodes of the ``weakest'' type. Now, for any worker node index $\bar{k}_{AB}$ (such that $0 \leq \bar{k}_{AB} \leq \bar{n} - 1$), we define $k_{AB} = \sum_{i = 0}^{\bar{k}_{AB} - 1} c_{j_i}$ and $s = \sum_{i = \bar{k}_{AB}}^{\bar{n} - 1} c_{j_i}$, so, $ n = \sum_{i = 0}^{\bar{n} - 1} c_{j_i} = k_{AB} + s$. Thus, a heterogeneous system of $\bar{n}$ worker nodes can be thought as a homogeneous system $n = k_{AB} + s$ nodes, for any $\bar{k}_{AB}$ ($0 \leq \bar{k}_{AB} \leq \bar{n} - 1$). Now we state the following corollary (of Theorem \ref{thm:matmat}) for heterogeneous system.

\begin{corollary}
 \label{cor:matmat}
Consider a heterogeneous system of $\bar{n}$ nodes of different types for distributed matrix-matrix multiplication and assume any $\bar{k}_{AB}$ (where $0 \leq \bar{k}_{AB} \leq \bar{n} - 1$). Now, if the jobs are assigned to the modified homogeneous system of $n = k_{AB} + s$ ``weakest'' type worker nodes according to Alg. \ref{Alg:New_matmat} where $k_{AB} = k_A k_B$, (a) the system will be resilient to $s$ such nodes and (b) will provide a $Q/\Delta$ value as $1$.
\end{corollary}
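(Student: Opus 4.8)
The plan is to follow the same two-step reduction used in the proof of Corollary \ref{cor:matvec}, now invoking Theorem \ref{thm:matmat} in place of Theorem \ref{thm:matvec}. First I would replace the heterogeneous system of $\bar{n}$ nodes by the equivalent homogeneous system of $n = k_{AB} + s$ weakest-type nodes, identifying each physical node $W_i$ of type $j_i$ with a block of $c_{j_i}$ consecutive virtual weakest-type nodes $\bar{W}_m, \dots, \bar{W}_{m + c_{j_i} - 1}$, where $m = \sum_{\ell < i} c_{j_\ell}$. Assigning the jobs to these $n$ virtual nodes exactly as in Alg. \ref{Alg:New_matmat} with the chosen factorization $k_{AB} = k_A k_B$ (taking $k_A \geq k_B$ without loss of generality), Theorem \ref{thm:matmat} immediately yields resilience to any $s = n - k_A k_B$ of these virtual nodes. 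Since a straggling or failed physical node $W_i$ corresponds to at most $c_{j_i}$ missing virtual nodes, and $s = \sum_{i = \bar{k}_{AB}}^{\bar{n} - 1} c_{j_i}$ is precisely the total load carried by the ``weakest'' tail of physical nodes, this establishes part (a).

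For part (b) I would count directly. We partition $\bfA$ and $\bfB$ into $k_A$ and $k_B$ block-columns, so there are $\Delta = k_A k_B$ target unknowns of the form $\bfA_u^T \bfB_v$. In the virtual homogeneous system each physical node $W_i$ contributes $c_{j_i}$ block-products, for a total of $n = \sum_{i=0}^{\bar{n} - 1} c_{j_i}$ block-products, and by part (a) the scheme tolerates the loss of any $s$ of them. Hence every collection of $n - s = k_{AB} = k_A k_B$ returned block-products, taken across all workers, is decodable, which is exactly the statement $Q = k_A k_B = \Delta$, i.e. $Q/\Delta = 1$.

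The only point requiring real care is that the straggler budget of Theorem \ref{thm:matmat} is $s \leq k_A$ (with $k_A \geq k_B$), whereas in the heterogeneous reduction $s = \sum_{i=\bar{k}_{AB}}^{\bar{n} - 1} c_{j_i}$ is dictated by the node capacities and the choice of $\bar{k}_{AB}$. I would therefore verify that the factorization $k_{AB} = k_A k_B$ and the index $\bar{k}_{AB}$ are selected so that this inequality holds, i.e. that the induced number of virtual stragglers does not exceed $k_A$. A second, more routine point is that the $Q/\Delta$ metric is defined ``respecting the computation order'' within each node; but because Theorem \ref{thm:matmat} guarantees decodability from \emph{any} $k_A k_B$ returned products, the sequential ordering of the multiple tasks assigned to a stronger node never obstructs recovery, as any $k_A k_B$ products suffice regardless of which ones a given node has finished. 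I expect the feasibility of the $s \leq k_A$ constraint to be the main (and essentially the only) obstacle; the remainder is a direct transcription of the matrix-vector argument.
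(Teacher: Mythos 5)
Your proposal is correct and follows essentially the same route as the paper: reduce the heterogeneous system to a virtual homogeneous system of $n = k_{AB} + s$ weakest-type nodes, invoke Theorem \ref{thm:matmat} for resilience to any $s$ virtual nodes, and then count that any $n - s = k_A k_B$ returned block-products suffice, giving $Q = \Delta = k_A k_B$. Your added observation that the choice of $\bar{k}_{AB}$ and the factorization $k_{AB} = k_A k_B$ must respect the budget $s \leq k_A$ of Theorem \ref{thm:matmat} is a legitimate feasibility condition that the paper's proof leaves implicit, so it strengthens rather than diverges from the paper's argument.
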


\begin{proof}
We modify the heterogeneous system of $\bar{n}$ worker nodes to a homogeneous system of $n = k_{AB} + s$ worker nodes all of which are of the ``weakest'' type. Now, we assign the jobs to the homogeneous system according to Alg. 2 and each node is assigned $1/k_{AB}$-th fraction of the overall computational load. Thus, according to Theorem 2, we will have resilience to {\it any} $s$ nodes, which concludes the proof of part (a).

Next, according to Alg. 2, we partition matrix $\bfA = [ \bfA_0 \;\;\; \bfA_1 \;\;\; \dots \;\;\; \bfA_{k_A - 1}]$, and $\bfB = [ \bfB_0 \;\;\; \bfB_1 \;\;\; \dots \;\;\; \bfB_{k_B - 1} ]$, thus, we have total $\Delta = k_A k_B$ submatrix products to be decoded. Now according to part (a) of this proof, in the modified homogeneous system of $n$ worker nodes, we have assigned, in total, $n$ tasks, and the scheme is resilient to any $s$ tasks. Thus, all the unknowns can be recovered from any $n - s = k_A k_B$ submatrix products, hence $Q = k_A k_B$. Thus, we have $Q/\Delta = 1$, which concludes the proof of part (b).
\end{proof}



\section{Properties of Our Proposed Schemes}
\label{sec:property}
\subsection{Computational Complexity for a Worker Node} 
\label{sec:compcomplexity}
Consider random sparse matrices $\bfA \in \mathbb{R}^{t \times r}$ and $\bfB \in \mathbb{R}^{t \times w}$ where the probability that any entry is non-zero is $\eta$. In our proposed approach in Alg. \ref{Alg:New_matmat}, the respective weights of the assigned submatrices of $\bfA$ and $\bfB$ are $\omega_A$ and $\omega_B$. Thus, when $\eta$ is small, the probability of any entry in an encoded submatrix of $\bfA$ to be non zero is
\begin{align}
\label{eq:prob}
    1 - \prod_{i = 1}^{\omega_A} (1 - \eta) = 1 - (1 - \eta)^{\omega_A} \approx 1 - \left(1 -\omega_A \eta  \right) = \omega_A \eta ; 
\end{align}
For example, (i) if $\eta = 0.01$ and $\omega_A = 6$, we have $1 - (1 - \eta)^{\omega_A} = 1 - (1 - 0.01)^6 = 0.0585 \approx \omega_A \eta$, or (ii) if $\eta = 0.02$ and $\omega_B = 4$, we have $1 - (1 - \eta)^{\omega_B} = 1 - (1 - 0.02)^4 = 0.077 \approx \omega_B \eta$. Thus, in any encoded submatrix of $\bfA$ or $\bfB$, the probability of any entry being non-zero is can be approximated by $\omega_A \eta$ or $\omega_B \eta$, respectively. 

In this work, we consider the scenarios where $\bfA$ and $\bfB$ are sparse, hence $\eta$ is small. Therefore, the computational complexity for any worker node is $\calO \left( \omega_A \eta \times \omega_B \eta \times t \times \frac{r w}{k_A k_B} \right) = \calO \left(\omega_A \omega_B \eta^2 \times \frac{r w t}{k_A k_B} \right)$. On the other hand, in a dense coded approach  \cite{yu2017polynomial, 8849468} which assign linear combination of $k_A$ and $k_B$ submatrices, the computational complexity is approximately $\calO \left( k_A \eta \times k_B \eta t \times \frac{r w}{k_A k_B} \right) = \calO \left( \eta^2 \times r w t \right)$, which is $\frac{k_A k_B}{\omega_A \omega_B}$ times larger than our proposed method, since $\omega_A < k_A$ and $\omega_B < k_B$.

The recent approach proposed in \cite{dasunifiedtreatment} can deal with sparse matrices with lesser per worker node computational complexity in comparison to the approaches in \cite{yu2017polynomial, das2020coded, 8849468, 8919859}. The coding scheme in \cite{dasunifiedtreatment} sets the weight of matrix $\bfB$ as $\zeta$ which is given by $\zeta \geq 1 + k_B - \Bigl\lceil{ \frac{k_B}{c} \Bigr\rceil}$ where $c = 1 + \Bigl\lceil{\frac{s}{k_B}\Bigr\rceil}$. In that case, the per worker node computational complexity is
$O\left( \eta^2 \times rwt \times \left( \frac{\zeta}{n} + \frac{\zeta s}{n k_B}\right)\right)
$. Thus in order to compare our approach against \cite{dasunifiedtreatment}, we consider the ratio
\begin{align}
\label{eq:ratio}
    \frac{\left( \frac{\zeta}{n} + \frac{\zeta s}{n k_B}\right)}{\frac{\omega_A \omega_B}{k_A k_B}} = \frac{\zeta k_A (k_B + s)}{n \omega_A \omega_B} = \frac{k_A (k_B + s)}{n} \times \frac{\zeta}{\omega_A \omega_B}.
\end{align} Thus, our proposed approach involves less computational complexity whenever $\frac{k_A (k_B + s)}{n} \times \frac{\zeta}{\omega_A \omega_B} > 1$. In the following, we discuss such examples.

\vspace{-0.1 cm}
\begin{example}
Consider a scenario, where $k_A =8, k_B = 6$ and $s = 3$. In our approach, we set $\omega_A = \omega_B = 2$. On the other hand, the approach in \cite{dasunifiedtreatment} sets $c = 2$ and $\zeta \geq 4$. Thus, according to \eqref{eq:ratio}, the ratio of the per worker node computational complexity for the approach in \cite{dasunifiedtreatment} and that in this work is $72/51$, which indicates a $30\%$ reduction in our case compared to the method in \cite{dasunifiedtreatment}.   
\end{example}

\vspace{-0.2 cm}

\begin{remark}
The weights in our proposed approach depend on the number of stragglers ($s$) since we just need to satisfy the inequality, $\omega_A \omega_B > s$. On the other hand, in the approach in \cite{dasunifiedtreatment}, the corresponding weights depend on $k_A$ and $k_B$. Thus, for fixed $s$, in our approach, the weights remain fixed, whereas for the method in \cite{dasunifiedtreatment}, the weights increase with the increase of $k_A$ and $k_B$.
\end{remark}

\vspace{-0.1 cm}

\begin{example}
Consider the cases where $k_A$ and $k_B$ are even and $k_A = k_B \geq 6$,  $s = 5$. In those cases, in our proposed approach, we set $\omega_A  = 3$ and $\omega_B = 2$. Thus, the ratio in \eqref{eq:ratio} becomes $\frac{k_A (k_A + s)}{k_A^2 + s} \times \frac{\zeta}{6}$. Now assume that $k_A = k_B = 8$. In that case, this ratio is $1.25$, which indicates a $20\%$ reduction in the per worker node computational complexity for our approach compared to the method in \cite{dasunifiedtreatment}. Now, with the increase of $k_A = k_B$ (even), this ratio in \eqref{eq:ratio} will increase (since, $\frac{k_A (k_A + s)}{k_A^2 + s}$ will saturate to $1$ and $\zeta$ will increase), and thus, the gain of our method will be more significant for large $n$. For instance, when $k_A = k_B = 10$, the ratio is $1.43$, which indicates a $30\%$ gain over the approach in \cite{dasunifiedtreatment}. When $k_A = k_B = 12$, the ratio is $1.60$, and so on. 
\end{example}


\vspace{-0.6 cm}
\subsection{Numerical Stability and Coefficient Determination Time}
\label{sec:trialtime}
The condition number is often considered as an important metric  for the numerical stability of a linear system \cite{das2019random, 8849468, 8919859}. In distributed computation, for a system of $n$ workers and $s$ stragglers, we define the worst case condition number ($\kappa_{worst}$) in the homogeneous system  as the maximum of the condition numbers of the decoding matrices over all different choices of $s$ stragglers. In the approaches where random coding is involved \cite{das2020coded, das2019random, 8919859}, the idea is to generate random coefficients several times (say, $20$ trials), and keep the set of coefficients which provides the minimum $\kappa_{worst}$. 

In this work, for the proofs of Theorems \ref{thm:matvec} and \ref{thm:matmat}, we need the coefficients to be chosen i.i.d. at random from a continuous distribution. Now, we briefly explore a few continuous distributions with different parameters to observe the effect on the worst case condition number ($\kappa_{worst}$). We consider (i) Gaussian distribution (denoted as ``rand$(c,d)$'' if the mean is $c$ and the standard deviation is $d$) and (ii) uniform distribution (denoted as ``unifrand($lb,ub$)'' if the lower and the upper endpoints are $lb$ and $ub$, respectively), for a distributed system of $n = 30$ nodes and $s = 2$ stragglers for both distributed matrix-vector and matrix-matrix multiplication. 
Table \ref{table:dist} gives the results, where we observe that $\kappa_{worst}$ remains within an order of magnitude for the different distributions on each type of multiplication. 
In our numerical simulations in Sec. \ref{sec:numexp}, we draw the linear coefficients at i.i.d. from the standard normal distribution, since it gives the best performance out of the distributions considered here.\footnote{Table II implies that the impact of the distribution parameters on the worst case condition number can be non-monotonic. Since our focus in this work is to provide resilience to the maximum number of stragglers irrespective of the coefficient distribution, we leave a more comprehensive investigation of how the choice of distribution impacts numerical stability to future work.}

\begin{table}[t]
\caption{{\small Comparison of the worst case condition number, $\kappa_{worst}$ for different distributions for distributed computation over $n = 30$ nodes with $s = 2$ stragglers}}
\vspace{-0.2 cm}
\label{table:dist}
\begin{center}
\begin{small}
\begin{sc}
\begin{tabular}{c c c}
\hline
\toprule

\multirow{2}{*}{Distributions} & \multicolumn{2}{c}{Worst case condition number, $\kappa_{worst}$}   \\ \cline{2-3} 
&  Matrix-vector &  Matrix-matrix \\ 
\midrule
rand(0,0.5) &  $5.53 \times 10^4$ & $1.60 \times 10^4$ \\
rand(0,1) &  $2.43 \times 10^4$ & $1.10 \times 10^4$ \\
rand(0,5) &  $3.61 \times 10^4$ & $1.63 \times 10^4$ \\
unifrand(0,1) &  $2.87  \times 10^4$ & $1.45 \times 10^4$ \\
unifrand(-1,1) &  $2.67 \times 10^4$ & $1.21 \times 10^4$ \\
unifrand(-5,5) &  $2.84 \times 10^4$ & $1.38 \times 10^4$ \\
 
\bottomrule
\end{tabular}
\end{sc}
\end{small}
\end{center}
\vspace{-0.5 cm}
\end{table}%



In our proposed matrix-matrix multiplication approach in Alg. \ref{Alg:New_matmat}, we partition $\bfA$ and $\bfB$ into $k_A$ and $k_B$ block-columns, respectively, and we have the recovery threshold $\tau = k_A k_B$. Thus, in every trial we need to determine ${n \choose \tau}$ condition numbers of $\tau \times \tau$ sized (decoding) matrices, which has a total complexity of $\calO\left( {n \choose \tau} \tau^3\right)$. On the other hand, the recent sparse matrix computation approaches in \cite{das2020coded, dasunifiedtreatment} partition matrix $\bfA$ into $\Delta_A = \textrm{LCM}(n, k_A)$ block-columns. Thus, in every trial, they need to determine ${n \choose \tau}$ condition numbers of $\Delta_A k_B \times \Delta_A k_B$ sized matrices which has a total complexity of $\calO\left( {n \choose \tau} \Delta_A^3 k_B^3\right)$. Since, $\Delta_A$ can be significantly larger than $k_A$, every trial involves considerably more complexity in comparison to ours. For example, consider a case where $n$ and $k_A$ are coprime. In that case $\Delta_A = n k_A$, and thus the complexity corresponding to the approaches \cite{das2020coded, dasunifiedtreatment} are around $\calO\left( n^3 \right)$ times higher than ours. Similar result holds for the matrix-vector multiplication case in Alg. \ref{Alg:New_matvec}. 
Note that the approach in \cite{8919859} involves similar computational complexity per trial as ours, and the approaches in \cite{yu2017polynomial, 8849468} do not require such coefficient search; however, these methods have very high computational complexity per worker node for sparse matrices, as discussed in Sec. \ref{sec:compcomplexity}.

\begin{remark}
\label{rem:mv}
In case of distributed matrix-vector multiplication, our proposed Alg. \ref{Alg:New_matvec} has a per worker node computational complexity $\calO \left( (s+1) \eta t \times \frac{r}{k_A} \right) = \calO \left((s + 1) \eta \times \frac{rt}{k_A} \right)$. While it is significantly smaller than that of dense coded approaches, the approach in \cite{dasunifiedtreatment} has a per worker node computational complexity $\calO \left((s + 1) \eta \times \frac{rt}{n} \right)$, slightly smaller than our proposed method (since $n = k_A + s$). However, it involves very high coefficient determination time which is confirmed by numerical experiments in Sec. \ref{sec:numexp}. In addition, unlike the approaches in \cite{das2020coded, dasunifiedtreatment}, our approach is extended to heterogeneous systems.
\end{remark}

\begin{remark}
It should be noted that our proposed approach requires the central node to invert a $k_A k_B\times k_A k_B$ sized decoding matrix. Thus, for $\bfA \in \mathbb{R}^{t \times r}$ and $\bfB \in \mathbb{R}^{t \times w}$, the corresponding decoding complexity is $\mathcal{O} \left( k_A^3 k_B^3 + r w k_A k_B\right)$. On the other hand, the SCS optimal approach in \cite{das2020coded} and the class-based scheme in \cite{dasunifiedtreatment} involves a decoding complexity  $\mathcal{O} \left( \Delta_A^3 k_B^3 + r w \Delta_A k_B\right)$, where $\Delta_A = \textrm{LCM}(n, k_A)$ could be significantly higher than $k_A$.
\end{remark}

\section{Numerical Experiments}
\label{sec:numexp}

In this section, we compare the performance of our proposed approaches with different competing methods \cite{yu2017polynomial,8849468,8919859, das2020coded, dasunifiedtreatment} via numerical experiments. We conduct our experiments on an AWS (Amazon Web Services) cluster with {\tt t2.small} machines as the worker nodes. It should be noted that the work in \cite{wang2018coded} is also suited for sparse matrix computations, however, it does not follow the storage constraints as mentioned in \cite{yu2017polynomial,8849468,8919859, das2020coded, dasunifiedtreatment} and also does not meet the exact optimal recovery threshold. Therefore, we do not include \cite{wang2018coded} in our comparison.

\subsection{Matrix-vector Multiplication}
We consider matrix-vector multiplication in a system with $n = 30$ workers, each of which can store $\gamma_A = \frac{1}{28}$ fraction of matrix $\bfA$. We consider a sparse input matrix $\bfA$ of size $40,000 \times 31,500$ and a dense vector $\bfx$ of length $40,000$. We assume three different cases where the sparsity of $\bfA$ is $95\%$, $98\%$ and $99\%$, respectively, which indicates that randomly chosen $95\%$, $98\%$ and $99\%$ entries of matrix $\bfA$ are zero. There are many practical examples where the structure of data matrices exhibit this level of sparsity (see \cite{sparsematrices} for such examples). 

{\bf Worker computation time:} 
We compare different methods in terms of worker computation time (the required time for a worker to complete its respective job) in Table \ref{worker_comp}. In this example, the approaches in \cite{yu2017polynomial, 8849468, 8919859} assign linear combinations of $k_A = 28$ submatrices to the worker nodes. Hence, the inherent sparsity of $\bfA$ is destroyed in the encoded submatrices. Thus the worker nodes take a significantly longer amount of time to finish their respective tasks in comparison to our proposed approach or the approaches in \cite{das2020coded} or \cite{dasunifiedtreatment} which are specifically suited for sparse matrices and combine relatively fewer number of submatrices to obtain the encoded submatrices. Note that according to Remark \ref{rem:mv}, the approach in \cite{dasunifiedtreatment} involves slightly lower worker node computation complexity than our proposed approach which is further verified by Table \ref{worker_comp}.

\begin{table*}[t]
\caption{{\small Comparison of worker computation time and communication delay (matrix transmission time) for matrix-vector multiplication for $n = 30, \gamma_A = \frac{1}{28}$ when randomly chosen $95\%$, $98\%$ and $99\%$ entries of $\bfA$ are zero.}}
\label{worker_comp}
\begin{center}
\begin{small}
\begin{sc}
\begin{tabular}{c c c c c c c c c}
\toprule
\multirow{2}{*}{Methods} & \multicolumn{3}{c}{Worker Comp. Time (in ms)} & & \multicolumn{3}{c}{Communication Delay (in s)}  \\ \cline{2-4} \cline{6-8}
&  $\mu = 99\%$ &  $\mu= 98\%$ & $\mu = 95\%$ && $\mu = 99\%$ &  $\mu= 98\%$ & $\mu = 95\%$    \\
 \midrule
Poly. Code  \cite{yu2017polynomial} & $54.7$ & $55.2$  & $53.7$ && $0.48$ &  $0.92$ & $1.31$  \\
Ortho Poly \cite{8849468}    & $54.3$ & $54.8$  & $55.2$ && $0.49$ &  $0.88$ & $1.34$ \\
RKRP Code \cite{8919859}    & $55.1$ & $53.4$  & $53.7$ && $0.52$ &  $0.95$ & $1.27$ \\
SCS Opt. Scheme \cite{das2020coded}  & $15.1$ & $21.3$  & $29.4$ && $0.15$ &  $0.23$ & $0.32$ \\
Class-based Scheme \cite{dasunifiedtreatment} & $14.2$ & $20.8$  & $29.2$ && $0.13$ &  $0.21$ & $0.31$ \\
{\textbf{Proposed Scheme}}  & ${ 14.9}$ & ${ 21.1}$  & ${ 29.6}$ && ${ 0.14}$ &  ${ 0.23}$ & ${ 0.33}$ \\
\bottomrule
\end{tabular}
\end{sc}
\end{small}
\end{center}
\end{table*}%

{\bf Communication delay:} Now we demonstrate the comparison among different approaches in terms of communication delay. Here we define the communication delay as the required time for the central node to transmit the coded submatrices to all the worker nodes. Thus, depending on the coding procedure, the central node may involve different communication delay for different approaches. Since the approaches in \cite{yu2017polynomial}, \cite{8849468}, \cite{8919859} assign dense linear combinations of the submatrices, the number of non-zero entries in the encoded submatrices increase significantly. Thus, in order to transmit these large number of non-zero entries, the system suffers from a considerable communication delay as demonstrated in Table \ref{worker_comp}. On the other hand, our proposed scheme combines limited number of submatrices, which limits the number of non-zero entries in the encoded submatrices; hence the communication delay is reduced significantly.

{\bf Numerical stability:} Next we evaluate the numerical stability of the system for different distributed computation techniques. For any system of $n$ workers and $s$ stragglers, we find the condition numbers of the decoding matrices over all different choices of $s$ stragglers and find the worst case condition number ($\kappa_{worst}$). We consider two different systems with different number of workers and stragglers. In system 1, we set $n = 30$ and $s = 2$, and in system 2, $n = 30$ and $s = 3$. We compare the $\kappa_{worst}$ values for different approaches in Table \ref{table:kappa}. As expected, since the polynomial code approach \cite{yu2017polynomial} involves the ill-conditioned Vandermonde matrices, it has a very high $\kappa_{worst}$ which indicates its numerical instability. Among the numerically stable systems, our proposed approach provides smaller $\kappa_{worst}$ values in comparison to the methods in \cite{das2020coded} and \cite{dasunifiedtreatment}. Note that the works in \cite{8849468, 8919859} provide smaller $\kappa_{worst}$ than our approach; however, as we discussed in Table \ref{worker_comp}, these approaches are not specifically suited for sparse matrices.

\begin{table*}[t]
\caption{\small Comparison among different approaches in terms of worst case condition number $\left(\kappa_{worst} \right)$ among all different choices of $s$ stragglers and the corresponding required time for $20$ trials to find a good set of coefficients}
\label{table:kappa}
\begin{center}
\begin{small}
\begin{sc}
\begin{tabular}{c c c c c}
\toprule
\multirow{2}{*}{Methods}  & \; $\kappa_{worst}$ for  \; & \;Req. time for \;& \;$\kappa_{worst}$ for \; & \; Req. time for\\
  &  \;$n = 30$, $s = 2$\;   & \; $20$ trials\, (in s) \;&  \;$n = 30$, $s = 3$\; &\; $20$ trials \, (in s)\\

 \midrule

 Poly. Code  \cite{yu2017polynomial}   & $2.29 \times 10^{13}$ & $0$ & $1.86 \times 10^{13}$ & $0$\\
 Ortho-Poly\cite{8849468}    & $7.90 \times 10^3$ & $0$ & $3.65 \times 10^5$ & $0$\\
 RKRP Code\cite{8919859}     & $3.64 \times 10^3$ & $1.05$ &$1.34 \times 10^5$  & $9.91$\\
SCS Opt. Scheme \cite{das2020coded}  & $1.68 \times 10^5$ & $172$ &$1.86 \times 10^6$ & $397$\\
 Class-based Scheme \cite{dasunifiedtreatment} & $9.25 \times 10^4$ & $311$ &$3.46 \times 10^6$ & $1020$ \\
{\textbf{Proposed Scheme}}   & ${ 2.43 \times 10^4}$ & ${ 1.09}$ & ${8.33 \times 10^5}$ & ${ 9.84}$\\
\bottomrule
\end{tabular}
\end{sc}
\end{small}
\end{center}
\end{table*}%

{\bf Coefficient determination time:}
Finally, we compare different methods in terms of the required time for $20$ trials to find a ``good'' set of random coefficients that make the system numerically stable. As discussed in Sec. \ref{sec:trialtime}, the approaches in \cite{das2020coded} and \cite{dasunifiedtreatment} partition matrix $\bfA$ into $\Delta_A = \textrm{LCM}(n, k_A)$ block-columns. 
Thus, when $n = 30, s= 2$, $\Delta_A$ is $420$ and when $n = 30, s= 3$, $\Delta_A$ is $270$. In both cases, they are significantly higher than $k_A$ ($28$ and $27$, respectively), i.e. the partition level in our approach. Thus, in order to find the condition number of higher sized matrices, the approaches in \cite{das2020coded} and \cite{dasunifiedtreatment} take much more time than ours. Table \ref{table:kappa} demonstrates more than $100\times$ speed gain from our proposed approach in comparison to the methods in \cite{das2020coded, dasunifiedtreatment} to find a ``good'' set of coefficients. Note that the approaches in \cite{yu2017polynomial, 8849468} do not require random coefficients, thus, do not involve any such delay due to coefficient determination. However, as discussed before, the approaches in \cite{yu2017polynomial, 8849468, 8919859} are not suited to sparse matrices. 

Thus, in summary, while the approaches in \cite{das2020coded, dasunifiedtreatment} involve similar worker computation time as our proposed approach, they require significantly higher encoding time than ours.

\subsection{Matrix-matrix Multiplication}
\label{sec:numexp_matmat}

We consider the case of matrix-matrix multiplication in a system with $n = 39$ workers, each of which can store $\gamma_A = \gamma_B = \frac{1}{6}$ fraction of matrices $\bfA$ and $\bfB$. We consider sparse input matrices $\bfA$ of size $20,000 \times 15000$ and $\bfB$ of size $20,000 \times 12000$. 
We assume three different cases where the sparsity of $\bfA$ and $\bfB$ are $95\%$, $98\%$ and $99\%$, respectively, which indicate that randomly chosen $95\%$, $98\%$ and $99\%$ entries of matrix $\bfA$ are zero. There are many practical examples where the structure of data matrices exhibit this level of sparsity (see \cite{sparsematrices} for such examples). 


{\bf Worker computation time:} 
First we compare different methods in terms of worker computation time (the required time for a worker to complete its respective job) for our system of $n = 39$ workers and the results are shown in Table \ref{worker_comp_matmat}. In this example, the approaches in \cite{yu2017polynomial, 8849468, 8919859} assign linear combinations of $k_A = k_B = 6$ submatrices to the worker nodes. Hence, the inherent sparsity of both $\bfA$ and $\bfB$ can be destroyed in the encoded submatrices. On the other hand, our proposed approach or the approaches in \cite{das2020coded} or \cite{dasunifiedtreatment} assign linear combinations of less number of submatrices, and hence, are specifically suited for sparse matrices. Table \ref{worker_comp_matmat} demonstrates that the worker node computations in these approaches are significantly faster than the dense coded approaches. In addition, if we compare our proposed approach against the approach in \cite{dasunifiedtreatment}, we can see that the ratio in \eqref{eq:ratio} is $\frac{6 \times (6 + 3)}{39} \times \frac{4}{4} \approx 1.38$. It indicates around a $30\%$ reduction of worker computational complexity in our proposed approach than the method in \cite{dasunifiedtreatment} which can be roughly verified from the results in Table \ref{worker_comp_matmat}.

\begin{table*}[t]
\caption{{\small Comparison of worker computation time and communication delay for matrix-matrix multiplication for $n = 39, \gamma_A = \gamma_B = \frac{1}{6}$ when randomly chosen $95\%$, $98\%$ and $99\%$ entries of matrices $\bfA$ and $\bfB$ are zero.}}
\vspace{-0.3 cm}
\label{worker_comp_matmat}
\begin{center}
\begin{small}
\begin{sc}
\begin{tabular}{c c c c c c c c c}
\hline
\toprule
\multirow{2}{*}{Methods} & \multicolumn{3}{c}{Worker Comp. Time (in s)} & & \multicolumn{3}{c}{Communication Delay (in s)}  \\ \cline{2-4} \cline{6-8}
&  $\mu = 99\%$ &  $\mu= 98\%$ & $\mu = 95\%$ && $\mu = 99\%$ &  $\mu= 98\%$ & $\mu = 95\%$    \\
 \midrule
Poly. Code  \cite{yu2017polynomial} & $1.61$ & $5.13$  & $8.91$ && $0.76$ &  $1.41$ & $2.39$  \\
Ortho Poly Code \cite{8849468}    & $1.56$ & $5.18$  & $9.04$ && $0.81$ &  $1.43$ & $2.37$ \\
RKRP Code \cite{8919859}    & $1.58$ & $5.09$  & $8.95$ && $0.78$ &  $1.38$ & $2.35$  \\
SCS Optimal Scheme \cite{das2020coded}  & $0.97$ & $1.38$  & $4.31$ && $0.28$ &  $0.42$ & $0.61$ \\
Class-based Scheme \cite{dasunifiedtreatment} &$0.52$ & $0.85$  & $3.42$ && $0.23$ &  $0.34$ & $0.55$  \\
{\textbf{Proposed Scheme}}  & ${0.34}$ & ${0.53}$  & ${ 2.24}$ && ${0.16}$ & ${0.25}$ & ${0.42}$ \\
\bottomrule
\end{tabular}
\end{sc}
\end{small}
\end{center}
\vspace{-0.8 cm}
\end{table*}%

{\bf Communication delay:}
Now, the comparison among different approaches in terms of communication delay is also demonstrated in Table \ref{worker_comp_matmat}. Here we define the communication delay as the required time for the central node to transmit the coded submatrices to all the worker nodes. Thus, depending on the coding procedure, the central node may involve different communication delay for different approaches. 
Since the approaches in \cite{yu2017polynomial}, \cite{8849468}, \cite{8919859} assign dense linear combinations of the submatrices, in order to transmit these large number of non-zero entries, the system suffers from a significant communication delay. On the other hand, the algorithm for our proposed scheme and the methods in \cite{das2020coded} and \cite{dasunifiedtreatment} limit the number of non-zero entries; hence the communication delay is reduced significantly. 

In addition, here we compare our proposed approach against the method in \cite{dasunifiedtreatment} in terms of the approximate number of non-zero entries. Consider the case when the matrices $\bfA$ and $\bfB$ have approximately $99\%$ entries to be zero. Now, in our proposed approach, the number of non-zero entries to be sent to each worker node from the central node is approximately 
\begin{align*}
\frac{20k \times 15k}{k_A}  \times 0.01 \times \omega_A + \frac{20k \times 12k}{k_B}  \times 0.01 \times \omega_B = 1.8 \times 10^6.
\end{align*}On the other hand, the number of the corresponding non-zero entries to be sent to each worker node in the approach \cite{dasunifiedtreatment} is approximately
\begin{align*}
\frac{20k \times 15k}{k_A}  \times \frac{1}{13} \times (12 \times 0.01 + k_A \times 0.01)  + \frac{20k \times 12k}{k_B}  \times 0.01 \times \zeta  \approx 2.3 \times 10^6.
\end{align*} Thus our proposed approach requires the central node to transmit approximately $20\%$ less non-zero entries than the method in \cite{dasunifiedtreatment}, which confirms the gain of our approach in Table \ref{worker_comp_matmat}.

{\bf Numerical stability:} Next we evaluate the numerical stability of the system for different distributed computation techniques. For any system of $n$ workers and $s$ stragglers, we find the condition numbers of the decoding matrices over all different choices of $s$ stragglers and find the worst case condition number ($\kappa_{worst}$). We consider two different systems with different number of workers and stragglers. In system 1, we set $n = 33$ and $s = 3$, and in system 2, we set $n = 39$ and $s = 3$ and demonstrate the $\kappa_{worst}$ values of different approaches in Table \ref{table:kappamatmat}. As expected, the approach in \cite{yu2017polynomial} has a very high $\kappa_{worst}$ which indicates its numerical instability. Among the numerically stable systems, our proposed approach provides smaller $\kappa_{worst}$ values in comparison to the methods in \cite{das2020coded} and \cite{dasunifiedtreatment} and also comparable with \cite{8849468} and \cite{8919859}.

\begin{table*}[t]
\caption{\small Comparison among different approaches in terms of worst case condition number $\left(\kappa_{worst} \right)$ among all different choices of $s$ stragglers and the corresponding required time for $10$ trials to find a good set of coefficients}
\vspace{-0.3 cm}
\label{table:kappamatmat}
\begin{center}
\begin{small}
\begin{sc}
\begin{tabular}{c c c c c}
\hline
\toprule
\multirow{2}{*}{Methods}  & \; $\kappa_{worst}$ for  \;& \;Req. time for \;&  \;$\kappa_{worst}$ for  \;& \; Req. time for\\
  & \; $n = 33$, $s = 3$ \;  & \; $10$ trials\, (in s) \;&  \;$n = 39$, $s = 3$\; &\;$10$ trials \, (in s)\\

 \midrule

 Poly. Code  \cite{yu2017polynomial}  & $5.14 \times 10^{14}$ & $0$ & $4.39 \times 10^{19}$ & $0$\\
 Ortho-Poly\cite{8849468}    & $7.23 \times 10^5$ & $0$ & $1.81 \times 10^6$ & $0$\\
 RKRP Code\cite{8919859}   & $2.38 \times 10^5$ & $5.45$ &$3.43 \times 10^5$  & $10.31$\\
 SCS Opt. Scheme \cite{das2020coded}   & $5.39 \times 10^7$ & $738$ &$9.15\times 10^7$ & $3191$\\
 Class-based Scheme \cite{dasunifiedtreatment} & $4.95 \times 10^7$ & $1327$ &$6.34 \times 10^7$ & $5772$ \\
 {\textbf{Proposed Scheme}}  & ${ 4.40 \times 10^5}$ & $5.87$ & ${2.21 \times 10^6}$ & ${ 11.37}$\\
\bottomrule
\end{tabular}
\end{sc}
\end{small}
\end{center}
\vspace{-0.8 cm}
\end{table*}%

{\bf Coefficient determination time:}
Finally, we compare different approaches in terms of the required time for running $10$ trials to find a ``good'' set of random coefficients that make the system numerically stable. Our proposed approach and the approach in \cite{8919859} partition matrices $\bfA$ and $\bfB$ into $k_A$ and $k_B$ block-columns, which leads to $k_A k_B$ unknowns. 
On the other hand, the approaches in \cite{das2020coded} and \cite{dasunifiedtreatment} partition matrices $\bfA$ and $\bfB$ into $\Delta_A = \textrm{LCM}(n, k_A)$ and $k_B$ block-columns, which leads to $\Delta_A k_B$ unknowns. 
Now as discussed in Sec. \ref{sec:trialtime}, $\Delta_A$ can be significantly higher than $k_A$ and to find the condition numbers of these larger-sized matrices, approaches in \cite{das2020coded} and \cite{dasunifiedtreatment} take much more time than ours. Table \ref{table:kappamatmat} confirms more than $100\times$ speed gain for our proposed scheme over the methods in \cite{das2020coded, dasunifiedtreatment} to find a ``good'' set of coefficients. 

Thus, in summary, while the approaches in \cite{das2020coded, dasunifiedtreatment} involve similar worker computation time as our proposed approach, they require significantly higher encoding time than ours.

{\bf Heterogeneous system:} Next we consider a matrix-matrix multiplication over a heterogeneous system of $\bar{n} = 19$ worker nodes of $\lambda = 3$ different types of nodes. We assume that there are $\bar{n}_0 = 11$, $\bar{n}_1 = 5$ and $\bar{n}_2 = 3$ nodes of types $0, 1$ and $2$, respectively, which are assigned $1$, $2$ and $3$ block-columns each, respectively, hence $n = 11 \times 1 + 5 \times 2 + 3 \times 3 = 30$. We design the scheme according to Alg. \ref{Alg:New_matmat}, such that it is resilient to {\it any} $s = 6$ block-columns processing. 

Now, in this heterogeneous setting, since different nodes are assigned different amounts of jobs depending on their corresponding types, the central node will wait until it receives the results of fastest $Q =  24$ block-column processing. In this regard, we define the worst case condition number in the heterogeneous system ($\bar{\kappa}_{worst}$) as the maximum of the condition numbers of the decoding matrices over all different choices of $s$ block-columns. 
Table \ref{table:hetero} shows the comparison among different approaches in terms of ($\bar{\kappa}_{worst}$), where we notice that our proposed approach provides significantly smaller ($\bar{\kappa}_{worst}$) values than the approaches in \cite{yu2017polynomial, 8849468, das2020coded, dasunifiedtreatment} and also provides competitive $\bar{\kappa}_{worst}$ value in compared to the approach in \cite{8919859}.

\begin{table}[t]
\caption{\small Comparison among different approaches in terms of $\kappa_{worst}$ for a heterogeneous system with $\bar{n} = 19$ where $n = 30$ and $s = 6$.}
\vspace{-0.2 in}
\label{table:hetero}
\begin{center}
\begin{small}
\begin{sc}
\begin{tabular}{c c c c c c c}
\hline
\toprule
\multirow{2}{*}{Methods}  & Polynomial  & Ortho- & RKRP  & SCS Optimal & Class-based & {\textbf{Proposed}} \\
&  Code  \cite{yu2017polynomial}  & Poly\cite{8849468} & Code\cite{8919859} & Scheme \cite{das2020coded} & Scheme \cite{dasunifiedtreatment} & {\textbf{Scheme}} \\
 \midrule
$\bar{\kappa}_{worst}$ & $1.49\times 10^{13}$ & $1.74\times 10^{9}$  & $7.11\times 10^{6}$ & $3.16\times 10^{8}$ &  $5.29\times 10^{8}$ & $7.78\times 10^{7}$  \\
\bottomrule
\end{tabular}
\end{sc}
\end{small}
\end{center}
\vspace{-0.8 cm}
\end{table}%

\section{Conclusion}
\label{sec:conclusion}

In this work, we have developed distributed matrix computation schemes which preserve sparsity properties of the inputs while remaining resilient to the maximum number of stragglers for given storage constraints. We saw how existing dense coded approaches \cite{yu2017polynomial, 8849468, 8919859, das2019random} suffer from a huge communication and computation delay in case of sparse matrices. Since our proposed approach allows very limited amounts of coding within the submatrices, it preserves the inherent sparse structure of the input matrix $\bfA$ (and $\bfB$) up to certain level. Thus, the worker computation delay and the communication delay were seen to be significantly reduced in comparison to those dense coded approaches. There are some sparsely coded approaches in \cite{das2020coded, dasunifiedtreatment} which have been developed specifically to deal with sparse matrices; however, our proposed approach was seen to provide three-fold gains over them. Overall, we showed analytically and experimentally that our proposed approach (i) provides significant gain in worker computation and communication delay, (ii) saves considerable amount of time to find a ``good'' set of coefficients to make the system numerically stable, and (iii) is applicable to the systems where the worker nodes are heterogeneous in nature. 

There are a number of directions for the future work of this paper. While there are several secure distributed matrix computation schemes \cite{tandon2018secure, aliasgari2020private, hollanti2022secure, xhemrishi2022distributed} which protect the system against privacy leakage, most of them add dense random matrices to the coded submatrices which destroy the sparsity of the assigned submatrices. Thus, a straggler resilient secure coded scheme needs to be developed which is particularly suitable for sparse input matrices. Another direction can be developing a scheme for a server-less architecture, where there is no such central node to encode the matrices and the worker nodes may communicate among them to establish straggler resilience. This can be particularly helpful for distributed learning or federated learning methods \cite{9252954}. We could also aim for improving the performance in the heterogeneous setting by assigning multiple jobs with varying weights as guided in \cite{ozfatura2021coded,ozfatura2020age}. Moreover, we need to develop schemes where true knowledge about the worker nodes in the heterogeneous setting may not be available prior to the assignment of the jobs. 

\appendix



\subsection{Notation Table}
\label{app:not}
In this section, for the ease of the readers, we provide an overview of the notations used in this work in Table \ref{tab:not}. This table also includes very brief definitions of the corresponding notations.

\begin{table*}[t]
\caption{{\small Notation Table}} 
\vspace{-0.6 cm}
\label{tab:not}
\begin{center}
\begin{small}
\begin{sc}
\begin{tabular}{c c c}
\hline
\toprule
notation & definition & description\\
 \midrule
$\bfA, \bfB$ & $\;\;\;$ sparse large-sized matrices & $\bfA \in \mathbb{R}^{t \times r}, \bfB \in \mathbb{R}^{t \times w}$ \\  \hline
$\gamma_A, \gamma_B$ & $\;\;\;$ storage fraction for $\bfA$ and $\bfB$, respectively & $\gamma_A = \frac{1}{k_A}, \gamma_B = \frac{1}{k_B}$ \\  \hline
 $n$ & number of total worker nodes & $n \geq k_A k_B$\\ \hline
 $s$ & number of maximum possible stragglers & $s = n - k_A k_B$ \\ \hline
 $W_i$ & Worker node with index $i$ & $0 \leq i \leq n - 1$ \\ \hline
\multirow{2}{1 cm}{$\Delta_A, \Delta_B$} & number of block-columns that & $\Delta_A = k_A$ \\ 
& $\bfA$ and $\bfB$, respectively, are partitioned into &  and $\Delta_B = k_B$\\  \hline
$\Delta$ & total number of unknowns that need to be recovered & $\Delta = \Delta_A \Delta_B$ \\ \hline
$\kappa_{worst}$ & Worst case condition number over all $n \choose s$ stragglers & $-$ \\ \hline
$\tau$ & recovery threshold of the scheme & $\tau = k_A k_B$ \\ \hline
\multirow{2}{1 cm}{$\;\;\; Q$} & number of submatrix products that have to be computed & \multirow{2}{2 cm}{$\;\;\;  \;\;\; Q \geq \Delta$} \\ 
 & in the worst case to recover the intended result &  \\ \hline
$\omega_A, \omega_B$ & weights for the encoding of $\bfA$ and $\bfB$ & $\omega_A \omega_B > s$ \\ 
\bottomrule
\end{tabular}
\end{sc}
\end{small}
\end{center}
\end{table*}%

\vspace{-0.3 cm}
\subsection{Proof of Lemma \ref{lem:no_of_unknowns}}
\label{App:proof_no_of_unknowns}
\begin{proof}
As discussed in Sec. \ref{sec:homogeneous_matmat}, we denote the minimum number of participating unknowns in $\tilde{\calM}_i$ by $\rho_i$ when $0 \leq i \leq k_A - 1$. The trivial lower bound for $\rho_i$ is {\it zero} when $k_A - \omega_A + 1 \leq i \leq k_A - 1$, hence, $  \sum\limits_{i=0}^{k_A - 1} \rho_i \geq \sum\limits_{i=0}^{k_A - \omega_A} \rho_i$. Thus, in order to prove the lemma, we need to show that
\vspace{-0.4 cm}
\begin{align}
\label{eq:provehallfull}
    \sum\limits_{i=0}^{k_A - \omega_A} \rho_i \; \geq \; \sum\limits_{i = 0}^{k_A - 1} \tilde{\delta}_i.
\end{align} Now we provide the following definition for the next part of the proof.

\begin{definition}
We say that $\tilde{\calM}_i$ covers itself, if $\rho_i \geq \tilde{\delta_i}$. Next, we say that a set of $\tilde{\calM}_i$'s, denoted by $\calU$, covers itself and another set of $\tilde{\calM}_j$'s, denoted by $\calV$, (where $\calU$ and $\calV$ are disjoint) if
\vspace{-0.2 cm}
\begin{align*}
    \sum\limits_{i: \tilde{\calM}_i \in \calU} \rho_i =  \sum\limits_{i: \tilde{\calM}_i \in \calU} \tilde{\delta}_i + \sum\limits_{j: \tilde{\calM}_j \in \calV} \tilde{\delta}_j .
\end{align*}
\end{definition}

\begin{claim}
\label{cl:cover}
Every $\tilde{\calM}_j$ covers itself, for $0 \leq j \leq k_A - \omega_A$.
\end{claim}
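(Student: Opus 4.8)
The plan is to establish Claim \ref{cl:cover} by verifying $\rho_j \ge \tilde{\delta}_j$ directly from the closed forms \eqref{eq:rhozero} and \eqref{eq:rhobigger}, treating the head index $j=0$ separately from the indices $1 \le j \le k_A - \omega_A$, and within each splitting along the three branches of those formulas (namely $\tilde{\delta}_j = 1$; $\tilde{\delta}_j \ge 2$ with $\abs*{\tilde{\calM}_j} = k_B$; and $\tilde{\delta}_j \ge 2$ with $\abs*{\tilde{\calM}_j} = k_B + 1$). Throughout I would use only the structural facts already in force: $\omega_A \ge 2$ and $\omega_B \ge 2$ (from $1 < \omega_A < k_A$ and $1 < \omega_B < k_B$ in Alg. \ref{Alg:New_matmat}), together with $\omega_A \ge \omega_B$; the selection bound $\tilde{\delta}_j \le \abs*{\tilde{\calM}_j} \le k_B + 1$; and the monotonicity of $\min(\cdot,k_B)$.

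For the head $j=0$, the multiplicative factor $\omega_A$ in \eqref{eq:rhozero} supplies ample slack. If $\tilde{\delta}_0 = 1$, then $\rho_0 = \omega_A\omega_B \ge 4 > \tilde{\delta}_0$. If $\tilde{\delta}_0 \ge 2$ and the $\min$ is attained at its linear argument, then $\rho_0 = \omega_A(\omega_B + \tilde{\delta}_0 - c)$ with $c \in \{1,2\}$, and since $\omega_A \ge 2$ and $\omega_B \ge c$ this is at least $2\tilde{\delta}_0 \ge \tilde{\delta}_0$; if instead the $\min$ equals $k_B$, then $\rho_0 = \omega_A k_B \ge 2k_B \ge k_B + 1 \ge \tilde{\delta}_0$. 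Hence the head self-covers in every branch.

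For $1 \le j \le k_A - \omega_A$ the $\omega_A$ factor is absent, so the accounting is tighter. If $\tilde{\delta}_j = 1$, then $\rho_j = \omega_B \ge 2 > \tilde{\delta}_j$. If $\tilde{\delta}_j \ge 2$ and the $\min$ in \eqref{eq:rhobigger} is its linear argument, then $\rho_j = \omega_B + \tilde{\delta}_j - c$ with $c \in \{1,2\}$ and $\omega_B \ge c$, so $\rho_j \ge \tilde{\delta}_j$. If the $\min$ equals $k_B$ while $\abs*{\tilde{\calM}_j} = k_B$, then $\tilde{\delta}_j \le k_B = \rho_j$ and self-coverage again holds. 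All of these are routine verifications against the two formulas.

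The hard part will be the single residual sub-case $\abs*{\tilde{\calM}_j} = k_B + 1$ with $\tilde{\delta}_j = k_B + 1$ for some $j \ge 1$ (all $k_B+1$ workers of a full class chosen): here \eqref{eq:rhobigger} gives $\rho_j = \min(\omega_B + k_B - 1,\, k_B) = k_B = \tilde{\delta}_j - 1$, so the literal per-class bound falls short by exactly one, and this is the genuine obstacle. The route I would take is to observe that, because the $\tilde{\delta}_i$ are sorted non-increasingly with $\tilde{\delta}_i \le \abs*{\tilde{\calM}_i} \le k_B+1$, such a $j$ forces $\tilde{\delta}_0 = \dots = \tilde{\delta}_j = k_B+1$, i.e.\ $\tilde{\calM}_0,\dots,\tilde{\calM}_j$ are all full. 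Applying Claim \ref{claim:cyclicA} jointly to this saturated prefix (with $q=j+1 \le k_A-\omega_A+1$) yields at least $(\omega_A + j)k_B$ distinct unknowns against the $(j+1)(k_B+1)$ demanded, a surplus of $(\omega_A-1)k_B - (j+1)$; using $k_B \ge \omega_B+1$, $\omega_A \ge \omega_B$, and $j+1 \le s < \omega_A\omega_B$ one checks $(\omega_A-1)k_B \ge \omega_A\omega_B - 1 \ge s \ge j+1$, so the surplus is nonnegative and the head block's excess absorbs the unit shortfalls of the saturated non-head classes. I expect this prefix comparison — not the branch-by-branch checking — to carry the real content of the argument, and I would present it as the crux, since it is precisely where the naive local bound must be supplemented.
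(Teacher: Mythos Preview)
Your branch-by-branch verification for the cases with $\tilde{\delta}_j \le k_B$ is correct and more careful than the paper's own two-line argument, which just writes $\rho_0 \ge \omega_A(\omega_B+\tilde{\delta}_0-2)\ge\tilde{\delta}_0$ and $\rho_j \ge \tilde{\delta}_j+\omega_B-2\ge\tilde{\delta}_j$ without discussing the cap at $k_B$. You also correctly isolated the one obstructive configuration: $j\ge 1$, $\abs*{\tilde{\calM}_j}=k_B+1$, and $\tilde{\delta}_j=k_B+1$, where \eqref{eq:rhobigger} gives $\rho_j=k_B=\tilde{\delta}_j-1$.

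The gap is in what you do with that configuration. Your prefix argument via Claim~\ref{claim:cyclicA} establishes only the \emph{aggregate} inequality $\sum_{i\le j}(\text{unknowns})\ge\sum_{i\le j}\tilde{\delta}_i$; it does not, and cannot, establish the \emph{per-index} inequality $\rho_j\ge\tilde{\delta}_j$, which is what ``$\tilde{\calM}_j$ covers itself'' means. In that sub-case the individual inequality is simply false, so Claim~\ref{cl:cover} as stated cannot be proved in full generality. The paper's proof glosses over this too; the resolution is not to patch the claim but to observe where it is invoked: in Case~1 of the Lemma one has $\tilde{\delta}_0\le k_B$, hence every $\tilde{\delta}_j\le k_B$ and the bad sub-case never occurs, and in Case~2 the claim is only needed at indices $j\ge\alpha$ where again $\tilde{\delta}_j\le k_B$ by the definition of $\alpha$. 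The head indices in Case~2 are handled precisely by the kind of aggregate accounting you sketched, but the paper packages that separately as Claims~\ref{cl:excess} and~\ref{cl:difference} rather than forcing it through Claim~\ref{cl:cover}.
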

\begin{proof}
From \eqref{eq:rhozero}, we have $\rho_0 \geq \omega_A \times (\omega_B + \tilde{\delta}_0 - 2)  \geq \tilde{\delta_0}$, since $\omega_A, \omega_B \geq 2$. So, $\tilde{\calM}_0$ covers itself. Next from \eqref{eq:rhobigger}, for $1 \leq j \leq k_A - \omega_A$, we have $\rho_j \geq  \tilde{\delta}_j + \omega_B - 2 \geq \tilde{\delta}_j$. So, $\tilde{\calM}_j$ covers itself.
\end{proof} 
Now since \eqref{eq:provehallfull} leads to $\sum\limits_{i=0}^{k_A - \omega_A} \rho_i \; \geq \; \sum\limits_{i = 0}^{k_A - \omega_A} \tilde{\delta}_i + \sum\limits_{i = k_A - \omega_A + 1}^{k_A - 1} \tilde{\delta}_i$, 
 we first denote a set $\calV = \{ \tilde{\calM}_{k_A - \omega_A + 1}, \tilde{\calM}_{k_A - \omega_A + 2}, \dots, \tilde{\calM}_{k_A - 1}\}$ (thus, $\abs*{\calV} = \omega_A - 1$) and in order to satisfy \eqref{eq:provehallfull}, we always need to find an appropriate $\calU$ which can also cover  $\calV$ along with itself. To do so, we need to find those $\tilde{\calM}_i$'s where $\rho_i - \tilde{\delta}_i > 0$. In this proof, we define $\calU_{\lambda} = \{ \tilde{\calM}_0, \tilde{\calM}_1, \tilde{\calM}_2, \dots, \tilde{\calM}_{\lambda}\}$. 
 Now we consider the following two cases for each of which we show that \eqref{eq:provehallfull} is true.

{\it Case 1}: $\mathbf{1 \leq \tilde{\delta}_0 \leq k_B}$. If, $\tilde{\delta}_0 = 1$, using \eqref{eq:rhozero}, we have $\rho_0 = \omega_A \omega_B > \omega_A \tilde{\delta}_0$, since $\omega_B \geq 2$. Moreover, if $1 < \tilde{\delta}_0 \leq k_B$, using \eqref{eq:rhozero}, we have  $\rho_0 \geq \omega_A \tilde{\delta}_0$. Thus, when $1 \leq \tilde{\delta}_0 \leq k_B$, we have $\rho_0 \geq \omega_A \tilde{\delta}_0 \; = \; \tilde{\delta}_0 + (\omega_A - 1) \tilde{\delta}_0 \; \geq \; \tilde{\delta}_0 + \sum\limits_{i= k_A - \omega_A + 1}^{k_A - 1} \tilde{\delta}_i$,  since $\tilde{\delta}_i$'s are arranged in a non-increasing order. In this case, we can set $\calU = \calU_0 = \{ \tilde{\calM}_0 \}$ which covers $\calV$ along with itself. Now, since each of the other $\tilde{\calM}_i$'s (for $i \leq 1 \leq k_A - \omega_A$) covers itself (according to Claim \ref{cl:cover}), \eqref{eq:provehallfull} is true.

{\it Case 2}:  $\mathbf{\tilde{\delta}_0 = k_B + 1}$. Assume that $\tilde{\delta}_0 = \tilde{\delta}_1 = \dots = \tilde{\delta}_{\alpha - 1} = k_B + 1$. Now, there can be at most $s \leq \min (\omega_A \omega_B - 1, k_A)$ such $\tilde{\calM}_i$'s which have cardinality $k_B + 1$.
Thus, $\alpha$ is upper bounded by $\min(\omega_A \omega_B - 1, k_A)$. Now, before moving into details in this case, we state the following claims.

\begin{claim}
\label{cl:excess}
For $\alpha \geq 1$, if $\sum\limits_{i = k_A - \omega_A + 1}^{k_A - 1} \tilde{\delta}_i \leq (\omega_A - 1) k_B - \alpha$, then \eqref{eq:provehallfull} is true by setting $\calU = \calU_{\alpha-1}$. 
\end{claim}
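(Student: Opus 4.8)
The plan is to stay inside Case~2 of the surrounding argument, where $\tilde{\delta}_0 = \tilde{\delta}_1 = \dots = \tilde{\delta}_{\alpha - 1} = k_B + 1$, and to let the covering set $\calU = \calU_{\alpha - 1} = \{\tilde{\calM}_0, \dots, \tilde{\calM}_{\alpha-1}\}$ absorb the deficit coming from $\calV$. First I would pin down the values $\rho_0, \dots, \rho_{\alpha - 1}$ explicitly. Since $\tilde{\delta}_i = k_B + 1$ forces $\abs*{\tilde{\calM}_i} = k_B + 1$, the third branches of \eqref{eq:rhozero} and \eqref{eq:rhobigger} are the relevant ones, and because $\omega_B \geq 2$ we have $\min(\omega_B + k_B - 1,\, k_B) = k_B$. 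Hence $\rho_0 = \omega_A k_B$ and $\rho_i = k_B$ for $1 \leq i \leq \alpha - 1$, so that $\sum_{i=0}^{\alpha-1}\rho_i = \omega_A k_B + (\alpha - 1)k_B = (\omega_A - 1)k_B + \alpha k_B$.

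Next I would verify that $\calU_{\alpha-1}$ covers itself together with $\calV = \{\tilde{\calM}_{k_A - \omega_A + 1}, \dots, \tilde{\calM}_{k_A-1}\}$, i.e. that $\sum_{i=0}^{\alpha-1}\rho_i \geq \sum_{i=0}^{\alpha-1}\tilde{\delta}_i + \sum_{j:\,\tilde{\calM}_j \in \calV}\tilde{\delta}_j$. The first sum on the right equals $\alpha(k_B + 1)$. Substituting the computed value of $\sum_{i=0}^{\alpha-1}\rho_i$ and cancelling the common term $\alpha k_B$ from both sides, the inequality collapses to $\sum_{j:\,\tilde{\calM}_j \in \calV}\tilde{\delta}_j \leq (\omega_A - 1)k_B - \alpha$, which is exactly the hypothesis of the claim. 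Thus the hypothesis is precisely the statement that $\calU_{\alpha-1}$ covers itself and $\calV$.

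To close out \eqref{eq:provehallfull}, I would then invoke Claim~\ref{cl:cover}: each remaining class $\tilde{\calM}_i$ with $\alpha \leq i \leq k_A - \omega_A$ covers itself, so $\rho_i \geq \tilde{\delta}_i$ there. Adding these relations to the covering relation for $\calU_{\alpha-1}$ yields $\sum_{i=0}^{k_A - \omega_A}\rho_i \geq \sum_{i=0}^{k_A - 1}\tilde{\delta}_i$, which is \eqref{eq:provehallfull}, completing the claim in this case.

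I expect the genuine obstacle to be bookkeeping rather than analysis: everything above presumes $\alpha - 1 \leq k_A - \omega_A$, which is what makes \eqref{eq:rhobigger} applicable for every index $1 \leq i \leq \alpha - 1$ and, crucially, guarantees that $\calU_{\alpha-1}$ and $\calV$ are disjoint (as the definition of ``covers'' requires). I would secure this bound before doing the substitution, arguing from the global budget $\alpha(k_B + 1) \leq \sum_{i=0}^{\alpha-1}\tilde{\delta}_i \leq m \leq k_A k_B$ together with the hypothesis itself, since if an index of $\calV$ fell inside $\{0,\dots,\alpha-1\}$ it would contribute a full $k_B + 1$ to $\sum_{\calV}\tilde{\delta}_j$ and violate the bound $(\omega_A-1)k_B - \alpha$. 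The complementary regime $\alpha > k_A - \omega_A + 1$ is then excluded here and must be routed to a companion claim; once the disjointness is established, the remainder is the routine substitution already carried out.
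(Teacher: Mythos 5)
Your main line of argument is the paper's proof, step for step: the paper likewise reads off $\rho_0 = \omega_A k_B$ and $\rho_i = k_B$ for $1 \leq i \leq \alpha - 1$ from the third branches of \eqref{eq:rhozero} and \eqref{eq:rhobigger}, computes the surplus $\sum_{i=0}^{\alpha-1}\rho_i - \sum_{i=0}^{\alpha-1}\tilde{\delta}_i = (\omega_A - 1)k_B - \alpha$, and concludes that under the stated hypothesis $\calU_{\alpha-1}$ covers $\calV$ along with itself, with the appeal to Claim~\ref{cl:cover} for the indices $\alpha \leq i \leq k_A - \omega_A$ left implicit where you make it explicit. The only substantive content of the paper's proof that you omit is the verification that $(\omega_A - 1)k_B - \alpha \geq 0$ (via $k_B = \omega_B + \nu$, $\alpha \leq \omega_A\omega_B - 1$, $\omega_A \geq \omega_B \geq 2$), which is a non-vacuity remark rather than a logical prerequisite, so its omission is harmless.

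Where you diverge is the final bookkeeping paragraph, and there your proposed repair is wrong. It is not true that an index of $\calV$ falling inside $\{0,\dots,\alpha-1\}$ forces a violation of the hypothesis: take $k_A = 6$, $k_B = 4$, $\omega_A = 5$, $\omega_B = 2$, $s = \alpha = 3$, and $\tilde{\delta} = (5,5,5,3,3,2)$. Here $k_A - \omega_A = 1$, so the overlap index $2$ lies in $\calV$ with $\tilde{\delta}_2 = k_B + 1$, yet $\sum_{i = k_A - \omega_A + 1}^{k_A - 1}\tilde{\delta}_i = 5 + 3 + 3 + 2 = 13 = (\omega_A - 1)k_B - \alpha$, so the hypothesis holds (and $m = 23 \leq k_A k_B = 24$). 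The correct patch for the regime $\alpha - 1 > k_A - \omega_A$ is far simpler than routing it to a companion claim: in that regime every $i \leq k_A - \omega_A$ has $\tilde{\delta}_i = k_B + 1$, so $\sum_{i=0}^{k_A - \omega_A}\rho_i = \omega_A k_B + (k_A - \omega_A)k_B = k_A k_B \geq m$, and \eqref{eq:provehallfull} holds unconditionally, with no covering argument and indeed no hypothesis needed. Your instinct that disjointness of $\calU_{\alpha-1}$ and $\calV$ needed securing was sound --- the paper's own proof is silent on this regime as well --- but the mechanism you chose to secure it fails, and the trivial bound above is what actually closes it.
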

\begin{proof}
First we can say, 
\begin{align*}
\sum_{i = 0}^{\alpha - 1} \rho_i - \sum_{i = 0}^{\alpha - 1} \tilde{\delta}_i = \omega_A k_B + (\alpha - 1) k_B - \alpha (k_B + 1) = (\omega_A - 1 ) k_B - \alpha.
 \end{align*} Here, $\omega_B < k_B$, we assume $k_B = \omega_B + \nu$ where $\nu \geq 1$. Now $\alpha$ is upper bounded by $\omega_A \omega_B - 1$, so $(\omega_A - 1 ) k_B - \alpha  \geq (\omega_A - 1 ) k_B - (\omega_A \omega_B - 1) = \omega_A \nu - (\omega_B + \nu - 1)= (\omega_A - \omega_B) + (\nu - 1)(\omega_A - 1)$ 
is non-negative since according to Alg. \ref{Alg:New_matmat}, we have $\omega_A \geq \omega_B \geq 2$. Thus, if $\sum\limits_{i = k_A - \omega_A + 1}^{k_A - 1}\tilde{\delta}_i \leq (\omega_A - 1 ) k_B - \alpha$, then we set $\calU = \calU_{\alpha-1}$ which covers $\calV$ along with itself, and we are done.
\end{proof}

\begin{remark}
While we cannot assume $\omega_A \geq \omega_B$ without loss of generality, note that $\omega_A$ and $\omega_B$ are design parameters. The constraint that we have on the number of stragglers in terms of $\omega_A$ and $\omega_B$ is that $s \leq \omega_A \omega_b - 1$. Since the upper bound is symmetric in terms of $\omega_A$ and $\omega_B$, we can always set $\omega_A \geq \omega_B$ in our design to be resilient to the same number of stragglers.
\end{remark}


\begin{claim}
\label{cl:difference}
Assume that $\alpha \geq 1$ and let us define $\kappa$ as the minimum $i$ such that $\tilde{\delta}_i \leq k_B - 1$. (a) If $\kappa > k_A - \omega_A$, then $\calU = \calU_{\alpha- 1}$ will cover $\calV$ along with itself. (b) If $\kappa \leq k_A - \omega_A$ and $\tilde{\delta}_i \geq k_B - \omega_B + 2 \;$ for all $i \leq k_A - \omega_A$,  then $\calU = \calU_{\alpha- 1} \cup \tilde{\calU}_{\kappa}$ will cover $\calV$ along with itself, where $ \, \tilde{\calU}_{\kappa} \coloneqq \left\lbrace \tilde{\calM}_{\kappa},  \tilde{\calM}_{\kappa+1}, \dots,  \tilde{\calM}_{k_A - \omega_A} \right\rbrace$. 
\end{claim}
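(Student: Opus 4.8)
The plan is to recast ``$\calU$ covers $\calV$ along with itself'' as a statement about the per-class surplus $\sigma_i = \rho_i - \tilde\delta_i$: this holds exactly when $\sum_{i:\tilde\calM_i\in\calU}\sigma_i \ge \sum_{j:\tilde\calM_j\in\calV}\tilde\delta_j$. The only global input I will use is the budget $\sum_{i=0}^{k_A-1}\tilde\delta_i = m \le k_A k_B$, together with the ordering $\tilde\delta_0\ge\tilde\delta_1\ge\cdots$ and the bookkeeping that indices $0,\dots,\alpha-1$ equal $k_B+1$, indices $\alpha,\dots,\kappa-1$ equal $k_B$, and $\kappa$ is the first index with $\tilde\delta_\kappa\le k_B-1$. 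From the proof of Claim \ref{cl:excess} I already know the surplus of $\calU_{\alpha-1}$ equals $(\omega_A-1)k_B-\alpha$, so in both parts everything reduces to bounding $\sum_{j\in\calV}\tilde\delta_j$ from above.

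For part (a), the hypothesis $\kappa>k_A-\omega_A$ forces $\tilde\delta_i\ge k_B$ for every $i\le k_A-\omega_A$; splitting these $k_A-\omega_A+1$ indices into the $\alpha$ equal to $k_B+1$ and the rest equal to $k_B$ gives $\sum_{i=0}^{k_A-\omega_A}\tilde\delta_i=(k_A-\omega_A+1)k_B+\alpha$. Subtracting from the budget, $\sum_{j\in\calV}\tilde\delta_j\le k_Ak_B-(k_A-\omega_A+1)k_B-\alpha=(\omega_A-1)k_B-\alpha$, which is precisely the hypothesis of Claim \ref{cl:excess}; applying that claim with $\calU=\calU_{\alpha-1}$ closes part (a).

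For part (b) the decisive observation is that the extra hypothesis $\tilde\delta_i\ge k_B-\omega_B+2$ makes $\rho_i$ saturate: for each $\kappa\le i\le k_A-\omega_A$ we get $\omega_B+\tilde\delta_i-2\ge k_B$, so \emph{both} branches of \eqref{eq:rhobigger} yield $\rho_i=k_B$ and hence the \emph{exact} surplus $\sigma_i=k_B-\tilde\delta_i$. Writing $N=k_A-\omega_A-\kappa+1$ for $|\tilde\calU_\kappa|$, the surplus of $\calU=\calU_{\alpha-1}\cup\tilde\calU_\kappa$ is $(\omega_A-1)k_B-\alpha+Nk_B-\sum_{i=\kappa}^{k_A-\omega_A}\tilde\delta_i$, while the budget gives $\sum_{j\in\calV}\tilde\delta_j\le k_Ak_B-\kappa k_B-\alpha-\sum_{i=\kappa}^{k_A-\omega_A}\tilde\delta_i$. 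The terms $\alpha$ and $\sum_{i=\kappa}^{k_A-\omega_A}\tilde\delta_i$ cancel between the two sides, collapsing the target inequality to $(\omega_A-1+N)k_B\ge(k_A-\kappa)k_B$, i.e. $\omega_A-1+N\ge k_A-\kappa$, which holds with equality by the definition of $N$.

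The step I expect to need the most care is exactly this cancellation: one must resist replacing $\sigma_i$ on $\tilde\calU_\kappa$ by a crude constant such as $\sigma_i\ge1$, since that discards the $\omega_B$-dependence and the resulting inequality fails once $\omega_B\ge4$; keeping $\sigma_i=k_B-\tilde\delta_i$ exact is what lets the $\tilde\delta$-mass telescope against the budget estimate. I would then verify the minor boundary conditions — that $\tilde\delta_i\ge2$ on $\tilde\calU_\kappa$ (which follows from $k_B-\omega_B+2\ge3$ since $\omega_B<k_B$, so \eqref{eq:rhobigger} applies) and that $\calU_{\alpha-1}$ is disjoint from $\calV$ in part (a), i.e. $\alpha\le k_A-\omega_A+1$, treating any residual large-$\alpha$ configuration by the same budget bookkeeping. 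Finally I would note the two parts are exhaustive, observing that when $\omega_B=2$ the hypothesis of (b) is incompatible with $\tilde\delta_\kappa\le k_B-1$, so every configuration falls under (a), consistent with the separate $\omega_A=\omega_B=2$ argument of Corollary \ref{cor:toy_example}.
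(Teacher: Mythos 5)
Your proposal is correct and follows essentially the paper's own argument: part (a) is the identical budget computation $\sum_{i=0}^{k_A-1}\tilde{\delta}_i \le k_A k_B$ feeding Claim \ref{cl:excess}, and part (b) hinges on the same key observation that the hypothesis $\tilde{\delta}_i \ge k_B-\omega_B+2$ saturates both branches of \eqref{eq:rhobigger} to $\rho_i = k_B$ on $\tilde{\calU}_{\kappa}$, followed by the same bookkeeping against the budget. The only difference is organizational: your exact-surplus telescoping (where the $\alpha$ and $\sum_{i=\kappa}^{k_A-\omega_A}\tilde{\delta}_i$ terms cancel, reducing the target to $(\omega_A-1+N)k_B \ge (k_A-\kappa)k_B$ with equality) collapses the paper's sub-case analysis --- $\omega_B=2$ versus $\omega_B\ge 3$, and $\lambda_1\ge\alpha$ versus $\lambda_1<\alpha$ via the parametrization $\tilde{\delta}_i = k_B-\omega_B+2+\sigma_i$ --- into a single computation, and your closing vacuity remark for $\omega_B=2$ matches the paper's observation that $\tilde{\calU}_{\kappa}=\varnothing$ in that case.
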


\begin{proof}
(a) If $\kappa > k_A - \omega_A$, then we have $\tilde{\delta}_i = k_B + 1$, when $0 \leq i \leq \alpha - 1$ and $\tilde{\delta}_i = k_B$ when $\alpha \leq i \leq k_A - \omega_A$. Thus
\begin{align*}
        \sum\limits_{i = 0}^{k_A - 1} \tilde{\delta}_i &= \sum\limits_{i = 0}^{\alpha - 1} \tilde{\delta}_i + \sum\limits_{i = \alpha}^{k_A - \omega_A} \tilde{\delta}_i + \sum\limits_{i = k_A - \omega_A + 1}^{k_A - 1} \tilde{\delta}_i  = \alpha (k_B + 1) + (k_A - \omega_A - \alpha + 1) k_B + \sum\limits_{i = k_A - \omega_A + 1}^{k_A - 1} \tilde{\delta}_i  \\
        & =  \sum\limits_{i = k_A - \omega_A + 1}^{k_A - 1} \tilde{\delta}_i + k_A k_B + \alpha  - (\omega_A - 1) k_B \leq k_A k_B , 
\end{align*}
where the upper bound holds since $m = \sum\limits_{i = 0}^{k_A - 1} \tilde{\delta}_i \leq k_A k_B$. Hence, $\sum\limits_{i = k_A - \omega_A + 1}^{k_A - 1} \tilde{\delta}_i \leq (\omega_A - 1) k_B - \alpha$, and then according to Claim \ref{cl:excess}, we are done by setting $\calU = \calU_{\alpha- 1}$.

(b) If $\omega_B = 2$, then $ \tilde{\delta}_i \geq k_B - \omega_B + 2 = k_B$ for all $i \leq k_A - \omega_A$. Thus, $\kappa > k_A - \omega_A$ (i.e., $\tilde{\calU}_{\kappa} = \varnothing$), hence according to part (a) of this claim, $\calU = \calU_{\alpha- 1}$ will cover $\calV$ along with itself. 

Now we consider the scenario when $\omega_B \geq 3$. In this scenario, if $\kappa > k_A - \omega_A$, then according to part (a) of this claim, $\calU = \calU_{\alpha- 1}$ will cover $\calV$ along with itself. Now, we assume that $\kappa \leq k_A - \omega_A$. We assume that $\tilde{\delta}_i = k_B - \omega_B + 2 + \sigma_i$ where $0 \leq \sigma_i \leq \omega_B - 3$ for all $\tilde{\calM}_i \in \tilde{\calU}_{\kappa}$, and thus, according to \eqref{eq:rhobigger}, we have $\rho_i \geq  \textrm{min} \left( \omega_B + \tilde{\delta}_i - 2, k_B\right)  = k_B$. Assume that $\sum\limits_{i = k_A - \omega_A + 1}^{k_A - 1} \tilde{\delta}_i = k_B(\omega_A - 1) - \lambda_1$. Now if $\lambda_1 \geq \alpha$, then $\sum\limits_{i = k_A - \omega_A + 1}^{k_A - 1} \tilde{\delta}_i \leq k_B(\omega_A - 1) - \alpha$ and we are done by setting $\calU = \calU_{\alpha-1} \subset \left(\calU_{\alpha- 1} \cup \tilde{\calU}_{\kappa}\right)$ according to Claim \ref{cl:excess}. Now, if $\lambda_1 < \alpha$, we have
\begin{align*}
   & \sum\limits_{i = 0}^{k_A - 1} \tilde{\delta}_i = \sum\limits_{i = 0}^{\alpha - 1} \tilde{\delta}_i + \sum\limits_{i = \alpha}^{\kappa - 1} \tilde{\delta}_i + \sum\limits_{i = \kappa}^{k_A - \omega_A} \tilde{\delta}_i + \sum\limits_{i = k_A - \omega_A + 1}^{k_A - 1} \tilde{\delta}_i = \alpha (k_B + 1) + (\kappa - \alpha) k_B  \\ + & \sum\limits_{i = \kappa}^{k_A - \omega_A}  \left( k_B - \omega_B + 2 + \sigma_i \right)+ k_B(\omega_A - 1) - \lambda_1 = k_A k_B + \alpha - \left\lbrace\sum_{i = \kappa}^{k_A - \omega_A} \left( \omega_B - 2 - \sigma_i \right) + \lambda_1 \right\rbrace ,
\end{align*} which is upper bounded by $k_A k_B$. It indicates that $\alpha \leq \left\lbrace\sum_{i = \kappa}^{k_A - \omega_A} \left( \omega_B - 2 - \sigma_i \right) + \lambda_1 \right\rbrace$. Now, 
\begin{align*}
    \sum\limits_{i = 0}^{\alpha - 1} \rho_i +  \sum\limits_{i = \kappa}^{k_A - \omega_A} \rho_i & = \omega_A   k_B + (\alpha-1) k_B + (k_A - \omega_A - \kappa + 1) k_B = (\alpha + k_A - \kappa) k_B\; \; \; \text{and} \\
     \sum\limits_{i = 0}^{\alpha - 1} \tilde{\delta}_i +  \sum\limits_{i = \kappa}^{k_A - 1 } \tilde{\delta_i} & \leq \alpha (k_B + 1) + \sum\limits_{i = \kappa}^{k_A - \omega_A} \left( k_B - \omega_B + 2 + \sigma_i\right)+ k_B(\omega_A - 1) - \lambda_1 \\
     & = (\alpha + k_A - \kappa) k_B + \alpha - \left\lbrace \lambda_1 + \sum_{i = \kappa}^{k_A - \omega_A} \left(\omega_B - 2 - \sigma_i\right)\right\rbrace \leq (\alpha + k_A - \kappa) k_B.
\end{align*} Thus we are done with part (b) by setting $\calU = \calU_{\alpha- 1} \cup \tilde{\calU}_{\kappa}$.
\end{proof}

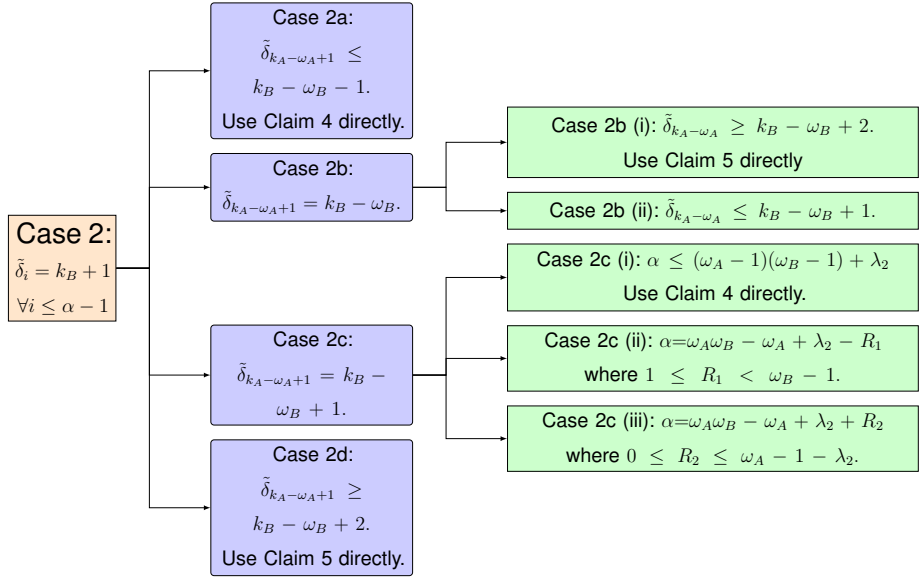
\begin{figure*}
    \centering
\resizebox{0.75\linewidth}{!}{    
\tikzset{
    basic/.style  = {draw, text width=2cm, fill=orange!20, align=center, font=\sffamily, rectangle},
    root/.style   = {basic, rounded corners=2pt, thin, align=center, fill=green!30},
    onode/.style = {basic, thin, rounded corners=2pt, align=center, fill=green!60,text width=3cm,},
    tnode/.style = {basic, thin, align=left, fill=green!20, text width=20em, align=center},
    xnode/.style = {basic, thin, rounded corners=2pt, align=center, fill=blue!20,text width=4cm,},
    wnode/.style = {basic, thin, align=left, fill=pink!10!blue!80!red!10, text width=6.5em},
    edge from parent/.style={draw=black, edge from parent fork right}

}

\begin{forest} for tree={
    calign=center,
    s sep= 0.3 cm,
    grow=east,
    growth parent anchor=west,
    parent anchor=east,
    child anchor=west,
    edge path={\noexpand\path[\forestoption{edge},->, >={latex}] 
         (!u.parent anchor) -- +(20pt,0pt) |-  (.child anchor) 
         \forestoption{edge label};}
}
[{{\Large Case 2:} \\ $\tilde{\delta}_i = k_B + 1$\\ $\forall  i \leq \alpha-1$} \\ , basic,  l sep=20mm,
    [Case 2d: \\ $\tilde{\delta}_{k_A - \omega_A + 1} \geq k_B - \omega_B + 2.$\\ Use Claim \ref{cl:difference} directly., xnode,  l sep=20mm]
    [Case 2c: \\ $\tilde{\delta}_{k_A - \omega_A + 1} \; {=} \; k_B - \omega_B + 1.$, xnode,  l sep=20mm,
        [Case 2c (iii): $\alpha {=} \omega_A \omega_B - \omega_A + \lambda_2 + R_2$ \\ where $0 \leq R_2 \leq \omega_A - 1 - \lambda_2.$, tnode]
        [Case 2c (ii): $\alpha {=} \omega_A \omega_B - \omega_A + \lambda_2 - R_1$ \\ where $1 \leq R_1 < \omega_B - 1.$, tnode]
        [Case 2c (i): $\alpha \leq (\omega_A - 1) (\omega_B - 1) + \lambda_2$ \\ Use Claim \ref{cl:excess} directly., tnode] ]
    [Case 2b: \\ $\tilde{\delta}_{k_A - \omega_A + 1} \; {=} \; k_B - \omega_B.$, xnode,  l sep=20mm,
       [Case 2b (ii): $\tilde{\delta}_{k_A - \omega_A} \leq k_B - \omega_B + 1.$, tnode] 
        [Case 2b (i): $\tilde{\delta}_{k_A - \omega_A} \geq k_B - \omega_B + 2.$ \\ Use Claim \ref{cl:difference} directly, tnode]] 
    [Case 2a: \\ $\tilde{\delta}_{k_A - \omega_A + 1} \leq k_B - \omega_B - 1.$\\ Use Claim \ref{cl:excess} directly., xnode,  l sep=20mm ]
]
\end{forest}
}
    \caption{An overview of Case 2}
    \label{fig:case2}
    \vspace{-0.8 cm}
\end{figure*}

Our main idea is to perform an exhaustive case analysis on the value of $\tilde{\delta}_{k_A - \omega_A + 1}$, an overview of which is depicted in Fig. \ref{fig:case2}. 
In all cases, we find the appropriate $\calU$ such that \eqref{eq:provehallfull} holds.

{\it Case 2a}: $\tilde{\delta}_{k_A - \omega_A + 1} \leq k_B - \omega_B - 1$. Now, $\tilde{\delta}_i$'s are in non-increasing order, thus, in this case
\begin{align*}
\sum\limits_{i = k_A - \omega_A + 1}^{k_A - 1} \tilde{\delta}_i \leq (k_B - \omega_B - 1) (\omega_A - 1) & = k_B(\omega_A - 1) - ( \omega_B + 1)(\omega_A -1) \leq k_B(\omega_A - 1) - \alpha ,
\end{align*} 
since $\omega_A \geq \omega_B$ and $\alpha \leq \omega_A \omega_B - 1$. Thus, according to Claim \ref{cl:excess}, we are done by setting $\calU = \calU_{\alpha - 1}$.

{\it Case 2b}: $\tilde{\delta}_{k_A - \omega_A + 1} = k_B - \omega_B $. To prove \eqref{eq:provehallfull}, here we consider the following two subcases.

{\it Case 2b (i)}: $\tilde{\delta}_{k_A - \omega_A} \geq k_B - \omega_B + 2$. In this scenario, if $\kappa > k_A - \omega_A$, then we are done using  Claim \ref{cl:difference}(a), and, if $\kappa \leq k_A - \omega_A$, then we are done using  Claim \ref{cl:difference}(b) since $\tilde{\delta}_i \geq k_B - \omega_B + 2$ for all $i \leq k_A - \omega_A$. We can find $\kappa$ in Claim \ref{cl:difference} as the minimum value of $i$ when $\tilde{\delta}_i < k_B$.

{\it Case 2b (ii)}: $\tilde{\delta}_{k_A - \omega_A} \leq k_B - \omega_B + 1$. Since $\tilde{\delta}_{k_A - \omega_A + 1} = k_B - \omega_B $, then $\sum\limits_{i = k_A - \omega_A + 1}^{k_A - 1} \tilde{\delta}_i \leq (k_B - \omega_B) (\omega_A - 1)$. Now in this scenario, $\sum\limits_{i = 0}^{\alpha - 1} \rho_i +  \rho_{k_A - \omega_A} = \omega_A k_B + (\alpha-1) k_B + \rho_{k_A - \omega_A}$, and, 
\begin{align*}
\sum\limits_{i = 0}^{\alpha - 1} \tilde{\delta}_i +  \sum\limits_{i = k_A - \omega_A}^{k_A - 1 } \tilde{\delta_i} & \leq \alpha (k_B + 1) + \tilde{\delta}_{k_A - \omega_A} +  (k_B - \omega_B) (\omega_A - 1)  \\ & =  \omega_A k_B + (\alpha-1) k_B + \tilde{\delta}_{k_A - \omega_A} +  \alpha - \omega_A \omega_B + \omega_B.
\end{align*} If $\abs*{\tilde{\calM}_{k_A - \omega_A}} = k_B$, then according to \eqref{eq:rhobigger}, $\rho_{k_A - \omega_A} - \tilde{\delta}_{k_A - \omega_A} \geq \omega_B - 1$. Since $\alpha \leq \omega_A \omega_B - 1$, we are done by setting $\calU = \calU_{\alpha- 1} \cup \{ \tilde{\calM}_{k_A - \omega_A} \}$ to cover $\calV$ along with itself. Otherwise, if $\abs*{\tilde{\calM}_{k_A - \omega_A}} = k_B + 1$, we have $\rho_{k_A - \omega_A} - \tilde{\delta}_{k_A - \omega_A} \geq \omega_B - 2$. However, in that case $\alpha \leq \omega_A \omega_B - 2$ since $\abs*{\tilde{\calM}_{k_A - \omega_A}}$ is already set as $k_B + 1$. Thus, we can again set $\calU = \calU_{\alpha- 1} \cup \{ \tilde{\calM}_{k_A - \omega_A} \}$.  

{\it Case 2c}: $\tilde{\delta}_{k_A - \omega_A + 1} = k_B - \omega_B + 1$. In this case, if $\tilde{\delta}_{k_A - \omega_A} \geq k_B - \omega_B + 2$, according to Claim \ref{cl:difference}(b), we are done by setting $\calU = \calU_{\alpha- 1} \cup \tilde{\calU}_{\kappa}$. We can find $\kappa$ in as the minimum value of $i$ when $\tilde{\delta}_i < k_B$. Now, we consider the only remaining scenario where $\tilde{\delta}_{k_A - \omega_A} = k_B - \omega_B + 1$. Since, $\tilde{\delta}_{k_A - \omega_A + 1} = k_B - \omega_B + 1$, we assume $\sum_{i = k_A - \omega_A + 1}^{k_A - 1} \tilde{\delta}_i = (\omega_A - 1) (k_B  - \omega_B + 1) - \lambda_2 = (\omega_A - 1) k_B - (\omega_A - 1)(\omega_B - 1) - \lambda_2$. Now, we consider the following three subcases.

{\it Case 2c (i)}: $\alpha \leq (\omega_A - 1) (\omega_B - 1) + \lambda_2$. Here, we are done by setting $\calU = \calU_{\alpha - 1}$ (Claim \ref{cl:excess}).  

{\it Case 2c (ii)}: $\alpha = \omega_A \omega_B - \omega_A + \lambda_2 - R_1$, where $1 \leq R_1 < \omega_B - 1$. Here, $\sum\limits_{i = 0}^{\alpha - 1} \rho_i +  \rho_{k_A - \omega_A}  = \omega_A k_B + (\alpha-1) k_B + \rho_{k_A - \omega_A}$, and, 
\begin{align*}
 \sum\limits_{i = 0}^{\alpha - 1} \tilde{\delta}_i +  \sum\limits_{i = k_A - \omega_A}^{k_A - 1 } \tilde{\delta_i} & = \alpha (k_B + 1) + \tilde{\delta}_{k_A - \omega_A} +  (k_B - \omega_B + 1) (\omega_A - 1) - \lambda_2 \\ & =  \omega_A k_B + (\alpha-1) k_B + \tilde{\delta}_{k_A - \omega_A} +  \alpha - \left\lbrace (\omega_A - 1) (\omega_B - 1) + \lambda_2 \right\rbrace .  
\end{align*} Since, $\tilde{\delta}_{k_A - \omega_A} = k_B - \omega_B + 1$, according to \eqref{eq:rhobigger}, $\rho_{k_A - \omega_A} - \tilde{\delta}_{k_A - \omega_A} \geq \omega_B - 2$. In addition, $\alpha - \left[ (\omega_A - 1) (\omega_B - 1) + \lambda_2\right] = \omega_B - 1 - R_1$. Thus, we are done by setting $\calU = \calU_{\alpha - 1} \cup \{ \tilde{\calM}_{k_A - \omega_A} \}$. 

{\it Case 2c (iii)}: $\alpha = \omega_A \omega_B - \omega_A + \lambda_2 + R_2$, where $0 \leq R_2 \leq \omega_A - 1 - \lambda_2$. Since we know, $\tilde{\delta}_{k_A - \omega_A} = k_B - \omega_B + 1$, we assume that $\tilde{\delta}_{k_A - \omega_A} = \tilde{\delta}_{k_A - \omega_A - 1} = \dots = \tilde{\delta}_{k_A - \omega_A - \beta + 1} = k_B - \omega_B + 1$, where $\beta \geq 1$. We define the set ${\calH} = \left\lbrace \tilde{\calM}_{k_A - \omega_A}, \tilde{\calM}_{k_A - \omega_A - 1}, \dots, \tilde{\calM}_{k_A - \omega_A - \beta + 1}  \right\rbrace$. 

Now, we consider two scenarios depending on the value of $\omega_B$. If $\omega_B \geq 3$, we assume that $\tilde{\delta}_i = k_B - \omega_B + 2 + \bar{\sigma}_i$ with $0 \leq \bar{\sigma}_i \leq \omega_B - 3$ where $\kappa \leq i \leq k_A -\omega_A - \beta$. Thus, in this scenario,
\begin{align}
\label{eq:2crho_3}
 \sum\limits_{i = 0}^{k_A - \omega_A} & \rho_i  =  \sum\limits_{i = 0}^{\alpha - 1} \rho_i +  \sum\limits_{i = \alpha}^{\kappa - 1} \rho_i +\sum\limits_{i = \kappa}^{k_A - \omega_A - \beta} \rho_i +  \sum\limits_{i = k_A - \omega_A - \beta + 1}^{k_A - \omega_A} \rho_i \nonumber \\  & = \alpha k_B + k_B (\omega_A - 1) + (\kappa - \alpha) k_B + (k_A - \omega_A - \beta - \kappa + 1) k_B + \sum\limits_{i = k_A - \omega_A - \beta + 1}^{k_A - \omega_A} \rho_i  
  \end{align} In the remaining scenario where $\omega_B = 2$, we have $\tilde{\delta}_{k_A - \omega_A - \beta + 1} = k_B - 1$, thus $\tilde{\delta}_i = k_B$, when $\alpha \leq i \leq k_A - \omega_A - \beta$. Now, $ \sum\limits_{i = 0}^{k_A - \omega_A} \rho_i =  \sum\limits_{i = 0}^{\alpha - 1} \rho_i +  \sum\limits_{i = \alpha}^{k_A - \omega_A - \beta} \rho_i +  \sum\limits_{i = k_A - \omega_A - \beta + 1}^{k_A - \omega_A} \rho_i$, so,
\begin{align}
\label{eq:2crho_2}
  \sum\limits_{i = 0}^{k_A - \omega_A} \rho_i = \alpha k_B + k_B (\omega_A - 1) + (k_A - \omega_A - \beta -  \alpha + 1) k_B + \sum\limits_{i = k_A - \omega_A - \beta + 1}^{k_A - \omega_A} \rho_i  
  \end{align} 
  \vspace{-0.5 cm}
\begin{align}
\label{eq:2crho}
\hspace*{-2cm} \textrm{Thus, from \eqref{eq:2crho_3} and \eqref{eq:2crho_2}, for any $\omega_B \geq 2$, } \; \; \sum\limits_{i = 0}^{k_A - \omega_A} \rho_i = (k_A -\beta) k_B + \sum\limits_{i = k_A - \omega_A - \beta + 1}^{k_A - \omega_A} \rho_i .
  \end{align}

Now, since $\lambda_2 + R_2 \leq \omega_A - 1$  and $\sum_{i = k_A - \omega_A + 1}^{k_A - 1} \tilde{\delta}_i = (\omega_A - 1) (k_B  - \omega_B + 1) - \lambda_2$, the number of $\tilde{\calM}_i$'s in $\calV$ having $\tilde{\delta}_i$ to be less than $k_B - \omega_B + 1$ is upper bounded by $\lambda_2$. 
Thus, the number of $\tilde{\calM}_i$'s left in $\calV$ with $\tilde{\delta}_i = k_B - \omega_B + 1$ is lower bounder by $\omega_A - 1 - \lambda_2$, we denote the set of such $\tilde{\calM}_i$'s as $\tilde{\calV}$. 
Besides, we have $\beta \geq 1$ more $\tilde{\delta}_i$'s with the same value, $k_B - \omega_B + 1$. Thus, 
\begin{align*}
    \tilde{\delta}_i = k_B - \omega_B + 1 \; \; \textrm{when} \; \; i = \undermat{\tilde{\calM}_i \in {\calH}}{k_A - \omega_A - \beta + 1, \dots, k_A - \omega_A}, \undermat{\tilde{\calM}_i \in \tilde{\calV}}{k_A - \omega_A + 1, \dots k_A - 1 - \lambda_2} .
\end{align*} 
\\

\vspace{-0.4 in} \noindent But, the total number of $\tilde{\calM}_i$'s with cardinality $k_B + 1$ is upper bounded by $\omega_A \omega_B - 1$. Since we have already taken account $\alpha = \omega_A \omega_B - \omega_A + \lambda_2 + R_2$ of such $\tilde{\calM}_i$'s, we have at most $\omega_A - 1 - \lambda_2 - R_2$ of $\tilde{\calM}_i$'s left which belong to $\tilde{\calU}_{\kappa} \cup \calV$. Thus,
we can have at most $\omega_A - 1 - \lambda_2 - R_2$ of such $\tilde{\calM}_i$'s in ${\calH} \cup \tilde{\calV}$, since ${\calH} \subseteq \tilde{\calU}_{\kappa}$ and $\tilde{\calV} \subseteq \calV$. 

Now, since $\tilde{\delta}_i = k_B -\omega_B + 1$ when $\tilde{\calM}_i \in {\calH} \cup \tilde{\calV}$, and $\abs*{\tilde{\calV}} \geq \omega_A - 1 - \lambda_2 - R_2$, according to our arrangement of $\tilde{\calM}_i$'s (in Sec. \ref{subsec:rearrange}(ii)), all such $\tilde{\calM}_i$'s (i.e., $|\tilde{\calM}_i| = k_B + 1$ and $\tilde{\delta}_i = k_B - \omega_B + 1$) belong to $\tilde{\calV}$. Therefore, $\abs*{\tilde{\calM}_{i}} = k_B$ when $\tilde{\calM}_i \in \calH$ and $\rho_{i} = \left( \omega_B + \tilde{\delta}_i - 1, k_B\right) = k_B$, since $\tilde{\delta}_i = k_B - \omega_B + 1$. Thus we are done using \eqref{eq:2crho}, since $\sum\limits_{i = 0}^{k_A - \omega_A} \rho_i  = k_A k_B \geq \sum\limits_{i = 0}^{k_A - 1} \tilde{\delta}_i $.

{\it Case 2d}: $\tilde{\delta}_{k_A - \omega_A + 1} \geq k_B - \omega_B + 2$. 
In this case, if $\kappa > k_A - \omega_A$, then we are done using  Claim \ref{cl:difference}(a), and, if $\kappa \leq k_A - \omega_A$, then we are done using  Claim \ref{cl:difference}(b) since $\tilde{\delta}_i \geq k_B - \omega_B + 2$ for all $i \leq k_A - \omega_A$. We can find $\kappa$ (from Claim \ref{cl:difference}) as the minimum value of $i$ when $\tilde{\delta}_i < k_B$.
\end{proof}

\ifCLASSOPTIONcaptionsoff
  \newpage
\fi

\bibliographystyle{IEEEtran}
\bibliography{citations}
\end{document}